\documentclass[a4paper,titlepage]{article}
\usepackage[utf8]{inputenc}
\usepackage[T1]{fontenc}
\usepackage{geometry}
\usepackage{float}
\usepackage{subcaption}
\usepackage{dsfont}
\usepackage{bigints}
\usepackage{pdflscape}
\usepackage{pdfpages}
\geometry{a4paper,left=35mm,right=35mm,top=4cm,bottom=4cm}
\usepackage{graphicx}
\usepackage{color, colortbl}
\usepackage{amsmath,amsfonts,amssymb,amsthm}
\usepackage{multirow}
\usepackage{tabu}
\usepackage{xcolor}
\numberwithin{equation}{section} 
\usepackage[nodisplayskipstretch]{setspace} 
\setstretch{1.5}
\newtheorem{proposition}{Proposition}
\newtheorem{lemma}{Lemma}
\newtheorem{problem}{Problem}
\newtheorem{modification}{Modification}
\newtheorem{result}{Result}
 \usepackage{parskip}
 \usepackage{wrapfig}
 \usepackage{rotating}
 \usepackage{listings} \lstset{language = R, basicstyle = \ttdefault\footnotesize}


\begin{document}
\begin{titlepage}

\begin{center}
	\includegraphics[scale=1.5]{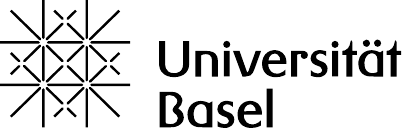}
\end{center}
\vspace{2.5cm}
\begin{center}                     
        {\Large\scshape Master's Thesis}\\*[5mm]
				{\Large\scshape M.Sc. Course ``Actuarial Science"}\\*[5mm]
        {\bf\Large\scshape\, \,       Retirement wealth under fixed limits:\newline the optimal strategy for exponential utility}\\*[12mm]         
\end{center}  
\vspace{4,5cm}
\begin{tabbing}
       submitted by:		\hspace{6.5cm}\=			supervised by:\\*[2mm]
       \bf{Lena Schütte} 								\> \bf{Dr. Catherine Donnelly}  \\*[2mm]		
        Matriculation No.: 08-933-848 		\> \bf{Dr. Michael Schmutz}  \\*[2mm]	
       \\ Email: lena.schuette@stud.unibas.ch									\> Submission date, place: \\*[2mm]
        \textbf{ }		   				\> \textbf{30.01.2017, Edinburgh}\\*[2mm]
				\\*[2mm]
        \\*[2mm]
\end{tabbing}
\end{titlepage}
\newpage

\definecolor{LightMintGreen}{rgb}{0.961,1,0.961}
\definecolor{Lavender}{rgb}{0.9812,0.97,1}
\definecolor{OldLace}{rgb}{0.99916,0.99078,0.931961}
\definecolor{LightCyan}{rgb}{0.93999,1,1}
\definecolor{LightGrey}{rgb}{0.93,0.93,0.93}
\definecolor{DarkBlue}{rgb}{0,0,0.4}

\newpage
\thispagestyle{empty}
\pagenumbering{Roman}
\setcounter{page}{1}

\newpage
\section*{Acknowledgements}

All my gratitude and thanks go to Dr.Catherine Donnelly for accepting me as a visiting scholar at Heriot-Watt University in Edinburgh and without whom this thesis would not have been possible. Her valuable advice and constant guidance as well as the friendly support in all matters related to my stay here, have made the last four months a very exciting and enriching time, both scientifically and personally. \newline
I would also like to thank the Department of Actuarial Mathematics and Statistics at Heriot-Watt for providing great facilities and my collegues from the office for creating a fun and motivating atmosphere. \newline
Further I would like to thank Dr. Michael Schmutz for agreeing to be my supervisor from the part of the University of Basel, Jolanda Bucher for assisting with the administrative procedures and Matthias Kohlbrenner for reviewing the paper.\newline 
A grant from the Swiss-European Mobility Progamme via the Mobility Office of the University of Basel is also gratefully acknowledged.

\newpage

\tableofcontents
\newpage

\addcontentsline{toc}{section}{List of Figures}
\listoffigures
\newpage

\addcontentsline{toc}{section}{List of Tables}
\listoftables
\newpage


\section*{List of Symbols}
\addcontentsline{toc}{section}{List of Symbols}
\renewcommand\arraystretch{1.3}
\begin{tabular}{p{4cm} l }

$X_0$ &initial wealth (independent of strategy) \\
$\hat\pi$ &	 optimal unconstrained strategy \\  
$X^{\hat{\pi}}_t$&wealth process under $\hat\pi$ \\
$\hat{\pi}_m$		& optimal strategy modified by restriction on investment\\
$X^{\hat{\pi}_m}_t$& wealth process under $\hat{\pi}_m$\\
 \\
$K_l$					 &lower bound for terminal wealth \\
$\hat{\pi}_l$ & optimal strategy under a lower bound for terminal wealth \\
$X^{\hat{\pi}_l}_t$&wealth process under $\hat{\pi_l}$\\
$\hat{\pi}_{l,m}$ & $\hat{\pi}_l$  modified by restriction on investment \\
$\tilde{X}_0$&shadow wealth \\
$\tilde{X}^{\hat{\pi}}_t$&shadow process (wealth process under $\hat{\pi}$ for $\tilde{X}_0$\\
$p(t,\tilde{X}^{\hat{\pi}}_t)$&price of put option on shadow wealth (strike price $K_l$) \\
$\tilde{\pi}_p$&strategy for replication of $p(t,\tilde{X}^{\hat{\pi}}_t)$ \\
 \\
$K_u$&upper bound for terminal wealth\\
$\hat{\pi_u}$& optimal strategy under an upper constraint for terminal wealth\\
$X^{\hat{\pi_u}}_t$&wealth process under $\hat{\pi_u}$ \\
$c(t,\tilde{X}^{\hat{\pi}}_t)$ &price of call option on shadow wealth (strike price $K_u$) \\
$\tilde{\pi}_c$&strategy for replication of $c(t,\tilde{X}^{\hat{\pi}}_t)$ \\
\end{tabular}

\newpage

\pagenumbering{arabic}
\setcounter{page}{1}
\section{Introduction}

What is the best way to invest money for retirement? This question might be more relevant than ever, since private and institutional investors face a challenging low-interest market environment and growing retirement needs. At the same time, this question has been widely researched in financial mathematics and economics and offers many interesting approaches. Among these, using the \textit{utility} of wealth for an investor instead of, for example, the simple return, as a criterion seems to best reflect the investor's needs. The strategies that maximize the expected utility are commonly called \textit{optimal} strategies and they can only be derived analytically for few utility functions. The exponential utility function is one of them, and in this thesis it will be used to determine the optimal strategy in a simple Black-Scholes-Setting.\newline
Besides developing a good understanding of the resulting strategy and its effects on the wealth at retirement, we are particularly interested in improving its potential while still taking the investor's needs into consideration. The idea is therefore to introduce upper and lower constraints on the resulting wealth: Utility theory suggests that investors are more sensitive towards lower values of wealth, so they might be ready to give up some investment potential in exchange for a garantee on a minimal return.  From another perspective, it could be favorable to constrain the wealth at retirement to a maximum amount (for example, the present value of annuities) and be compensated by higher probabilites for greater returns on the wealth below this maximum value. \newline
It is one aim of this thesis to explore the consequences on the optimal strategy for exponential utility with retirement wealth facing upper and lower constraints. In this sense, this paper can be seen as a complement to the research done in \cite{Donnelly}, where a power utility function is considered. \newline
The second modification of the optimal strategy that is investigated in this thesis is introduced in order to avoid debts. Since the optimal strategy might involve borrowing money or short-selling, there is a risk that the investor ends up with a negative wealth. This is why we would like to limit the investment to a maximum of 100$\%$ of wealth. This restriction will be implemented on the pre-existing strategies we developed, its consequences will therefore be assessed from empirical results only. 
\newpage We will proceed step by step and gradually adapt the strategy. \newline As a basis, we will derive the optimal unconstrained strategy in Chapter 2, using stochastic optimal control arguments. We will then analyse it briefly with respect to investor-related parameters and focus an the analysis of the optimal strategy with a restriction on investment. \newline \newline In Chapter 3, the optimal strategy where terminal wealth faces only a lower constraint will be developed. This will be done by formulating a dual problem and solving this via risk-neutral valuation. We will then see that the resulting optimal strategy corresponds to the optimal unconstrained strategy combined with a put option. After some qualitative analysis, we will then implement the investment-restriction and see how it affects the strategy. Finally, an emphasis will be put on the theoretical distribution of the resulting wealth (without constraints) and the error produced by the implementation of the modified strategies. \newline\newline
In the last chapter, we will add an upper constraint to the problem. To do this, we will first find the strategy for the isolated case of an upper constraint by the similar methods as used before. Then we will combine it with the results from the previous chapter. It turns out, that in addition to the put option bought, the  'combined' strategy requires to sell a call option. For further analysis of this strategy, we will first briefly investigate the isolated case of an upper constraint and then try to outline how the choice for the upper and lower constraints affect the 'combined' stategies, both qualitatively and with respect to the distribution of terminal wealth.
\newline \newline
Finally, the Appendix is thought to gather background information on the theory used, as well as complementary analysis to validate the empirical results. 


\newpage
\section{An Optimal Strategy for Exponential Utility}
We will start by finding the strategy that maximizes the expected exponential utility of terminal wealth and briefly analyze its results. Further, the strategy is modified by introducing a restriction on the amount invested. The resulting strategy and terminal wealth distribution will then be investigated in more detail. 
\subsection{Derivation of the Optimal Strategy}
In this section, we will  introduce the formal setting and derive the optimal unconstrained strategy by solving an differential equation that characterises optimal strategies. We will see that it requires a deterministic amount to be invested in the risky asset, independently of the investor's wealth or the stock's performance, but growing by the risk-free rate. The resulting terminal wealth is normally distributed.
\subsubsection{Market Model and Hamilton-Jacobi-Bellman Equation}
We assume the Black-Scholes market model consisting of one risky stock and one risk-free bond, available in the continous time interval [0,\textit{T}]. The integer \textit{T} > 0 denotes the \textit{terminal time}, for example the moment of retirement. The price of the bond at time \textit{t} is given by the deterministic price process   \{\textit{B(t), t $\in$} [0,\textit{T}]\} with dynamics 
\begin{equation} dB(t) = rB(t)dt, \end{equation} where \textit{r} >0 is the risk-free interest rate and\textit{ B(0)} = 1 almost surely (abbreviated a.s.). \newline The performance of the risky stock at time \textit{t} is given by the stochastic price process\newline   \{S(t), t $\in$  [0,\textit{T}] \}  with dynamics \begin{equation} dS(t) = \mu S(t)dt + \sigma S(t) dW(t), \end{equation} where $\sigma$ > 0 , \textit{S(0)} = 1 a.s. , $\mu$ > \textit{r} and \textit{W(t)} is the 1-dimensional standard Brownian motion defined on a complete probability space ($\Omega$, \(\mathcal{F}, \mathbb{P}\)).
\newline The information available up to time t is represented by the filtration \begin{center} \(\mathcal{F}_t = \sigma \{ W(s),s \in [0,t]\} \vee \mathcal{N}(P), \)\end{center} where $\mathcal{N}(\mathbb{P})$ denotes the collection of all $\mathbb{P}$-null events in the probability space. Further, call \newline\newline $\pi = \{\pi(t)$ is a $\mathbb{R}$-valued, $\mathcal{F}_t$ -progressively measurable process and $\int_0^t \pi^2(s)ds$<$\infty$ $\forall t \in [0, \textit{T}]\}$\newline \newline a \textit{portfolio}, and $\pi(t)$ is the proportion of wealth invested in the risky asset at time t.\newline
We assume that the investor follows a \textit{self-financing} strategy, which means that wealth gains or losses arise solely from investment gains or losses. Then, the corresponding wealth at time t, $X^{\pi}(t)$, can be described by the dynamics  \begin{center} \( dX^{\pi}(t)=\pi(t) X^{\pi}(t) \dfrac{dS(t)}{S(t)} + (1-\pi(t)) X^{\pi}(t) \dfrac{dB(t)}{B(t)}\). \end{center}
Assuming that the investor starts with a fixed, positive wealth \textit{x} at time 0, and substituting (2.1) and (2.2) into the dynamics, this gives the \textit{wealth process} defined by the \textit{wealth equation}: \begin{equation} dX^{\pi}(t)=(rX^{\pi}(t)+ \pi (t) (\mu -r)X^{\pi}(t))dt + \sigma \pi(t) X^{\pi}(t)dW(t) \ \mathrm {and} \ X^{\pi}(0) = \mathrm{\textit{x} \ \textit{a.s.}}, \end{equation} where \textit{x} $\in$ $\mathbb{R^{+}}$. For better readability let  $X^{\pi}_t$:=$X^{\pi}$(t) and $\pi_t$:=$\pi$(t).
\newline Let the set of \textit{admissible portfolios} be defined as \begin{center} \( \mathcal{A} := \{ \pi: \Omega \times [0,T] \rightarrow \mathbb{R}| X^{\pi}_{0} = x \; \; a.s. \;  and\; \;  \pi \; is \: \; a \;\;  self \text{-}financing \; portfolio \} \) \end{center}  and $\pi$ be called \textit{admissible} if $\pi$ $\in$ $\mathcal{A}$. \newline Also, define the \textit{state price density process} $H(t) := e^{-(r+ \dfrac{\theta^2}{2})t-\theta W_t}$.
\newline
To state the investor's problem, we first let the utility of wealth be described by a function (called $\textit{utility}$ $\textit{function}$) \begin{center} U: $\mathbb{R}^{+}\rightarrow \mathbb{R} , \   x \longmapsto U(x)$.\end{center}  The problem for the investor is then defined as follows:
\begin{problem} \ \ Find a strategy \( \hat{\pi} \in \mathcal{A} \) such that 
\begin{equation}  \mathbb{E}[U(X^{\hat{\pi}}_T)] = \sup\limits_{\pi\in \mathcal{A}} \mathbb{E}[U(X^{\pi}_T)],  \end{equation} holds.
\end{problem} 
It is not a priori clear that the solution $\hat{\pi}$ (called $\textit{optimal investment strategy}$) of (2.4) exists. However, if it does, the problem of finding an optimal investment strategy is equivalent to finding a solution to a stochastic differential equation known as Hamilton-Jacobi-Bellman Equation (HJB). This is shown in detail in \cite{Bjork}. As the outline of its derivation can be found in Appendix A, we will simply state \newline
 \textbf{Hamilton-Jacobi-Bellman Equation (HJB)}\begin{equation} 0= \dfrac{\partial V}{\partial t}(t,x) + \sup\limits_{\pi}\left\lbrace[rx+(\mu-r)\pi x]\dfrac{\partial V}{\partial x}(t,x) + \sigma ^2 \pi^2 x^2 \dfrac{1}{2}\dfrac{\partial^2V}{\partial x^2}(t,x)\right\rbrace   
\end{equation} \begin{equation} \text{with boundary condition} \ \   V(T,x)= U(x). \end{equation} 
For simplification, the notation $\dfrac{\partial V}{\partial t}(t,x)= V_t, \dfrac{\partial V}{\partial x}(t,x)=V_x, \dfrac{\partial^2V}{\partial x^2} (t,x)= V_{xx}$ \newline is introduced.
\subsubsection{The Optimal Strategy for an Exponential Utility Function}
To find the  solution of the HJB, it is necessary to know the utility function as we need to set a boundary $V(T,x) = U(x)$. We will use the exponential utility, which is convenient as it simplifies many calculations and is defined by the function 
\begin{center} U: $\mathbb{R} \rightarrow (-\infty, 0]$, \ 
 $ x \longmapsto U(x) = -e^{-\alpha x}$, \ for a constant $\alpha$ > 0. \end{center} 
U describes how an investor evaluates the wealth $\textit{x}$, given an (individual) parameter $\alpha$ that we call $\textit{risk\ aversion}$. Note that its first derivative $U'(x)$ is converging to zero with increasing wealth, which means that the contribution to utility decreases (i.e. the greater the wealth is, the less an additional increase has an effect on the investor's utility).
This property is called $\textit{decreasing\ utility\ margin}$. Also, the risk aversion is constant, which implies that the investor's attitude to risk is independent of his wealth. The properties of (exponential) utility functions will be further discussed in the Appendix, so for now we will focus on the optimal strategy. \newline
 \newline  Since the supremum in (2.5) is identical to the maximum of the scalar function \newline $f(\pi) = $[$rx + (\mu -r) \pi x] V_x + \dfrac{1}{2}\sigma^2 \pi^2 x^2 V_{xx}$, the optimal value $\hat{\pi}$ needs to satisfy \newline
$0 = f'(\hat{\pi}) = (\mu -r)x V_x + \sigma ^2 \hat{\pi} x^2 V_{xx}$ $\forall$ t $\in$ [0,\textit{T}], hence \begin{equation} \hat{\pi} = - \dfrac{(\mu-r)}{\sigma^2x} \dfrac{V_x}{V_{xx}}. \end{equation} Note that it also needs to be checked that
$f''(\hat{\pi})$ =  $\sigma^2 x^2 V_{xx} < 0$. \newline So with (2.7), (2.5) can be written as \begin{equation}
V_t +rx V_x - \dfrac{1}{2} \dfrac{(\mu-r)^2}{\sigma ^2}\dfrac{(V_x)^2}{V_{xx}} = 0 .
\end{equation}\newline For simplification, the notation $\theta := \dfrac{\mu-r}{\sigma}$ (called the \textit{market price of risk}) is introduced. Based on existing results for similar problems (for example \cite{Tehranchi}) we suggest the following: \begin{proposition}
 The value function 
\begin{equation} V(t,x) = -e^{-\alpha x e^{r(T-t)}-\frac{\theta^2}{2}(T-t)} \text{ is a solution of the HJB.}\end{equation}
\end{proposition}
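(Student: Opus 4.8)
The plan is to verify Proposition 1 directly, by substituting the proposed $V$ into the reduced HJB equation (2.8) and checking the boundary condition (2.6). Recall that after inserting the maximiser $\hat\pi$ from (2.7), the HJB collapses to the single PDE (2.8), which with the abbreviation $\theta = (\mu-r)/\sigma$ reads $V_t + rx V_x - \tfrac{\theta^2}{2}(V_x)^2/V_{xx} = 0$. So it suffices to compute the three partial derivatives $V_t$, $V_x$, $V_{xx}$ of the candidate and confirm both that they satisfy this identity and that $V(T,x) = U(x)$.

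The key structural observation I would exploit is that $V(t,x) = -e^{g(t,x)}$ with exponent $g(t,x) = -\alpha x e^{r(T-t)} - \tfrac{\theta^2}{2}(T-t)$ affine in $x$. Consequently each derivative is a scalar multiple of $V$ itself. Differentiating in $x$ gives $V_x = g_x V = -\alpha e^{r(T-t)} V$, and since $g$ is linear in $x$ (so that $g_{xx}=0$) we get $V_{xx} = g_x^2 V = \alpha^2 e^{2r(T-t)} V$. For the time derivative the only care needed is the sign produced by the chain rule on the $e^{r(T-t)}$ and $(T-t)$ factors: this yields $g_t = \alpha r x e^{r(T-t)} + \tfrac{\theta^2}{2}$, and hence $V_t = \bigl(\alpha r x e^{r(T-t)} + \tfrac{\theta^2}{2}\bigr) V$.

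The crucial simplification is that the nonlinear term collapses: since $(V_x)^2 = \alpha^2 e^{2r(T-t)} V^2$ and $V_{xx} = \alpha^2 e^{2r(T-t)} V$, the common factor cancels and $(V_x)^2/V_{xx} = V$. Substituting everything into (2.8), the two $x$-dependent contributions $\alpha r x e^{r(T-t)} V$ (from $V_t$) and $-\alpha r x e^{r(T-t)} V$ (from $rxV_x$) cancel, while the remaining parts $\tfrac{\theta^2}{2} V - \tfrac{\theta^2}{2} V$ also cancel, so the left-hand side is identically $0$. The boundary condition is then immediate: at $t=T$ the exponent reduces to $-\alpha x$, giving $V(T,x) = -e^{-\alpha x} = U(x)$, as required by (2.6).

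Since this is a verification rather than a construction, there is no deep obstacle; the only points demanding attention are bookkeeping ones. First, dividing by $V_{xx}$ is legitimate because $V = -e^{g} < 0$ forces $V_{xx} = \alpha^2 e^{2r(T-t)} V < 0$, so $V_{xx}\neq 0$ everywhere. Second, this same sign fact discharges the second-order condition flagged after (2.7), namely $\sigma^2 x^2 V_{xx} < 0$, confirming that $\hat\pi$ is a genuine maximiser and not merely a critical point. The main risk in the computation is simply mismanaging the signs introduced by the time-reversal $T-t$ in $g_t$; keeping those explicit is precisely what makes the cancellations above transparent.
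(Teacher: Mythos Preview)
Your proof is correct and follows essentially the same direct-verification approach as the paper: compute $V_t$, $V_x$, $V_{xx}$, substitute into the reduced HJB (2.8), observe the cancellations, and check the terminal condition. Your presentation is slightly cleaner in that expressing each derivative as a scalar multiple of $V$ makes the simplification $(V_x)^2/V_{xx}=V$ immediate, but the argument is the same as the paper's.
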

 \begin{proof}
 For simplification define $A(t,x):= \alpha x e^{r(T-t)} +\dfrac{\theta^2}{2}(T-t)$ and write $V(t,x) = -e^{A(x,t)}$. Then \newline
 $V_t = \dfrac{dA(x,t)}{dt} e^{-A(x,t)} = -[\dfrac{\theta^2}{2} + \alpha x r e^{r(T-t)}] e^{-A(x,t)}, \newline
V_x= \dfrac{dA(x,t)}{dx} e^{-A(x,t)} = \alpha e^{r(T-t)} e^{-A(x,t)} $ \text{and}$ \newline 
V_{xx} = -\alpha ^2 e^{2r(T-t)} e^{-A(x,t)}$ , hence $f''(\hat{\pi}_t) <0$.
\newline \text{Substituting $V_{xx}$ and $V_x$ from above into }  $\dfrac{1}{2} \theta^2 \dfrac{(V_x)^2}{V_{xx}} \text{gives} \newline \dfrac{1}{2} \theta^2  \dfrac{\alpha ^2 e^{2r(T-t)} e^{-2A(x,t)}  } {-\alpha ^2 e^{2r(T-t)} e^{-A(x,t)}} =- \dfrac{\theta^2}{2}e^{-A(x,t)}, \text{hence} \ \ V_t +rx V_x - \dfrac{1}{2} \theta^2 \dfrac{(V_x)^2}{V_{xx}} = 0$
\newline So, \textit{V(t,x)} satisfies equation (2.8) and since $V(T,x) = -e^{-\alpha x}$, it satisfies the HJB (2.5)-(2.6) for exponential utility.\end{proof}\begin{proposition} The optimal investment strategy is given by  \begin{equation}\hat{\pi}_t = \dfrac{\theta}{X^{\hat{\pi}}_t \alpha \sigma}e^{-r(T-t)}.\end{equation} \end{proposition}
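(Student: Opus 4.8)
The plan is to obtain the optimal strategy by directly substituting the derivatives of the value function from Proposition 1 into the first-order condition (2.7). Since (2.7) already expresses \(\hat{\pi}\) in terms of \(V_x\) and \(V_{xx}\), and both of these quantities were computed explicitly in the proof of Proposition 1, the argument reduces to a short algebraic simplification rather than any fresh analytic work.

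First I would recall from the proof of Proposition 1 that \(V_x = \alpha e^{r(T-t)} e^{-A(t,x)}\) and \(V_{xx} = -\alpha^2 e^{2r(T-t)} e^{-A(t,x)}\). The key step is to form the ratio \(V_x / V_{xx}\): here the common factor \(e^{-A(t,x)}\) cancels, along with one power of \(\alpha\) and one factor of \(e^{r(T-t)}\), leaving \(V_x / V_{xx} = -\tfrac{1}{\alpha} e^{-r(T-t)}\). Substituting this into (2.7) then gives \(\hat{\pi} = \tfrac{\mu - r}{\sigma^2 x \alpha} e^{-r(T-t)}\).

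Finally I would rewrite the result in terms of the market price of risk. Using the definition \(\theta = (\mu - r)/\sigma\), so that \(\mu - r = \theta \sigma\), the prefactor \((\mu - r)/\sigma^2\) collapses to \(\theta/\sigma\); identifying the state variable \(x\) with the wealth \(X^{\hat{\pi}}_t\) attained under the optimal strategy then yields \(\hat{\pi}_t = \tfrac{\theta}{X^{\hat{\pi}}_t \alpha \sigma} e^{-r(T-t)}\), exactly as stated.

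Because every ingredient is already in hand from Proposition 1, there is no real obstacle in this proof; the only points requiring care are the bookkeeping in the cancellation of the exponential and \(\alpha\) factors in \(V_x/V_{xx}\), and the correct interpretation of the state variable \(x\) as the optimal wealth \(X^{\hat{\pi}}_t\). It is this last identification that converts the deterministic feedback control into the wealth-dependent expression claimed, and it also exhibits the feature emphasised earlier in the chapter, namely that the \emph{amount} \(\hat{\pi}_t X^{\hat{\pi}}_t = \tfrac{\theta}{\alpha\sigma} e^{-r(T-t)}\) invested in the risky asset is deterministic and grows at the risk-free rate.
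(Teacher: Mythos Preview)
Your derivation is correct and follows exactly the same route as the paper: substitute the derivatives of the value function from Proposition~1 into the first-order condition~(2.7), simplify the ratio $V_x/V_{xx}$, and rewrite in terms of $\theta$. The one element the paper adds that you omit is an appeal to a verification theorem (Theorem~19.6 in \cite{Bjork}) to conclude that the candidate $V$ really is the optimal value function and hence that the feedback control $\hat\pi$ obtained from~(2.7) is genuinely optimal; without this step the computation only identifies the maximiser in the HJB, not the solution of Problem~1.
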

\begin{proof}
From (2.9): $\hat{\pi}_t =-\dfrac{\theta}{x\sigma} \dfrac{V_x}{V_{xx}} = \dfrac{\theta}{x\alpha\sigma}e^{-r(T-t)}$. Hence Proposition 1 holds and by Theorem 19.6 in \cite{Bjork} (verification theorem), \textit{V} is the optimal value function and $\hat{\pi}$ is the corresponding optimal strategy. \end{proof}
Note that  $\hat{\pi}$  depends on time and wealth, so it is not constant. However, the absolute amount invested $\hat{\pi}_t X^{\hat{\pi}}_t = \dfrac{\theta}{\alpha \sigma}e^{-r(T-t)}$ does not depend on the absolute wealth of the investor.  
\newline  The optimal strategy gives us then
\begin{proposition} The optimal wealth process is given by \begin{equation}
X^{\hat{\pi}}_t = X^{\hat{\pi}}_0 e^{rt} + t \dfrac{\theta^2}{\alpha} e^{r(t-T)}+\dfrac{\theta}{\alpha}e^{r(t-T)}W_t \   with \  X^{\hat{\pi}}_0 = x.\end{equation}\end{proposition}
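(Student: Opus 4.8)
The plan is to substitute the optimal strategy from Proposition 2 directly into the wealth equation (2.3) and solve the resulting stochastic differential equation in closed form. The crucial observation is the one already recorded in the remark following Proposition 2: although $\hat{\pi}_t$ itself depends on the random wealth $X^{\hat{\pi}}_t$, the absolute amount invested in the risky asset,
\begin{equation*}
\hat{\pi}_t X^{\hat{\pi}}_t = \frac{\theta}{\alpha\sigma}e^{-r(T-t)},
\end{equation*}
is a purely deterministic function of time. This is what makes the otherwise nonlinear-looking wealth equation tractable.

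First I would insert (2.10) into (2.3). In the drift the term $(\mu-r)\hat{\pi}_t X^{\hat{\pi}}_t$ and in the diffusion the term $\sigma\hat{\pi}_t X^{\hat{\pi}}_t$ both lose their dependence on $X^{\hat{\pi}}_t$. Using $\mu-r=\theta\sigma$ they simplify to $\frac{\theta^2}{\alpha}e^{-r(T-t)}$ and $\frac{\theta}{\alpha}e^{-r(T-t)}$ respectively, so that
\begin{equation*}
dX^{\hat{\pi}}_t = \left(rX^{\hat{\pi}}_t + \frac{\theta^2}{\alpha}e^{-r(T-t)}\right)dt + \frac{\theta}{\alpha}e^{-r(T-t)}\,dW_t .
\end{equation*}
This is a linear SDE whose only state dependence is the $rX^{\hat{\pi}}_t$ term; both the remaining drift and the entire diffusion coefficient are deterministic.

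The standard way to handle such an equation is to remove the linear drift by passing to the discounted process. I would set $Y_t := e^{-rt}X^{\hat{\pi}}_t$ and apply the product rule (here $e^{-rt}$ is of finite variation, so no extra It\^o term appears). The $-rY_t$ contribution cancels the $rX^{\hat{\pi}}_t$ drift, while the factor $e^{-rt}$ combines with $e^{-r(T-t)}$ to give the constant $e^{-rT}$, leaving
\begin{equation*}
dY_t = \frac{\theta^2}{\alpha}e^{-rT}\,dt + \frac{\theta}{\alpha}e^{-rT}\,dW_t .
\end{equation*}
Integrating from $0$ to $t$, using $Y_0 = X^{\hat{\pi}}_0$ and $W_0 = 0$, gives $Y_t = X^{\hat{\pi}}_0 + \frac{\theta^2}{\alpha}e^{-rT}t + \frac{\theta}{\alpha}e^{-rT}W_t$, and multiplying back by $e^{rt}$ reproduces (2.11). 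I do not expect any genuine obstacle here; the only point requiring care is the bookkeeping of the exponential factors, and the whole argument rests on the deterministic nature of the invested amount noted above, which is precisely what reduces the substitution to a linear SDE with constant discounted coefficients.
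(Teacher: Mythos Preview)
Your proof is correct. Both you and the paper begin identically, substituting (2.10) into (2.3) to obtain the linear SDE
\[
dX^{\hat{\pi}}_t = \Bigl(rX^{\hat{\pi}}_t + \tfrac{\theta^2}{\alpha}e^{-r(T-t)}\Bigr)dt + \tfrac{\theta}{\alpha}e^{-r(T-t)}\,dW_t,
\]
and both then solve it via the integrating factor $e^{-rt}$. The only difference is in presentation: the paper (Appendix 6.2) invokes the general solution formula for linear SDEs from Protter, writing the equation as $dX_t = X_t\,dZ_t + dH_t$ with $Z_t = rt$, computing the stochastic exponential $\epsilon(Z)_t = e^{rt}$, and reading off the answer from the theorem. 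You instead carry out the same computation by hand, passing directly to $Y_t = e^{-rt}X^{\hat{\pi}}_t$ and integrating. Your route is more elementary and self-contained; the paper's buys generality (the same template would handle a stochastic $Z$) at the cost of an external reference. Substantively they are the same argument.
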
 \begin{proof}  Substituting (2.10) into (2.3) we get  \newline $dX^{\hat{\pi}}_t = [rX^{\hat{\pi}}_t + \dfrac{\theta^2}{\alpha} e^{-r(T-t)}] dt + \dfrac{\theta}{\alpha}e^{-r(T-t)} dW_t$ . \newline This is a linear Stochastic Differential Equation (SDE) and the derivation of its solution can be found in Appendix 6.2. \end{proof} In particular, for t=T it follows: $ X^{\hat{\pi}}_T = X_0 e^{rT} + T\dfrac{\theta^2}{\alpha} + \dfrac{\theta}{\alpha} W_T$ , so the terminal wealth is normally distributed with  $\mathbb{E}_0[X^{\hat{\pi}}_T]=X_0 e^{rT} + T\dfrac{\theta^2}{\alpha}$ and Var$(X^{\hat{\pi}}_T) = \dfrac{\theta^2}{\alpha^2}T$.

\subsection{Brief Analysis of the Optimal Strategy}
We will start by analysing the optimal strategy with respect to factors that can be influenced by the investor, namely the initial investment $X_0$, the risk aversion $\alpha$ and the investment horizon \textit{T}.  The market parameters r, $\mu$ and $\sigma$ will be considered as fixed values here, but their impact will be investigated in detail in the next section.\newline
Since the optimal strategy is inversely proportional to the investor's risk aversion parameter $\alpha$ and since the latter tends to take very low values, we will see that it is particularly sensitive to it. Also the initial wealth plays an important role, as low risk aversion can be compensated by high initial wealth in order to reach the same expected return on initial wealth.
 \subsubsection{Parameters:  $\alpha$ and \textit{T} }
If we look at the formula in (2.10), it can be seen that the investment strategy is inversely proportional and therefore highly sensitive to $X^{\hat{\pi}}_t$ and to $\alpha$. This can be interpreted in the way that the proportion invested in the stock is reduced with increasing wealth, which makes sense in the context of decreasing marginal utility. 
Also, since $\alpha$ is a measure for aversion to risk, it is intuitive that the proportion of wealth put at risk (i.e. invested in the stock) is decreasing with increasing $\alpha$.
\newline 
 In order to assess the impact of the  parameter $\alpha$, we will from now on consider the absolute amount invested at t, $\hat{\pi}_t X^{\hat{\pi}}_t$, which is independent of $X^{\hat{\pi}}_t$. Since it is deterministic, it is sufficient to look at the initial investment (t=0) to characterize the impact of $\alpha$ , as it is done in Figure 1 for a 'standard' setting (Note that T is rather small here, but this will not make much difference, as we will see later).  Clearly, for low risk aversion, even small changes in $\alpha$ can have a huge impact on the investment. It therefore needs to be accorded special attention to the choice of $\alpha$.

\begin{figure}[H]
\centering  \includegraphics[width=90mm]{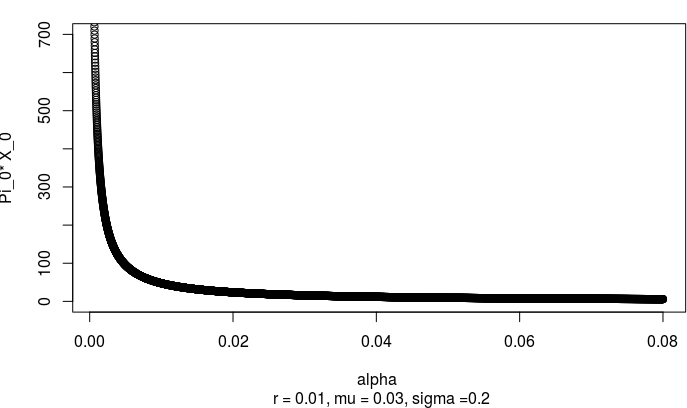}
\caption{Impact of risk aversion on initial investment for $\hat{\pi}$ (T=5)}
\end{figure}
What values for $\alpha$  are reasonable? Allthough the estimation of risk aversion depends on the experimental setting and methods used, studies indicate a similiar range: \cite{Guiso} find a median of 0.000708 and an average of 0.01978 for Italian households, \cite{Buccola} find a best estimate of 0.001 (with a smallest value estimation of 0.000708) for Californian tomato growers, and \cite{Paravisini} find an average absolute risk aversion of 0.037 and a median of 0.0439 and refer to \cite{Holt} for similiar values (0.003 average, 0.109 median) for individual investors on a lending platform.
Besides their impact on investment, there are other reasons susggesting to focus on rather smaller values of $\alpha$: Generally the setting is targeted at people that are somewhat willing to invest, so extremly high risk aversion could be excluded. Also, the distribution of risk aversion in a population seems to be right-skewed (see \cite{Bakshi}), hence the median 0.0007 might be a better reference value than the average. However, values around 0.01 and 0.001 seem also to be realistic options for $\alpha$ and should be considered as well. 
\newline \newline
Compared to $\alpha$, the impact of the (reasonable) time horizon T is rather small. For instance, in the second diagram  of Figure 2 the intersection with $\hat{\pi}_t X^{\hat{\pi}}_t$-axis is at $\alpha$ = 0.0004. It results in a difference of the initial investment in the stock of circa 250 for T between 10 and 30 years. This difference doesn't change as much as the initial investment changes with increasing $\alpha$. For example, if $\alpha$ = 0.0007, the difference of initial investment is 150 for the same range of T (whereas the initial investment changed from around 1'400 to around 500).
\begin{figure}[H]
 \includegraphics[width=82mm]{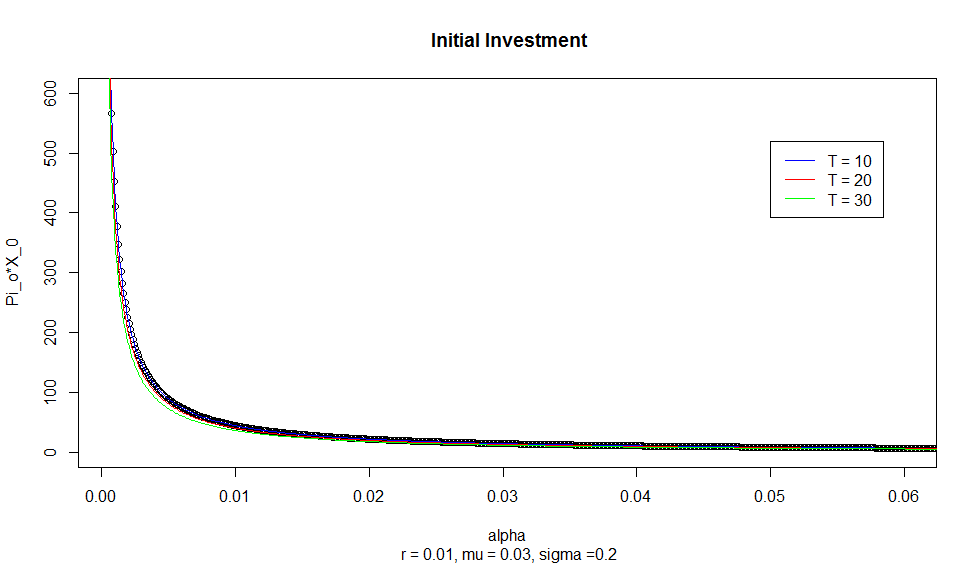}  \includegraphics[width=82mm]{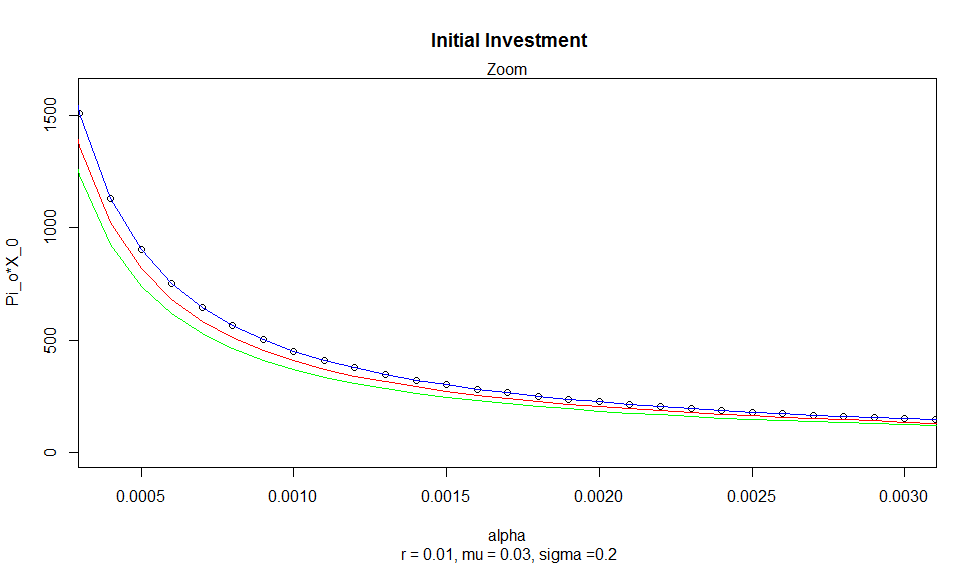} 
\caption{Impact of risk aversion on initial investment for $\hat{\pi}$ (varying T)}
\end{figure}

\subsubsection{Initial Wealth and Terminal Wealth Distribution}
The absolute amount invested increases exponentially with time at the risk-free interest rate r, independently of the performance of the stock (see Fig.3).This means, that if wealth increases at a higher rate than r, the proportion invested in stocks is decreasing, which in turn leads to a smaller variance of terminal wealth, relative to its expected value. On the other hand, for smaller wealth, we would expect a higher return, since a bigger proportion is invested in stocks, which have a higher expected return ($\mu$ > r). 
\begin{figure}[H]
 \centering \includegraphics[width=55mm]{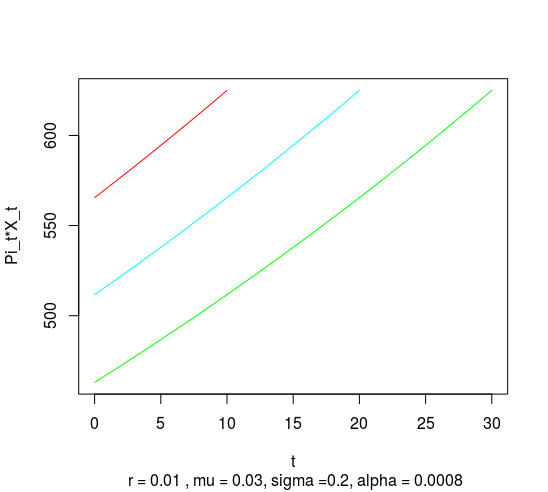} 
\caption{Absolute investment over time for $\hat{\pi}$ (varying T)}
\end{figure} 
 $\alpha$ affects the proportion of wealth invested to the same extent as $X_0$. This means, that for high initial wealth, but small risk aversion, an investor would put the same percentage of wealth at risk, as a highly risk averse person with a smaller amount of initial wealth. Consequently, the distributions of terminal wealth are the 'same' with respect to the return on initial wealth and variance relative to the expected return. In other words: Low wealth can be compensated by risk tolerance. This can be confirmed by the empirical values, as shown in the table below for the case \textit{r} = 0.01, $\mu$ = 0.03, $\sigma$ = 0.1, \textit{T} = 20. Here,\newline $\sigma(X_T)$ = $\sigma_{ML}(X_T)$/$\mu_{ML}(X_T)$, where $\mu_{ML}(X_T) = \mathbb{E}[X_t]$ and $\sigma_{ML}(X_T)$ are the absolute values for expected value and standard deviation of the terminal wealth distribution, obtained from a maximum-likelihood fit for a sample of 1'000. Note that, since $X^{\hat{\pi}}_T$ follows normal distribution, the 50$\%$-quantile is identical to the expected return. For more details on other quantiles refer to the Appendix A. 
\begin{table}[H]
\begin{minipage}{0.30\linewidth} \centering
\subcaption{$\alpha$ = 0.01}

\setlength{\tabcolsep}{1mm}
\renewcommand{\arraystretch}{1.2}
\begin{tabular}{|l|c|c|r|}
\hline 
$X_0$& $\dfrac{E[X_T]}{X_0}$&$\sigma$($X_T$)&$\alpha X_0$\\ \hline 
\rowcolor{LightMintGreen} 
10&915$\%$ &96$\%$&0.1\\ 
\rowcolor{Lavender} 
$10^2$&201$\%$&44$\%$&1\\
\rowcolor{Lavender}  
$10^3$&130$\%$&7$\%$&10\\
\rowcolor{Lavender}  
$10^4$&123$\%$&1$\%$&100\\ 
$10^5$&122$\%$&0$\%$&1000\\  \hline 
 \end{tabular} 
\end{minipage}
 \quad
\begin{minipage}{0.32\linewidth} \centering
\subcaption{$\alpha$ = 0.001}

\setlength{\tabcolsep}{1mm}
\renewcommand{\arraystretch}{1.2}
\begin{tabular}{|l|c|c|r|}
\hline 
$X_0$&$\dfrac{E[X_T]}{X_0}$&$\sigma$($X_T$)&$\alpha X_0$\\ \hline 
\rowcolor{LightMintGreen} 
10&8'684$\%$&104$\%$&0.01\\
\rowcolor{LightMintGreen} 
$10^2$&938$\%$&94$\%$&0.1\\ 
\rowcolor{Lavender} 
$10^3$&197$\%$&45$\%$&1\\
\rowcolor{Lavender} 
$10^4$&130$\%$&7$\%$&10\\
\rowcolor{Lavender} 
$10^5$&123$\%$&1$\%$&100\\ \hline 
 \end{tabular} 
 \end{minipage}
 \quad
\begin{minipage}{0.32\linewidth} \centering
\subcaption{$\alpha$ = 0.0001}

\setlength{\tabcolsep}{1mm}
\renewcommand{\arraystretch}{1.2}
\begin{tabular}{|l|c|c|r|}
\hline
$X_0$&$\dfrac{E[X_T]}{X_0}$&$\sigma$($X_T$)&$\alpha X_0$\\ \hline 
10&77'311$\%$ & 115$\%$&0.001\\
\rowcolor{LightMintGreen} 
$10^2$&8'056$\%$ & 110$\%$&0.01\\
\rowcolor{LightMintGreen} 
$10^3$&938$\%$&89$\%$&0.1\\ 
\rowcolor{Lavender} 
$10^4$&200$\%$&46$\%$&1\\
\rowcolor{Lavender} 
$10^5$&130$\%$&7$\%$&10\\ \hline
 \end{tabular}  
 \end{minipage}

 \caption{Empirical expected return and variance of terminal wealth for $\hat{\pi}$}
\end{table}

Note that the expected return at T is a mix of interest on riskfree bond and return on stock, depending on the proportion invested. Therefore, for bigger values of $X_0$ the return converges to the deterministic return on the riskfree bond ($e^{0.01 \times 20}$ = 1.22) , and its variance converges to zero.  On the other hand, for small initial wealth, the amount invested is much higher than the initial wealth itself (and generally higher for smaller $\alpha$), hence the expected return is a large multiple of the initial value. For example, for $\alpha$= 0.0001 and $X_0$ =10 the strategy requires to invest 16'375 in stocks and -16'365 in the riskfree asset, which in turn then leads to an expected theoretical absolute return of around 8'012. This means, that for small wealth, the strategy is connected to investing more money than provided by the investor (i.e. short-selling or borrowing), which one might want to avoid. 
\subsection{Extended Analysis of the Optimal Strategy under a Restriction on Investment}
In this section we introduce a restriction on the investment in the risky asset and investigate its effects on the terminal wealth distribution. Alltough interactions of the parameters are complex, a pattern is indicated: Generally, the restriction reduces smaller quantiles and has less effect on higher ones, since it changes the distribution towards a log-normal distribution.  The impact generally seems to be higher for high initial wealth, large $\mu$-\textit{r} and large $\sigma$. For small  \textit{r} and small $\alpha$, the advantages of the restriction on investment can be seen the best.  In addition to this empirical approach, a more theoretical one can be found in Appendix 6.4.\newline \newline
We start by defining the restriction to the values for $\hat{\pi}_tX^{\hat{\pi}}_t$ in order to avoid that the amount invested in the risky stock required by the optimal strategy exceeds the level of wealth:  
\begin{modification}\textbf{Restriction on Investment}\newline
Let the modified strategy $\hat{\pi}_m$ be defined for $(t, X^{\hat{\pi}_m}_t$) $\in$ $[0,T)$ $\times$ $\mathbb{R}$ by 
 \begin{center}
$\hat{\pi}_m (t,X^{\hat{\pi}_m}_t)$ = $\begin{cases}{\hat{\pi}_t }&\text{if $X^{\hat{\pi}_m}_t \geq \hat{\pi}_t X^{\hat{\pi}_m}_t$}\\{1}&\text{if $X^{\hat{\pi}_m}_t <  \hat{\pi}_t X^{\hat{\pi}_m}_t$}\end{cases} $ \end{center}
where $X^{\hat{\pi}_m}_t$ is the corresponding wealth process with $X^{\hat{\pi}_m}_0$ = $X_0$, and $\hat{\pi}_t$ is the optimal strategy from Proposition 2.    \end{modification}

\subsubsection{Initial wealth}
Let's first look at the impact of the value of initial wealth on the strategy. Intuitively, the higher this value, the less likely it is that wealth falls under the investment amount required by the optimal strategy during the process and hence the difference between modified and original strategy should be small (or could be zero). From another perspective, it also means that the proportion of wealth invested is small, so that the strategy (modified or not)  plays only a small part in the overall terminal wealth distribution and so will changes in parameters that affect the strategy. Qualitatively, this can be observed in the following plots, where the initial wealth is varied to be 120$\%$, 100$\%$ and 80$\%$ of the amount required by $\hat\pi_0$. 
 \newpage
\begin{landscape}
 \begin{figure}
 \begin{minipage}{0.16\textwidth} \subcaption{$X_0$=120$\% \hat{\pi}_0 X_0$} \end{minipage}
\begin{minipage}{0.31\linewidth}  
\includegraphics[width=70mm]{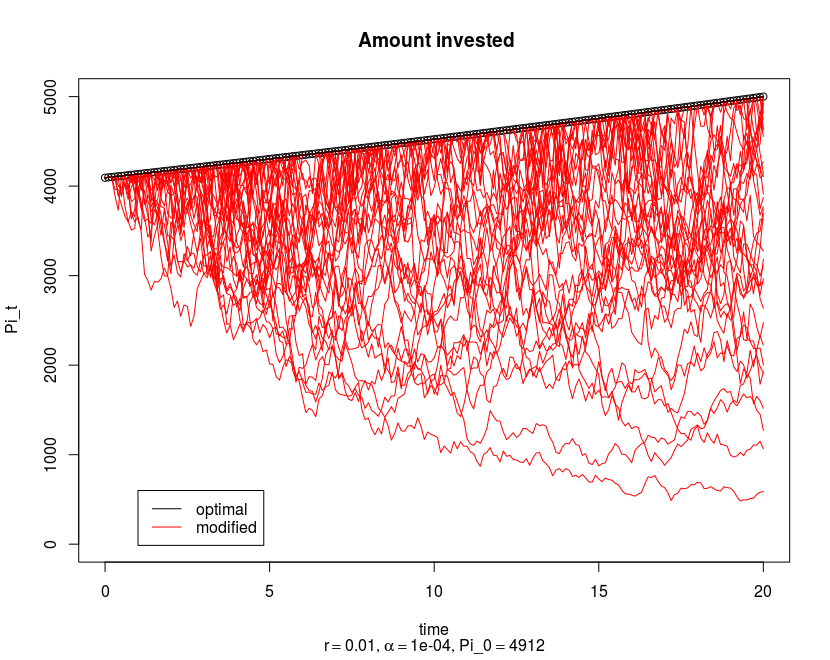} \end{minipage} \begin{minipage}{0.31\linewidth} \includegraphics[width=70mm]{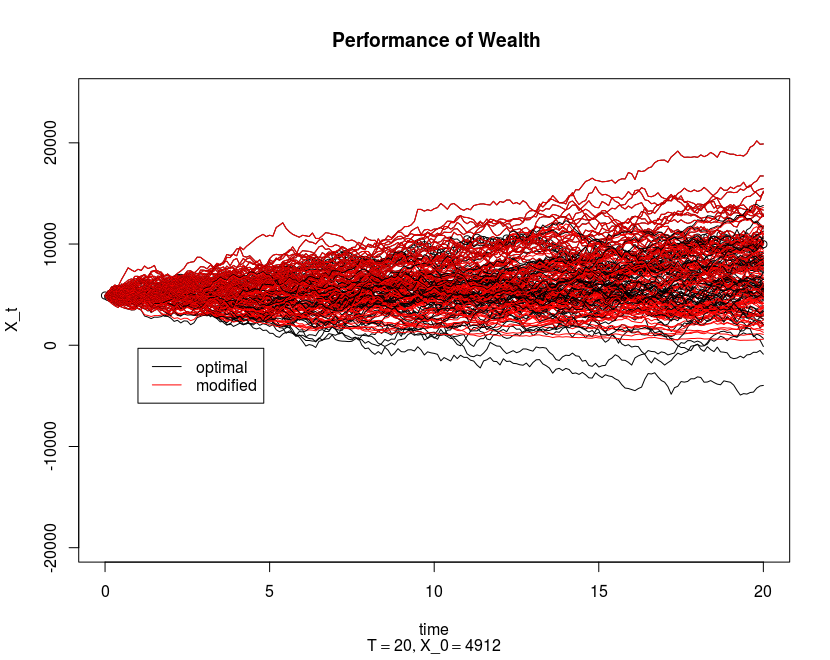}\end{minipage}  \begin{minipage}{0.28\linewidth}\includegraphics[width=70mm]{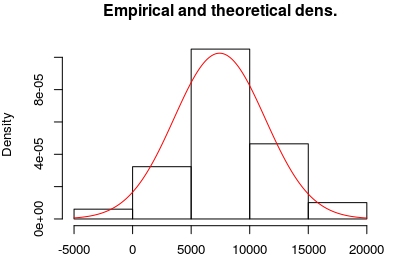} \end{minipage}  

\begin{minipage}{0.16\textwidth} \subcaption{$X_0$=100$\% \hat{\pi}_0 X_0$} \end{minipage}
\begin{minipage}{0.31\linewidth} 
\includegraphics[width=70mm]{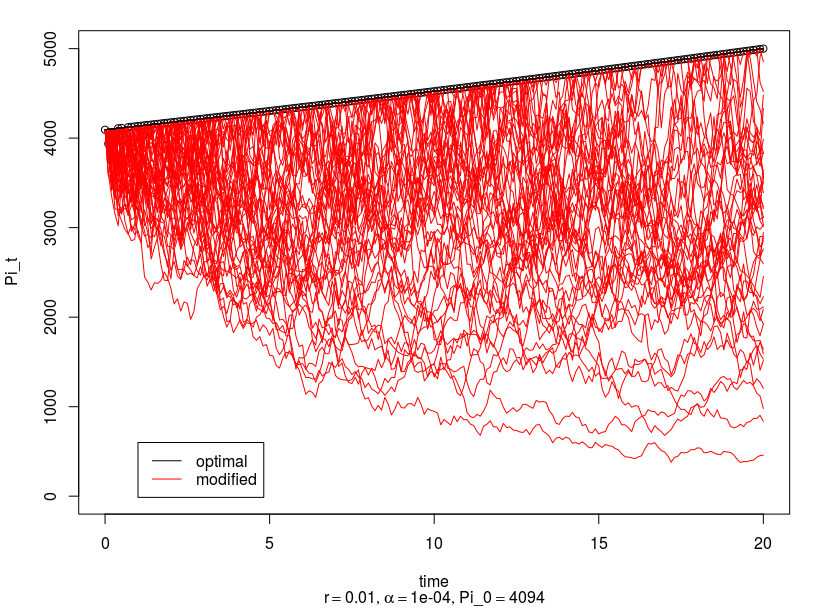} \end{minipage} \begin{minipage}{0.31\linewidth} \includegraphics[width=70mm]{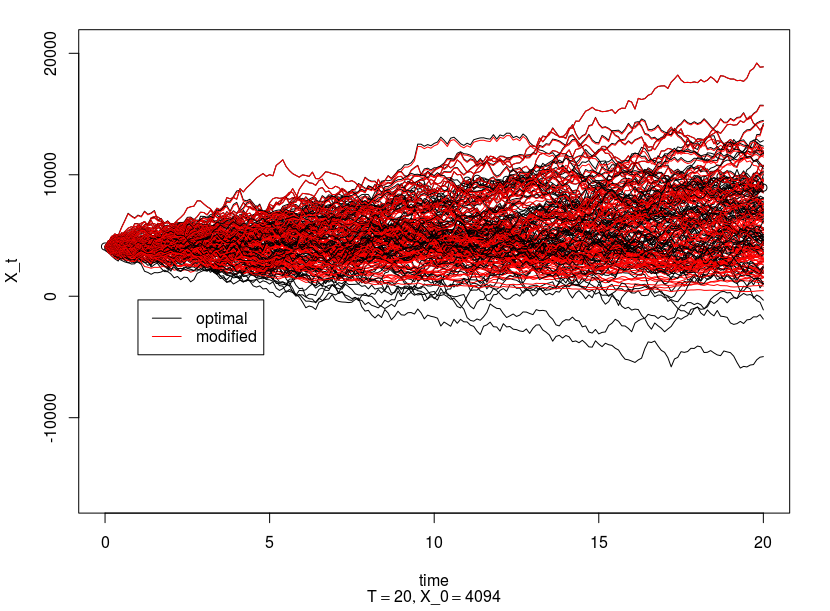}\end{minipage}  \begin{minipage}{0.27\linewidth}\includegraphics[width=60mm]{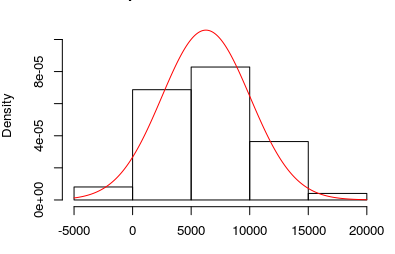} \end{minipage}
 
 \begin{minipage}{0.16\textwidth} \subcaption{$X_0$=80$\% \hat{\pi}_0 X_0$} \end{minipage}
 \begin{minipage}{0.31\linewidth} 
\includegraphics[width=70mm]{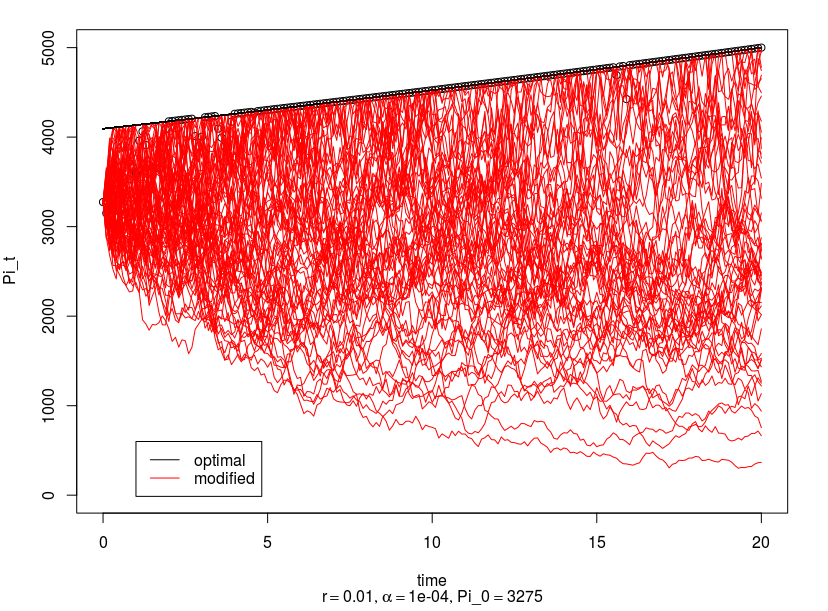} \end{minipage} \begin{minipage}{0.31\linewidth} \includegraphics[width=70mm]{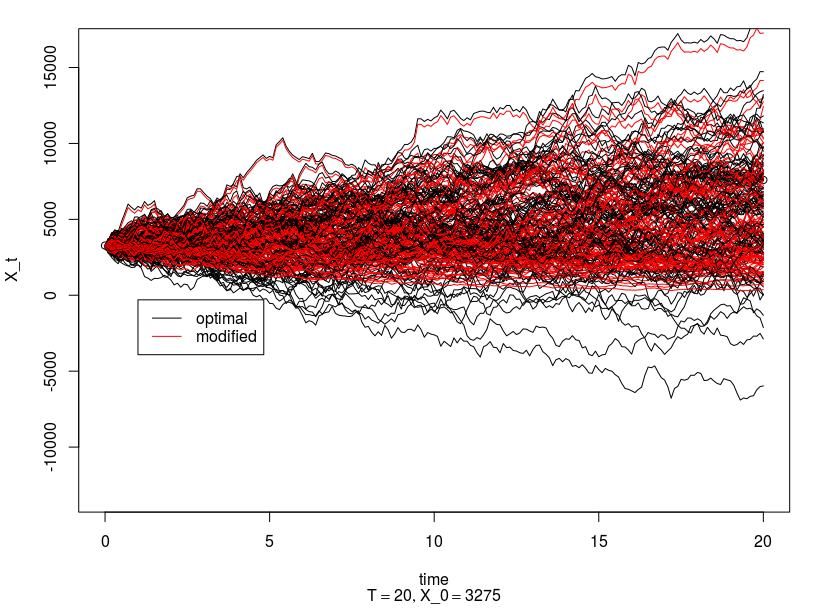}\end{minipage}  \begin{minipage}{0.27\linewidth}\includegraphics[width=60mm]{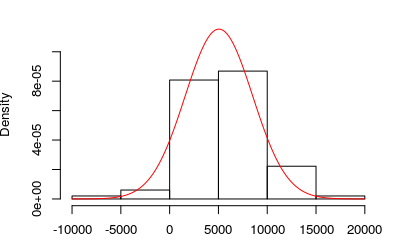} \end{minipage}
 
\caption{Investment and wealth processes for $\hat{\pi}$ and $\hat{\pi}_m$ (varying $X_0$) and terminal wealth distribution for $\hat{\pi}_m$ }
 \end{figure}
\end{landscape}

One can observe a shift from higher to lower values of terminal wealth for decreasing $X_0$ and increasing effect of the investment-restriction (the red line in the histograms is a normal distribution fit for reference). This is also reflected in the quantiles of the empirical terminal wealth distribution, where $\mathcal{Q}_{p}(X_0)$ is the p-quantile resulting from initial wealth $X_0$,$\hat{\pi}_0^{-1}$ = $X_0$/($\hat{\pi}_0X_0 $) is the initial wealth as percentage of the required initial investment and $\Delta (\mathcal{Q}_{p})  = \mathcal{Q}_{p}(X_0) / \mathcal{Q}_{p}(\hat{\pi}_0X_0 )$ is the change of quantiles relative to $\mathcal{Q}_{p}(\hat{\pi}_0X_0 )$= $\mathcal{Q}_{p}(100$\%$\hat{\pi}_0X_0)$ :
\begin{table}[H]
\setlength{\tabcolsep}{0.5mm}
\centering

\begin{tabular}{|l|c|c|c|c||c|c|c|c|}
\hline
$X_0$ &  \cellcolor{OldLace} $\mathcal{Q}_{0.25}$($X_0$) & \cellcolor{OldLace}  $\mathcal{Q}_{0.50}$($X_0$) & \cellcolor{OldLace}  $\mathcal{Q}_{0.75}$($X_0$) &   \cellcolor{OldLace}$\mathcal{Q}_{0.95}$($X_0$) &\cellcolor{LightCyan}$\Delta (\mathcal{Q}_{0.25})$  &\cellcolor{LightCyan} $\Delta (\mathcal{Q}_{0.5})$  & \cellcolor{LightCyan} $\Delta (\mathcal{Q}_{0.75})$ & \cellcolor{LightCyan} $\Delta (\mathcal{Q}_{0.95}$) \\\hline 
4'912  & 4'037  &7'549 & 10'848 & 15'043 & 134 $\%$&117$\%$&112$\%$&107$\%$\\
 
4'094& 3'010& 6'450 & 9'653  &14'064 &  &  & &\\
 
 3'275 & 2'292  & 4'660 & 8'067 & 12'673 & 76$\%$ & 72$\%$ &84$\%$&90$\%$\\ \hline

 \end{tabular}  \caption{Quantiles of $X^{\hat{\pi}_m}_T$ and comparison for $\hat{\pi}_0^{-1}$= 120$\%$, 100$\%$ and 80$\%$} 
 \end{table}
 It is remarkable that the change of $X_0$ affects lower quantiles more than higher quantiles: While an increase of $X_0$ by 20$\%$  leads to an increase greater than 20$\%$ of the 25$\%$-quantile, the 95$\%$-quantile only increases by 7$\%$. The same happens when $X_0$ is reduced by 20$\%$. It can also be observed an asymmetry between increase and decrease of $X_0$ in terms of its effect on quantiles. For instance, at $\mathcal{Q}_{0.5}$, we observe a decrease by 28$\%$ versus an increase of only 17$\%$.
Similar effects can also be observed for a lower expected return $\mu$ and higher volatility $\sigma$. In particular, a higher volatility seems to lead to stronger changes, for example with $\sigma$ = 40$\%$, we reach an increase of $\mathcal{Q}_{0.25}$ to 140$\%$ and a decrease to 72$\%$.
\newline \newline
The reason for the shift of the quantiles lies in the missed upside potential of the paths of wealth that fall under the amount required to follow the optimal strategy. Instead of investing the full amount required by the optimal strategy, only the wealth available can be invested, this results generally in lower values for the terminal wealth. The lower the initial wealth is, the more paths are limited by the modification of the strategy and lead to smaller wealth, which results in lower quantiles. As extremely well performing paths are more rare and less likely to be constrained by the modification, the change of $X_0$ affects them less, and so the upper quantiles (i.e. $\mathcal{Q}_{0.95}$) are less affected.\newline Another reason could be the fact, that the proportion invested is smaller for larger $X_0$ and therefore the modification of the strategy has less impact on the distribution of higher values of terminal wealth. In that case, even if the part invested in the stock performs less well due to the modification, this will have limited effect, since the outcome of the strategy is dominated by the high proportion invested in riskfree bond.\newline 
The asymmetry in the changes of quantiles might be explained by the following interpretation: When wealth falls and stays under the amount required by the optimal strategy, it is 100$\%$ invested in stocks, hence its distribution is identical to the one of the stocks, which is a log-normal distribution. So, for lower $X_0$  the distribution is closer to log-normal, whereas for increasing $X_0$ it converges to the normal distribution of the optimal terminal wealth.
\subsubsection{Parameter: $\mu$ -\textit{r}}
In contrary to the original strategy, where the amount invested was independent of the performance of the stock, the restriction on investment establishes a link to the stock market. Therefore market-parameters as expected return $\mu$, risk free interest rate \textit{r} and volatility $\sigma$ will influence the resulting terminal wealth distribution to another extent. Let us first look at the difference of the market rate and the riskfree rate.
Since the initial investment amount required by the optimal strategy changes with variation of $\mu$ - \textit{r}, we set $X_0 = \hat{\pi}_0 X_0$  and \textit{r} = 0.01 and compare the resulting output for different $\mu$ and fixed $\alpha$ = 0.001.
\begin{figure} [H]
 \includegraphics[width=74mm]{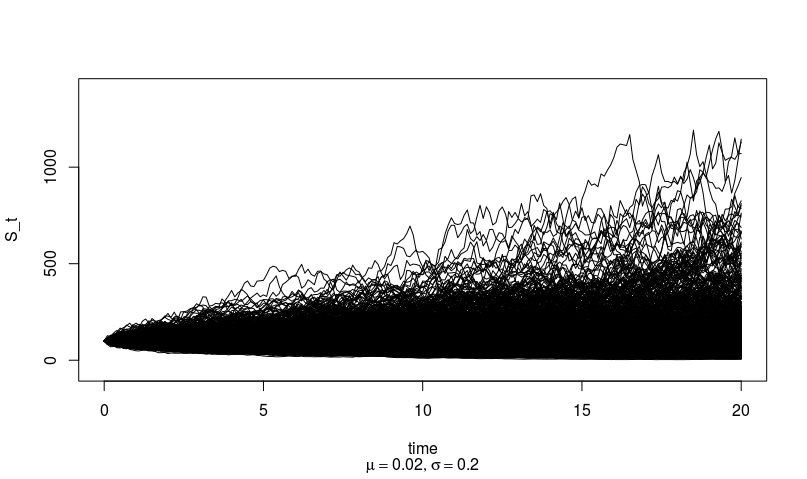} \includegraphics[width=74mm]{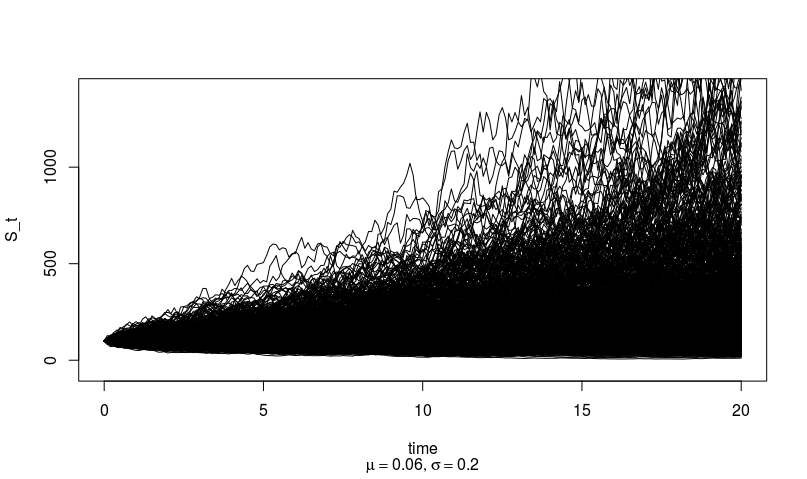} 
 \caption{Performance of stock ($\mu$ -\textit{r} = 0.01 and $\mu$ - \textit{r} = 0.05)}
\end{figure} 
\begin{figure} [H] 
 \begin{minipage}{0.50\linewidth} \includegraphics[width=70mm]{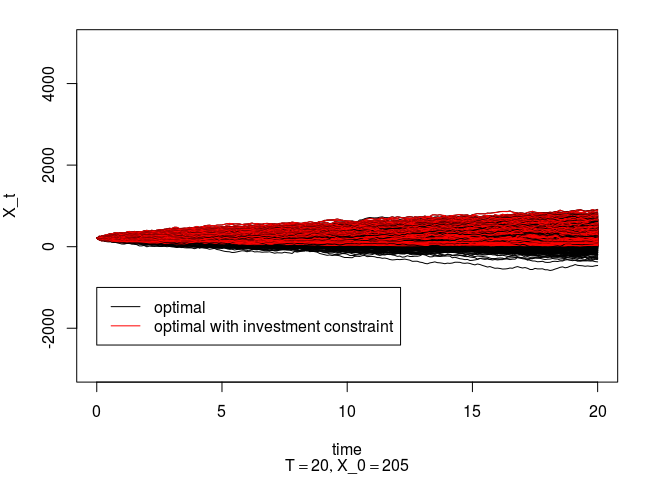}\end{minipage}  
     \begin{minipage}{0.50\linewidth} \includegraphics[width=70mm]{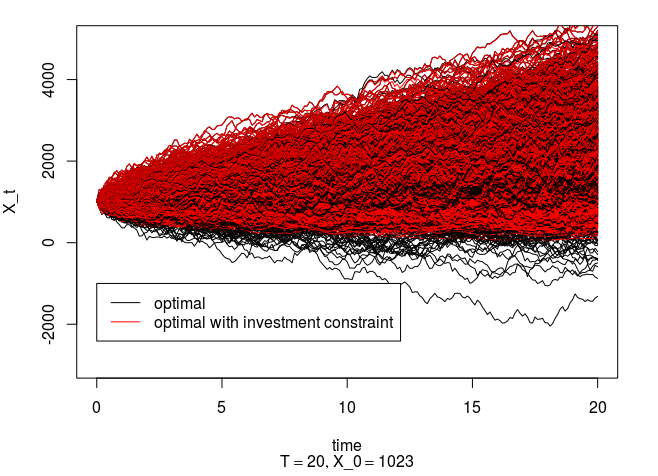}\end{minipage}  
\caption{Wealth process for $\hat{\pi}$ and $\hat{\pi}_m$ ($\mu$-\textit{r} = 0.01 and $\mu$-\textit{r} = 0.05 )}
\end{figure}
Since the absolute initial wealth varies for the different scenarios, we look at the quantiles of the total return instead of the absolute terminal wealth to make a better comparison.\newline  In the diagram below, the x-axis shows the total return in 100$\%$ and the red line represents a normal distribution fit.
\begin{figure}[H]
\includegraphics[width = 70mm]{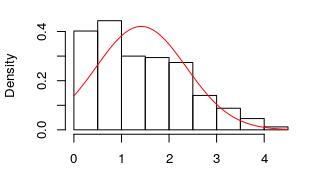}\includegraphics[width = 70mm]{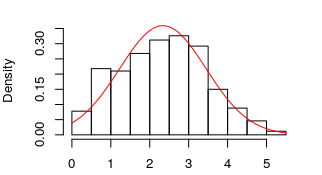}
\caption{Histogram of total return on initial wealth for $\hat{\pi}_m$ ($\mu$-\textit{r} = 0.01 and $\mu$-\textit{r}= 0.05)}
\end{figure}  Again, we see an increasing weight of lower quantiles with decreasing gap, indicating a change in the type of the distribution rather than a simple shift of values. For the empirical quantiles, let $\mathcal{Q}_{p}$($\mu$-\textit{r}) be the p-quantile of the return on initial investment at \textit{T}= 20 for a gap $\mu$-\textit{r} and $\Delta(\mathcal{Q}_{p})$=$\mathcal{Q}_{p}$($\mu$)/$\mathcal{Q}_{p}$(0.03) be the change in quantiles in comparison to a gap $\mu$-\textit{r} of 3$\%$.
\begin{table}[H]
\setlength{\tabcolsep}{0.5mm}
\centering
\begin{tabular}{|l| c |c|c|c||c|c |c|r| }
\hline
$\mu$-r &  \cellcolor{OldLace} $\mathcal{Q}_{0.25}$($X_0$) & \cellcolor{OldLace}  $\mathcal{Q}_{0.50}$($X_0$) & \cellcolor{OldLace}  $\mathcal{Q}_{0.75}$($X_0$) &   \cellcolor{OldLace}$\mathcal{Q}_{0.95}$($X_0$) &\cellcolor{LightCyan}$\Delta (\mathcal{Q}_{0.25})$  &\cellcolor{LightCyan} $\Delta (\mathcal{Q}_{0.5})$  & \cellcolor{LightCyan} $\Delta (\mathcal{Q}_{0.75})$  & \cellcolor{LightCyan} $\Delta (\mathcal{Q}_{0.95}$)  \\ \hline 
5$\%$   & 149$\%$  & 239$\%$ & 314$\%$ & 419$\%$ & 240 $\%$  & 213$\%$  & 195$\%$  & 182$\%$  \\
 
3$\%$ & 103$\%$& 187$\%$ & 268$\%$   & 383$\%$ &  &  &  &  \\
 
 1 $\%$ & 58$\%$  & 122$\%$ & 211 $\%$& 322$\%$& 19$\%$  & 22$\%$ & 26$\%$ & 28$\%$ \\ \hline

 \end{tabular}  
\caption{Quantiles of  $X^{\hat{\pi}_m}_T$ as total return and comparison for $\mu$-r = 5$\%$, 3$\%$ and 1$\%$}
\end{table}
As before, we see the effect of change of parameters more pronounced in lower quantiles, for example with $\mu$= 5$\%$, $\mathcal{Q}_{0.25}$ increases by +140$\%$ while $\mathcal{Q}_{0.95}$ only increases by +82$\%$. This pattern can be observed for different combinations of the other parameters as well, allthough the effect on quantiles is generally stronger for smaller initial values and large variance. At the same time, this is also where the difference to the results for the optimal unconstrained strategy are most obvious (for instance, at $\mu -\textit{r}$ = 5$\%$, $\hat{\pi}$ leads to an increase by +116$\%$ of $\mathcal{Q}_{0.25}$ and by +82$\%$ of $\mathcal{Q}_{0.95}$, the latter being the same as for $\hat{\pi}_m$.)\newline
Then again, an asymmetry between the effects of reduction and increase of $\mu$-\textit{r} can be observed. For example, if the gap is reduced to 1/3, the respective quantiles are reduced to values between 20$\%$ and 30$\%$, whereas an increase by 2/3 leads to in increase of lower quantiles to more than 200$\%$ (the proportional increase would lead to 166$\%$). For other paramters $\sigma$ and $\alpha$ the behaviour is similar. 
\newline \newline
The reason for this asymmetry lies in the application of the investment-constraint.\newline Generally, for the basic secenario $\mu$-\textit{r} = 0.03, the majority of the wealth paths seems to be affected by the constraint (see Fig. 7) and so the terminal return's distribution is close to log-normal. Since a reduction of $\mu$-\textit{r} does not have much effect on the application of the constraint and the terminal distribution, the effect on the quantiles is rather low. In contrary, an increase of $\mu$-\textit{r} might lead to a few more paths performing better and not being affected by the investment-constraint and so the terminal wealth distribution will be slightly closer to the normal distribution of the optimal terminal wealth. This change of the distribution can be observed in the quantiles as an (overproportional) increase of the lower quantiles and in the histograms as a shift of values to the left for low $\mu$. Another way of interpretation, as before, is the following: the more paths of wealth follow the constrained part of the strategy, the less upside and downside potential is used, which leads to more concentration of lower values and ultimately to a right-skewed log-normal-distribution. Also note, that if $\hat{\pi}_t X_t$ grows at the lower rate \textit{r} < $\mu$, the absolute gap between stocks (and thus wealth) and $\hat{\pi}_t X_t$ grows with time  -leading to a falling probability to switch to the modified strategy.  We therefore observe most of the differences between original and modified strategy in the first years of investment.\newline  Another effect, as seen before, could be the linked to the fact, that high quantiles indicate high wealth, which in turn are connected to a low proportion invested and so they are less affected by the restriction. \newline 
Note, that the amount required to be invested in the risky asset by the optimal strategy is proportional to $\mu$-\textit{r}. Changes here have an equivalent effect on the modified strategy as `inverse` changes of the initial wealth, which we considered in the section above (in other words: whether or not the modification of the strategy has an impact depends on how high $X_0$ is in comparison to $\hat{\pi}_0 X_0$). However, this can not be observed in the plots and quantiles as we had set $X_0= \hat{\pi}_0 X_0$. 
\subsubsection{Parameter: $\sigma$}
For now, we have seen that the stronger the impact of the constraint on investment is, the more the terminal wealth distribution changes towards a log-normal distribution. We will assume that the same effect will happen, when the impact of the constraint changes because of different parameters $\sigma$ or \textit{r}. This is why, instead of \textit{how}, we are now interested in assessing \textit{how much} the effect of this restriction can be. We will therefore compare the empirical quantiles of the terminal wealth distribution from this modified strategy with those from the original one. Since the values of the investment-constrained strategy might be of more interest, they are listed here, but those from the optimal unconstrained strategy can be found in Appendix 6.6.  \newline
We look at the quantiles of a small and high volatility for different expected return-interest gaps, having set $\alpha$ = 0.001, \textit{r} = 0.01 and $X_0 = \hat{\pi}_0 X_0$.
\newline

\begin{table}[H]
\setlength{\tabcolsep}{0.5mm}
\begin{center}\begin{tabular}{|l| c |c|c|c| }
\hline

$\mu$-r & $\mathcal{Q}_{0.25}$($X_0$) &   $\mathcal{Q}_{0.50}$($X_0$) &  $\mathcal{Q}_{0.75}$($X_0$) &  $\mathcal{Q}_{0.95}$($X_0$)   \\ \hline 
5$\%$   & 28$\%$  & 172$\%$ & 354$\%$ &585$\%$  \\
 
3$\%$&17$\%$&96$\%$&296$\%$&532$\%$  \\
 
 1 $\%$&11$\%$&52$\%$&231$\%$&481$\%$ \\ \hline \end{tabular}  \subcaption{  $\sigma$ = 0.4 }  \end{center}

\begin{center}   
\begin{tabular}{|l|c |c|c|c| }
\hline

$\mu$-r & $\mathcal{Q}_{0.25}$($X_0$) &   $\mathcal{Q}_{0.50}$($X_0$) &  $\mathcal{Q}_{0.75}$($X_0$) &  $\mathcal{Q}_{0.95}$($X_0$)   \\ \hline 
5$\%$&209$\%$&245$\%$&284$\%$&340$\%$  \\
 
3$\%$&155$\%$&195$\%$&231$\%$&284$\%$  \\
 
 1 $\%$&103$\%$&144$\%$&181$\%$&234$\%$ \\ \hline

 \end{tabular} \subcaption{ $\sigma$ = 0.1}  \end{center}
 \caption{Quantiles of $X^{\hat{\pi}_m}_T$ as total return for $\sigma$ = 0.1 and 0.4 (and varying $\mu$-r) } \end{table} 
 Unsurprisingly, the spread of the terminal distribution is bigger for a high variance and more narrow for a small one. More striking seem to be the rather disappointing results for ${\sigma = 0.4}$, where $\hat{\pi}_m$ yields worse results than the unconstrained strategy for all the returns shown in the table: the peak being at $\mathcal{Q}_{0.25}$ and $\mu$ -\textit{r} = 0.05 with 28$\%$ instead of 94$\%$ and the only exception being for a particularly 'bad' performance (i.e. having 11$\%$  instead of -3$\%$ at $\mathcal{Q}_{0.25}$ for $\mu$-\textit{r} = 0.01). In contrary to scenarios with high volatility that seem to be very much affected by the investment-constraint, a rather low volatility of 10$\%$ yields almost the same results as the optimal strategy. Of course, in that case, the range is generally much smaller and the results are lower. \newline
Overall, this could be good news anyway, because we would expect small $\mu$-\textit{r} to be connected to smaller risk, for which $\sigma$ is an indicator, and a higher gap to be linked to a higher volatility. So, $\mu$-\textit{r} = 0.01 and $\sigma$ = 0.1 resp.  $\mu$-\textit{r} = 0.05 and $\sigma$ = 0.4 might be realistic situations and $\mathcal{Q}_{0.5}$ = 144$\%$ resp. $\mathcal{Q}_{0.5}$= 172$\%$ are acceptable results. 
 \newline Also note, that the variance does have a rather big impact on the initial investment amount as well, for $\mu$-\textit{r} = 5$\%$ we get $\hat{\pi}_0 X_0$ = 256 when $\sigma$ = 0.1, compared to $\hat{\pi}_0 X_0$ = 4'094 for ${\sigma = 0.4.}$  
\subsubsection{Parameter: \textit{r}}
Let us consider here the cases of zero and negative interest rates, a small volatility $\sigma$ = 0.1, \textit{T}=20 and $\alpha$ = 0.001. As before, we set $X_0 = \hat{\pi}_0 X_0$. For comparison, the median of terminal wealth of a 100 $\%$ investment in stocks and of a 100 $\%$ investment in the bond are also listed. Note that the stock's distribution is right-skewed and the median is smaller than the mean.\newline

\begin{table}[H]
\setlength{\tabcolsep}{0.6mm}
\begin{center} \begin{tabular}{|l|c||c|c|c|c||c|c|}
\hline
$\mu$ &$X_0$&  \cellcolor{OldLace} $\mathcal{Q}_{0.25}$ & \cellcolor{OldLace}  $\mathcal{Q}_{0.50}$ & \cellcolor{OldLace} $\mathcal{Q}_{0.75}$ &   \cellcolor{OldLace}$\mathcal{Q}_{0.95}$ & $\mathcal{Q}_{0.50}(\pi_t \equiv 1)$&$\mathcal{Q}_{0.50}(\pi_t \equiv 0)$\\ \hline 
10$\%$ &10'000& 270$\%$ & 302$\%$& 331$\%$ & 374 $\%$  & 739$\%$ & 100$\%$ \\
 
6$\%$ &6'000& 189$\%$& 221$\%$ & 250$\%$ & 293$\%$ & 332$\%$  & \\
 
 1$\%$ &1000& 84$\%$ & 118$\%$ & 148 $\%$& 191$\%$& 122$\%$  & \\ \hline

 \end{tabular}  \subcaption{ r = 0$\%$} \end{center} 
 
\setlength{\tabcolsep}{0.8mm}   
\begin{center} \begin{tabular}{|l|c||c|c|c|c||c|c|}
\hline
$\mu$ &$X_0$&  \cellcolor{OldLace} $\mathcal{Q}_{0.25}$& \cellcolor{OldLace}  $\mathcal{Q}_{0.50}$ & \cellcolor{OldLace}  $\mathcal{Q}_{0.75}$ &   \cellcolor{OldLace}$\mathcal{Q}_{0.95}$ & $\mathcal{Q}_{0.50}(\pi_t\equiv 1)$&$\mathcal{Q}_{0.50}$($\pi_t\equiv0)$\\ \hline 
 4$\%$ &  6'107 & 138$\%$ & 164$\%$& 188$\%$ & 223 $\%$  & 223$\%$ & 82$\%$ \\
 
2$\%$ &  3'664 & 104$\%$& 131 $\%$ & 155$\%$ &190$\%$ & 149$\%$  & \\
 
 0$\%$ &  1'221 & 69$\%$ & 97$\%$ & 122 $\%$& 157$\%$& 100$\%$  & \\ \hline

 \end{tabular}  \subcaption{ r = -1$\%$} \end{center}  \caption{Quantiles of $X^{\hat{\pi}_m}_T$ as total return for r= 0$\%$ and r= -1$\%$ (and varying $\mu$)} \end{table}
Naturally, the return on initial wealth is lower for lower risk-free rates. However, in most of the cases it is still better than a 100$\%$ investment in bonds. At the same time, the more $\mu$ exceeds \textit{r}, the more the missed upside potential is expressed in the return on terminal wealth, compared to a 100$\%$  investment in stocks. In the example above we set $X_0 = \hat{\pi}_0 X_0$, but if we fix the initial wealth and consider a less risk averse person, the initial amount to be invested would be larger and therefore the proportion invested in the stock smaller. The terminal wealth in terms of return will therefore be closer to the 100$\%$ stock investment, than to the return on a 100$\%$ bond investment. For example, for $\alpha = 0.0001$, we get \newline $\mathcal{Q}_{0.50}$ = 675$\%$ if \textit{r}= 0 and $\mu$ = 10$\%$  and $\mathcal{Q}_{0.50}$ = 203$\%$ if $\mu$ = 4 $\%$ and \textit{r} =-1$\%$. \newline This means that the quantiles above also reflect the risk aversion preferences of the investor and a comparison to alternatives (like investing entirely in bonds or stocks) should take this into consideration.
\newline
Comparing these results with the unrestricted optimal strategy shows almost no difference for $\alpha$ =0.001: Here, the maximum difference is +6$\%$ at $\mathcal{Q}_{0.95}$ = 90$\%$ for $\hat{\pi}$ and $\mu$ = 0.01, \textit{r} =0. But, for $\alpha$ =0.0001, which implies much more investment in the risky asset, we find a great difference: For almost all the quantiles, the missed upside potential is huge. For example, for \textit{r}=0$\%$ it lies between 200$\%$ and 1'510$\%$ for all $\mu$ considered. The interesting exception to this are the lower quantiles with $\mu$=0.01. In these cases, 25$\%$ of optimal terminal wealth paths resulting from the unrestricted strategy take values below -2$\%$. This is where the restriction kicks in and shows a great advantage: For $\mu$ = 1$\%$ and \textit{r}=0$\%$, $\mathcal{Q}_{0.25}$ is at 82$\%$, and for $\mu$=0$\%$, \textit{r}= -1$\%$ it still is at 67$\%$ (see Table 14 in Appendix 6.5).

\section{An Optimal Strategy for Exponential Utility and Lower Constraint $K_l$}
\subsection{Derivation of the $K_l$-Strategy }
We now introduce a lower constraint to limit the terminal wealth and derive the optimal strategy for this setting. We will find, that it involves investing one part of the initial wealth following the optimal strategy (yielding a \textit{shadow wealth process})   and using the other part to buy a put option to hedge this process. For simplicity, we will sometimes refer to it as $K_l$-Strategy.\newline
We begin by modifiying Problem 1 by a constraint $K_l$ $\in$ $\mathbb{R}_{(-\infty,X_0 e^{rT})}$.
\begin{problem} 
Find an optimal strategy $\hat{\pi}_l \in \mathcal{A}$ such that \begin{equation} \mathbb{E}[U(X^{\hat{\pi}_l}_T)] = \sup\limits_{\pi\in \mathcal{A}}\mathbb{E}[U(X^{\pi}_T)]\ \text{and} \ X^{\hat{\pi}_l}_T \ge K_l \text{ holds a.s.}\end{equation} \end{problem}
In order to solve this problem, we first determine the optimal terminal wealth. 
\begin{proposition} For Problem 2, the optimal terminal wealth is of the form \begin{equation}X^{\hat{\pi}_l}_T = \tilde{X}^{\hat{\pi}}_T + \max\{K_l- \tilde{X}^{\hat{\pi}}_T ,0\},\end{equation} where $\tilde{X}^{\hat{\pi}}_t$ is the optimal wealth process from (2.15), with \newline  $\tilde{X}_0 = (-$ln$(\dfrac{y}{\alpha}) + rT-\dfrac{\theta^2}{2}T)\dfrac{1}{\alpha} e^{-rT}$ (called \textit{shadow value}) for y > 0 such that  $\mathbb{E}[H_T X^{\hat{\pi_l}}_T] = X_0$ and $\hat{\pi}_t$ the corresponding optimal strategy . \end{proposition}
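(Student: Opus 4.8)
The plan is to avoid setting up and solving a fresh Hamilton--Jacobi--Bellman equation and instead use the martingale (convex-duality) method, which is well suited to the complete Black--Scholes market. First I would use completeness: because $H_T$ is the unique pricing density, an $\mathcal{F}_T$-measurable terminal wealth $X_T$ is financed by some $\pi\in\mathcal{A}$ starting from $X_0$ if and only if the static budget identity $\mathbb{E}[H_T X_T]=X_0$ holds (modulo integrability). This turns Problem 2 into the static problem of maximizing $\mathbb{E}[U(X_T)]$ over $\mathcal{F}_T$-measurable $X_T$ subject to $\mathbb{E}[H_T X_T]=X_0$ and $X_T\ge K_l$ almost surely.

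Next I would attach a Lagrange multiplier $y>0$ to the budget constraint and maximize $\mathbb{E}[U(X_T)-yH_T X_T]$ pointwise in $\omega$ over $x\ge K_l$. Since $U(x)=-e^{-\alpha x}$ is strictly concave with $U'(x)=\alpha e^{-\alpha x}$, the unconstrained stationarity equation $U'(x)=yH_T$ has the unique root $I(yH_T)$, where $I:=(U')^{-1}$ satisfies $I(z)=-\frac{1}{\alpha}\ln(z/\alpha)$. Imposing the floor and using concavity, the constrained pointwise maximizer is $\hat{X}_T=\max\{I(yH_T),K_l\}=I(yH_T)+\max\{K_l-I(yH_T),0\}$, which already exhibits the claimed put-type structure.

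It then remains to identify $I(yH_T)$ with the terminal shadow wealth $\tilde{X}^{\hat{\pi}}_T$. Substituting $H_T=e^{-(r+\theta^2/2)T-\theta W_T}$ gives $I(yH_T)=\frac{1}{\alpha}\bigl(-\ln(y/\alpha)+(r+\frac{\theta^2}{2})T\bigr)+\frac{\theta}{\alpha}W_T$, whereas evaluating the optimal wealth process of Proposition 3 at $t=T$ with initial value $\tilde{X}_0$ yields $\tilde{X}^{\hat{\pi}}_T=\tilde{X}_0 e^{rT}+T\frac{\theta^2}{\alpha}+\frac{\theta}{\alpha}W_T$. Inserting the stated shadow value $\tilde{X}_0 e^{rT}=\frac{1}{\alpha}(-\ln(y/\alpha)+rT-\frac{\theta^2}{2}T)$ makes the two expressions coincide term by term, so $\hat{X}_T=\tilde{X}^{\hat{\pi}}_T+\max\{K_l-\tilde{X}^{\hat{\pi}}_T,0\}$; in fact the shadow value is defined precisely to force this match. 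The multiplier $y$ is then pinned down by the budget equation $\mathbb{E}[H_T\hat{X}_T]=X_0$.

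The main obstacle is not this formal computation but the rigorous optimality verification and the existence of $y$. For optimality I would show $\hat{X}_T$ dominates every feasible competitor $X_T$ via the concavity bound $U(X_T)\le U(\hat{X}_T)+U'(\hat{X}_T)(X_T-\hat{X}_T)$, checking that $U'(\hat{X}_T)(X_T-\hat{X}_T)\le yH_T(X_T-\hat{X}_T)$ both on $\{I(yH_T)>K_l\}$, where $U'(\hat{X}_T)=yH_T$, and on $\{I(yH_T)\le K_l\}$, where $U'(\hat{X}_T)=U'(K_l)\le yH_T$ while $X_T-K_l\ge 0$; taking expectations and using the common budget value $X_0$ gives $\mathbb{E}[U(X_T)]\le\mathbb{E}[U(\hat{X}_T)]$. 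For existence and uniqueness of $y$ I would note that $\phi(y):=\mathbb{E}[H_T\max\{I(yH_T),K_l\}]$ decreases continuously from $+\infty$ to the floor value $K_l\,\mathbb{E}[H_T]=K_l e^{-rT}$, so the hypothesis $K_l<X_0 e^{rT}$ (equivalently $K_l e^{-rT}<X_0$) is exactly what guarantees a unique crossing $\phi(y)=X_0$ with $y>0$. Finally, completeness applied in reverse produces the admissible $\hat{\pi}_l\in\mathcal{A}$ replicating $\hat{X}_T$, completing the argument.
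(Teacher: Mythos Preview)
Your proposal is correct and follows essentially the same route as the paper: both reduce Problem 2 to the static budget-constrained problem, obtain the pointwise maximizer $\max\{K_l, I(yH_T)\}$ with $I=(U')^{-1}$, and then identify $I(yH_T)$ with the terminal shadow wealth $\tilde{X}^{\hat{\pi}}_T$ by solving for the matching initial value $\tilde{X}_0$. The only difference is that the paper outsources the optimality of $\max\{K_l, I(yH_T)\}$ to Lemma 2 of Grossman--Zhou, whereas you supply that argument explicitly via the concavity inequality $U(X_T)\le U(\hat{X}_T)+U'(\hat{X}_T)(X_T-\hat{X}_T)$ together with the case split on $\{I(yH_T)\gtrless K_l\}$, and you also address the existence of the multiplier $y$ via monotonicity of $\phi(y)=\mathbb{E}[H_T\max\{I(yH_T),K_l\}]$, which the paper leaves implicit.
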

\begin{proof} The statement follows directly from  Lemma 2 in \cite{Zhou}. \newline To see this, let $I(y) = U'^{-1}(y) =  -\dfrac{1}{\alpha}\ln(\dfrac{y}{\alpha})$ be the inverse of $U'(x) = \alpha e^{-\alpha x}$.\newline  Note that U is strictly increasing and concave. \newline 
By Lemma 2, $X^{\hat{\pi}_l}_T = \max\{K,I(yH_T)\} = I(yH_T) + \max\{K-I(yH_T),0\}$  for y >0 such that  $\mathbb{E}[H_T X^{\hat{\pi_l}}_T] = X_0$ and where $H_T = H(T)$ is the state price density at T.\newline  Determine $\tilde{X}_0$ such that $I(yH_T)= \tilde{X}^{\hat{\pi}}_T$.  This is the case if \newline $I(yH_T)=\dfrac{-1}{\alpha} [\ln(\dfrac{y}{\alpha}) -(r+\dfrac{\theta^2}{2})T-\theta W_T)]= \tilde{X}_0 e^{rT} + T\dfrac{\theta^2}{\alpha} + \dfrac{\theta}{\alpha} W_T$
\newline Hence we find  $\tilde{X}_0 = ( rT-\dfrac{\theta^2}{2}T-\ln(\dfrac{y}{\alpha}))\dfrac{1}{\alpha} e^{-rT}$.  \end{proof} Note that the optimal terminal wealth has the structure of the optimal terminal value for an unconstrained wealth process plus a put option (on this unconstrained wealth process). We can thus find the optimal strategy by determining a replicating portfolio that yields $\max\{K_l-\tilde{X}^{\hat{\pi}}_T ,0\}$ at \textit{T}. This will be done via risk-neutral valuation arguments, in analogy to \cite{Donnelly}.
\begin{proposition}The price at time t $\in$ [0,T] of a put option with payoff $\max\{K_l- \tilde{X}^{\hat{\pi}}_T ,0\}$ is given by \begin{equation}{p(t, \tilde{X}^{\hat{\pi}}_t)= \Phi (d_l(t,\tilde{X}^{\hat{\pi}}_t)) (e^{-r(T-t)} K_l -  \tilde{X}^{\hat{\pi}}_{t}) + \dfrac{\theta \sqrt{T-t}}{\alpha}  e^{-r(T-t)} \phi(d_l(t,\tilde{X}^{\hat{\pi}}_t))}\end{equation} \newline where $\Phi (x)$ is the cumulative normal distribution,  $\phi (x)$ its density and \newline $d_l(t, \tilde{X}^{\hat{\pi}}_t) =( K_l- \tilde{X}^{\hat{\pi}}_{t}e^{r(T-t)}) \dfrac{\alpha}{\sqrt{T-t} \theta}$. \end{proposition}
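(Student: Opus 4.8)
The plan is to price the put by discounted risk-neutral expectation, as the text announces. Let $\mathbb{Q}$ be the equivalent martingale measure associated with the state price density $H(t)=e^{-(r+\theta^2/2)t-\theta W_t}$, whose density process is $e^{-\theta W_t-\theta^2 t/2}$. Then the arbitrage-free price is
\[
p(t,\tilde{X}^{\hat{\pi}}_t)=e^{-r(T-t)}\,\mathbb{E}^{\mathbb{Q}}\bigl[\max\{K_l-\tilde{X}^{\hat{\pi}}_T,0\}\bigm|\mathcal{F}_t\bigr],
\]
and the entire task reduces to identifying the conditional law of $\tilde{X}^{\hat{\pi}}_T$ under $\mathbb{Q}$ and evaluating a single Gaussian integral.

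First I would pass to the risk-neutral dynamics. The shadow process obeys the same $\mathbb{P}$-dynamics as the optimal wealth process of Proposition 3, namely
\[
d\tilde{X}^{\hat{\pi}}_t=\Bigl[r\tilde{X}^{\hat{\pi}}_t+\tfrac{\theta^2}{\alpha}e^{-r(T-t)}\Bigr]dt+\tfrac{\theta}{\alpha}e^{-r(T-t)}\,dW_t .
\]
By Girsanov's theorem $W^{\mathbb{Q}}_t:=W_t+\theta t$ is a $\mathbb{Q}$-Brownian motion; substituting $dW_t=dW^{\mathbb{Q}}_t-\theta\,dt$ cancels the excess drift and leaves $d\tilde{X}^{\hat{\pi}}_t=r\tilde{X}^{\hat{\pi}}_t\,dt+\tfrac{\theta}{\alpha}e^{-r(T-t)}\,dW^{\mathbb{Q}}_t$. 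Solving this linear SDE forward from $t$ to $T$ (integrating factor $e^{-rs}$, exactly as in Appendix 6.2) gives
\[
\tilde{X}^{\hat{\pi}}_T=e^{r(T-t)}\tilde{X}^{\hat{\pi}}_t+\tfrac{\theta}{\alpha}\bigl(W^{\mathbb{Q}}_T-W^{\mathbb{Q}}_t\bigr),
\]
so that, conditional on $\mathcal{F}_t$, the terminal shadow wealth is normal with mean $e^{r(T-t)}\tilde{X}^{\hat{\pi}}_t$ and variance $\tfrac{\theta^2}{\alpha^2}(T-t)$.

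It then remains to evaluate the expectation for this normal law. Writing $\tilde{X}^{\hat{\pi}}_T=e^{r(T-t)}\tilde{X}^{\hat{\pi}}_t+\tfrac{\theta}{\alpha}\sqrt{T-t}\,Z$ with $Z$ standard normal, the payoff is strictly positive exactly when $Z<d_l(t,\tilde{X}^{\hat{\pi}}_t)$, which is precisely how the quantity $d_l$ enters. I would then split $\mathbb{E}^{\mathbb{Q}}[(K_l-\tilde{X}^{\hat{\pi}}_T)^+\mid\mathcal{F}_t]$ into the constant part $(K_l-e^{r(T-t)}\tilde{X}^{\hat{\pi}}_t)\,\Phi(d_l)$ and the linear part $-\tfrac{\theta}{\alpha}\sqrt{T-t}\int_{-\infty}^{d_l}u\,\phi(u)\,du$; using $\phi'(u)=-u\phi(u)$ the remaining integral equals $-\phi(d_l)$, so the linear part becomes $\tfrac{\theta}{\alpha}\sqrt{T-t}\,\phi(d_l)$. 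Multiplying the sum by the discount factor $e^{-r(T-t)}$ and distributing it reproduces the two summands of the claimed formula.

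The conceptually delicate step is the distributional one, not the integration. Unlike the power-utility setting of \cite{Donnelly}, the optimal (shadow) wealth here is an arithmetic, Gaussian process with a purely deterministic, time-dependent diffusion coefficient $\tfrac{\theta}{\alpha}e^{-r(T-t)}$; its terminal value is therefore normal rather than log-normal, so the pricing integral is of Bachelier type and the usual Black--Scholes formula does not apply. The one point I would record explicitly is that the drift cancellation above is exactly the statement that $e^{-rt}\tilde{X}^{\hat{\pi}}_t$ is a $\mathbb{Q}$-martingale, which is what legitimises pricing the option written on $\tilde{X}^{\hat{\pi}}_t$ by discounted $\mathbb{Q}$-expectation in the first place; the standing assumption $K_l<X_0e^{rT}$ together with the definition of the shadow value keeps every quantity well defined for $t\in[0,T)$.
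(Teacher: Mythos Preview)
Your proof is correct and follows essentially the same route as the paper: introduce the risk-neutral measure via $W^{\mathbb{Q}}_t=W_t+\theta t$, observe that $e^{-rt}\tilde{X}^{\hat{\pi}}_t$ is a $\mathbb{Q}$-martingale so that $\tilde{X}^{\hat{\pi}}_T$ is conditionally Gaussian, rewrite the exercise event $\{\tilde{X}^{\hat{\pi}}_T<K_l\}$ as $\{Z<d_l\}$ for a standard normal $Z$, and then split the discounted expectation into a $\Phi(d_l)$-term and a $\int_{-\infty}^{d_l}z\phi(z)\,dz=-\phi(d_l)$-term. The only cosmetic difference is that the paper first computes the price at $t=0$ and then asserts the general-$t$ formula ``similarly'', whereas you solve the $\mathbb{Q}$-SDE from $t$ to $T$ directly and obtain the conditional law in one step; your version is slightly more explicit about the Girsanov drift cancellation, but the argument is the same.
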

\begin{proof} Assume that the market is free of arbitrage and complete. Then there exists a risk-neutral measure $\mathbb{Q}$ such that $W^{\mathbb{Q}}_t := W_t + \theta t$ is a standard Brownian motion. Hence, under $\mathbb{Q}$ the discounted wealth process is a martingale: \newline $E_{\mathbb{Q}}[e^{-rt} \tilde{X}^{\pi}_t | \mathcal{F}_{t-1}] =  E_{\mathbb{Q}}[e^{-rt} (e^{rt} \tilde{X}^{\pi}_0 + e^{r(t-T)} \dfrac{\theta}{\alpha} W_t^{\mathbb{Q}})| \mathcal{F}_{t-1}]= E_{\mathbb{Q}}[ \tilde{X}^{\pi}_0 + e^{-rT} \dfrac{\theta}{\alpha} W_t^{\mathbb{Q}}| \mathcal{F}_{t-1}] \newline= \tilde{X}^{\pi}_0 + e^{-rT} \dfrac{\theta}{\alpha} W^{\mathbb{Q}}_{t-1} = e^{-r(t-1)} \tilde{X}^{\pi}_{t-1}
\ \forall \pi \in \mathcal{A},\  \forall t \in [1, T].$ \newline So we can evaluate the put option by risk-neutral pricing. Before this is done, note that: \newline $\tilde{X}^{\hat{\pi}}_{T} < K_l  \newline \iff 
\tilde{X}_{0}e^{rT} + \dfrac{\theta}{\alpha} W_T^{\mathbb{Q}} < K_l \newline  \iff W_T^{\mathbb{Q}}  < (K_l-  \tilde{X}_{0}e^{rT}) \dfrac{\alpha}{\theta}\newline \iff Z < (K_l-  \tilde{X}_{0}e^{rT}) \dfrac{\alpha}{\sqrt{T} \theta} =:d_0 $ \newline
for the substitution $W_T^{\mathbb{Q}} := Z \sqrt{T}$ and Z $\sim \mathcal{N}(0,1)$. 
 \newline Then the price of the put option at time 0 is given by: \newline $p(0, \tilde{X}_0) = e^{-rT} \mathbb{E}_{\mathbb{Q}} [\max\{K_l-\tilde{X}^{\hat{\pi}}_T,0\}] = e^{-rT} \mathbb{E}_{\mathbb{Q}} [(K_l-\tilde{X}^{\hat{\pi}}_T) \mathds{1}_{\{\tilde{X}^{\hat{\pi}}_T < K_l\}}] \newline = e^{-rT} \mathbb{E}_{\mathbb{Q}} [K_l \mathds{1}_{\{Z < d_0\}}] - e^{-rT} \mathbb{E}_{\mathbb{Q}} [(\tilde{X}_{0}e^{rT} + \dfrac{\theta}{\alpha} Z \sqrt{T}) \mathds{1}_{\{ Z < d_0\}} ] \newline
  = e^{-rT} K_l \Phi (d_0)-  \tilde{X}_{0} \Phi (d_0) - \dfrac{\theta \sqrt{T}}{\alpha}  e^{-rT}\mathbb{E}_{\mathbb{Q}} [ Z  \mathds{1}_{\{ Z < d_0\}} ] 
  \newline = \Phi (d_0) (e^{-rT} K_l -  \tilde{X}_{0}) - \dfrac{\theta \sqrt{T}}{\alpha}  e^{-rT} \dfrac{1}{\sqrt{2\pi}}
  \bigintsss_{-\infty}^{d_0} Z e^{-{Z^2}/2} dZ \newline = \Phi (d_0) (e^{-rT} K_l -  \tilde{X}_{0}) + \dfrac{\theta \sqrt{T}}{\alpha}  e^{-rT} \dfrac{1}{\sqrt{2\pi}} e^{\small{-\dfrac{d_0^2}{2}}}$ \newline Similarly, for any t $\in$ [0,T] this gives: \newline
 $p(t, \tilde{X}^{\hat{\pi}}_t) = \Phi (d_l(t,\tilde{X}^{\hat{\pi}}_t)) (e^{-r(T-t)} K_l -  \tilde{X}^{\hat{\pi}}_{t}) + \dfrac{\theta \sqrt{T-t}}{\alpha \sqrt{2\pi}}  e^{-r(T-t)} e^{\small{-\dfrac{d_l(t,\tilde{X}^{\hat{\pi}}_t)^2}{2}}}\newline= \Phi (d_l(t,\tilde{X}^{\hat{\pi}}_t)) (e^{-r(T-t)} K_l -  \tilde{X}^{\hat{\pi}}_{t}) + \dfrac{\theta \sqrt{T-t}}{\alpha}  e^{-r(T-t)} \phi(d_l(t,\tilde{X}^{\hat{\pi}}_t)) $ \newline for the notation defined in the Proposition.\end{proof} As a replicating portfolio $\tilde{\pi}_p$ for the pricing function $p(t, \tilde{X}^{\hat{\pi}}_t)$ we suggest
\begin{proposition} The replicating portfolio of the put option in (3.2) is given by  \begin{equation} \tilde{\pi}_p (t, \tilde{X}^{\hat{\pi}}_t) = \dfrac{-\Phi(d_l)}{\sigma \sqrt{T-t}(\Phi(d_l)d_l + \phi(d_l))}
\end{equation} \newline for $d_l=d_l(t, \tilde{X}^{\hat{\pi}}_t) =( K_l-  \tilde{X}^{\hat{\pi}}_{t}e^{r(T-t)}) \dfrac{\alpha}{\sqrt{T-t} \theta}$ as in Proposition 5. \end{proposition}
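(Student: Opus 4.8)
The plan is to identify $\tilde{\pi}_p$ by regarding the pricing function as the value of a self-financing portfolio and matching the stochastic parts of its dynamics. Set $V_t := p(t,\tilde{X}^{\hat{\pi}}_t)$. Since the market is complete and arbitrage-free (as assumed in the proof of Proposition 5), $V$ is the value of a self-financing portfolio that invests a proportion $\tilde{\pi}_p$ in the stock, so by the wealth equation (2.3) the diffusion coefficient of $dV_t$ equals $\sigma\,\tilde{\pi}_p\,V_t$. On the other hand, applying It\^o's formula to $p(t,\tilde{X}^{\hat{\pi}}_t)$ and using the dynamics of the shadow process from the proof of Proposition 3 (whose diffusion coefficient is $\tfrac{\theta}{\alpha}e^{-r(T-t)}$), the Brownian part of $dV_t$ is
\begin{equation*}
p_x(t,\tilde{X}^{\hat{\pi}}_t)\,\frac{\theta}{\alpha}e^{-r(T-t)}\,dW_t,
\end{equation*}
where $p_x=\partial p/\partial x$. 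Equating the two diffusion coefficients gives
\begin{equation*}
\tilde{\pi}_p=\frac{\theta\,e^{-r(T-t)}}{\alpha\,\sigma\,p(t,\tilde{X}^{\hat{\pi}}_t)}\,p_x(t,\tilde{X}^{\hat{\pi}}_t).
\end{equation*}

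Next I would compute $p_x$ from the explicit formula in Proposition 5. Writing $d_l=d_l(t,\tilde{X}^{\hat{\pi}}_t)$ and using $\Phi'=\phi$, $\phi'(z)=-z\phi(z)$ together with $\partial d_l/\partial x=-e^{r(T-t)}\alpha/(\sqrt{T-t}\,\theta)$, the chain rule produces three terms. The decisive point, which is exactly the classical Black--Scholes ``delta'' cancellation, is that the two contributions carrying $\phi(d_l)$ (one from differentiating $\Phi(d_l)$, the other from differentiating $\phi(d_l)$) cancel, leaving
\begin{equation*}
p_x(t,\tilde{X}^{\hat{\pi}}_t)=-\Phi(d_l).
\end{equation*}

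Finally I would substitute this into the expression for $\tilde{\pi}_p$ and clear the denominator using the explicit form of $p$. The key identity is
\begin{equation*}
\alpha\,p(t,\tilde{X}^{\hat{\pi}}_t)=\theta\,\sqrt{T-t}\,e^{-r(T-t)}\bigl(\Phi(d_l)\,d_l+\phi(d_l)\bigr),
\end{equation*}
which follows by rewriting the first summand of $p$ through $d_l=(K_l-\tilde{X}^{\hat{\pi}}_t e^{r(T-t)})\alpha/(\sqrt{T-t}\,\theta)$; inserting it yields precisely the claimed formula. I expect the main obstacle to be the bookkeeping in the computation of $p_x$ and the verification of the $\phi(d_l)$ cancellation, together with recognizing that $\alpha p$ factors neatly as $\theta\sqrt{T-t}\,e^{-r(T-t)}(\Phi(d_l)d_l+\phi(d_l))$; once these two algebraic facts are secured the result is immediate. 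I would also briefly confirm that the completeness and self-financing assumptions inherited from the previous proof legitimize treating $p$ as a replicable claim on the shadow process.
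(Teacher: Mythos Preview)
Your proposal is correct and follows essentially the same route as the paper: apply It\^o's formula to $p(t,\tilde{X}^{\hat{\pi}}_t)$, compute $p_x=-\Phi(d_l)$ via the $\phi(d_l)$-cancellation, and match the diffusion coefficient against the wealth equation $\sigma\tilde{\pi}_p\,p$. The only cosmetic difference is that the paper also writes out $p_t$ and $p_{xx}$ before matching, whereas you go directly to the diffusion term and make the factorization $\alpha\,p=\theta\sqrt{T-t}\,e^{-r(T-t)}(\Phi(d_l)d_l+\phi(d_l))$ explicit; the underlying argument is the same.
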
 \begin{proof}It needs to be shown that $p(t,\tilde{X}^{\hat{\pi}}_t)$ satisfies the wealth equation for $\tilde{\pi}_p$ from (3.4), because then it is a wealth process that replicates the price function $p(t,\tilde{X}^{\tilde{\pi}}_t)$ and reaches the  terminal value $\max\{K_l- \tilde{X}^{\hat{\pi}}_T ,0\}$, so $\tilde{\pi}_p$ would be a suitable strategy. \newline To do this, form the partial derivatives via the substitution

 $d:= d_l(t,\tilde{X}^{\hat{\pi}}_t)=(K_l-\tilde{X}^{\hat{\pi}}_{t}e^{r(T-t)}) \dfrac{\alpha}{\sqrt{T-t} \theta}$ and $d_x = \dfrac{\alpha}{\theta \sqrt{T-t}}(-e^{r(t-t)})$:
 \newline
$p_x = \dfrac{d}{d\tilde{X}^{\hat{\pi}}_t} p(t, \tilde{X}^{\hat{\pi}}_t) = - \Phi(d) + \phi(d) d_x\{K_le^{-r(T-t)}-\tilde{X}^{\hat{\pi}}_t\} + \dfrac{\theta \sqrt{T-t}}{\alpha} e^{-r(T-t)}\phi(d) (-d) d_x \newline  = -\Phi(d)+ \phi(d) d_x (K_le^{-r(T-t)}-\tilde{X}^{\hat{\pi}}_t - e^{-r(T-t)} ( K_l-  \tilde{X}^{\hat{\pi}}_{t}e^{r(T-t)}) =  -\Phi(d), \newline p_{xx} = \dfrac{d^2}{d(\tilde{X}^{\hat{\pi}}_t)^2} p(t, \tilde{X}^{\hat{\pi}}_t) = \phi(d)\dfrac{\alpha}{\theta \sqrt{T-t}} e^{r(T-t)}$ and\newline
 $p_t  = \dfrac{d}{dt}p(t, \tilde{X}^{\hat{\pi}}_t)= \phi(d) d_t (e^{-r(T-t)}K_l- \tilde{X}^{\hat{\pi}}_t) + \Phi(d)rK_le^{-r(T-t)}  
 $\newline$ + \dfrac{\theta\phi(d)}{\alpha}(\sqrt{T-t}r e^{-r(T-t)} -  \dfrac{1}{2 \sqrt{T-t}}e^{-r(T-t)})  +\dfrac{\theta}{\alpha} \sqrt{T-t} e^{-r(T-t)} \phi(d) (-d)d_t$ \newline which simplifies (first and last term cancel) to  \newline $p_t = \Phi(d)rK_le^{-r(T-t)} + \dfrac{\theta\phi(d)}{\alpha}e^{-r(T-t)}(\sqrt{T-t}r  - \dfrac{1}{2 \sqrt{T-t}}) $
.\newline Note that  $p(t, \tilde{X}^{\hat{\pi}}_t)$ is twice differentiable and  $\tilde{X}^{\hat{\pi}}_t$ satisfies the differential equation from the proof of (2.15), hence is an Ito drift diffusion process. So, Ito's Lemma can be applied, and \newline $dp(t, \tilde{X}^{\hat{\pi}}_t) = \{\Phi(d)rK_l e^{-r(T-t)} + \dfrac{\theta}{\alpha} \phi(d)e^{-r(T-t)}r \sqrt{T-t} + (r \tilde{X}^{\hat{\pi}}_t + \dfrac{\theta^2}{\alpha}e^{-r(T-t)})(-\Phi(\theta))\}dt + \dfrac{\theta}{\alpha}e^{-r(T-t)}(-\Phi(d))dW_t$. \newline Then $p(t, \tilde{X}^{\hat{\pi}}_t)$ satisfies the wealth equation $dp(t, \tilde{X}^{\hat{\pi}}_t) = (r + \tilde{\pi}_p \theta \sigma) p(t, \tilde{X}^{\hat{\pi}}_t)dt + \sigma \tilde{\pi}_p p(t, \tilde{X}^{\hat{\pi}}_t) dW_t$ iff 
 $\dfrac{\theta}{\alpha}e^{-r(T-t)}(-\Phi(d)) =   \sigma \tilde{\pi}_p p(t, \tilde{X}^{\hat\pi}_t)$, which is the case for $\tilde{\pi}_p(t, \tilde{X}^{\hat{\pi}}_t) = \dfrac{-\Phi(d)}{\sigma\sqrt{T-t}(\Phi(d)d + \phi(d))}$. \newline \newline
 \end{proof} 
 Combining the optimal strategy of the modified unconstrained problem and of the 
 replicating portfolio gives the overall strategy.\newline
 \begin{proposition}
  An optimal strategy for Problem 2 is given by the amount to be invested in the risky asset at t
 \begin{equation}\hat{\pi}_l(t,\tilde{X}^{\hat{\pi}}_t)= \dfrac{\theta}{\alpha \sigma}e^{-r(T-t)} + p(t,\tilde{X}^{\hat{\pi}}_t)\dfrac{-\Phi(d_l)}{\sigma \sqrt{T-t}(\Phi(d_l)d_l + \phi(d_l))} \end{equation} 
where $\tilde{X}^{\hat{\pi}}_t$ is an optimal wealth process with initial wealth $\tilde{X}_0$ (the shadow value from Proposition 4) and $d_l$ \text{as in Proposition 6}. \end{proposition}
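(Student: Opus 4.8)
The plan is to construct an explicit admissible strategy whose terminal wealth equals the optimal terminal wealth identified in Proposition 4, and then simply read off the amount it invests in the risky asset. By Proposition 4, the optimal terminal wealth decomposes as $X^{\hat{\pi}_l}_T = \tilde{X}^{\hat{\pi}}_T + \max\{K_l - \tilde{X}^{\hat{\pi}}_T, 0\}$, i.e.\ as the terminal value of the unconstrained optimal (shadow) process started at $\tilde{X}_0$, plus the payoff of a put written on that shadow process. This directly suggests holding two replicating sub-portfolios simultaneously: one that runs the optimal unconstrained strategy $\hat{\pi}$ on the shadow wealth $\tilde{X}^{\hat{\pi}}_t$ (delivering $\tilde{X}^{\hat{\pi}}_T$ at $T$), and one that replicates the put through the portfolio $\tilde{\pi}_p$ of Proposition 6 (delivering $\max\{K_l - \tilde{X}^{\hat{\pi}}_T, 0\}$ at $T$).

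First I would check that the initial budget matches. The combined portfolio has time-$0$ value $\tilde{X}_0 + p(0, \tilde{X}_0)$. Using the $\mathbb{Q}$-martingale property of the discounted shadow process established inside the proof of Proposition 5 (so that $\mathbb{E}[H_T \tilde{X}^{\hat{\pi}}_T] = \tilde{X}_0$), together with $\mathbb{E}[H_T \max\{K_l - \tilde{X}^{\hat{\pi}}_T, 0\}] = e^{-rT}\mathbb{E}_{\mathbb{Q}}[\max\{K_l - \tilde{X}^{\hat{\pi}}_T, 0\}] = p(0, \tilde{X}_0)$, the total equals $\mathbb{E}[H_T X^{\hat{\pi}_l}_T]$, which is exactly $X_0$ by the defining property of $\tilde{X}_0$ (and of $y$) in Proposition 4. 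Hence the combined wealth process $X^{\hat{\pi}_l}_t = \tilde{X}^{\hat{\pi}}_t + p(t, \tilde{X}^{\hat{\pi}}_t)$ starts at $X_0$ and reaches the required terminal value with $X^{\hat{\pi}_l}_T \ge K_l$ almost surely.

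Next I would add the risky-asset holdings. Rewriting the wealth dynamics (2.3) in terms of the absolute amount $a$ invested in the stock gives $dX = [rX + a\theta\sigma]\,dt + a\sigma\,dW_t$, which is linear in $(X,a)$; consequently, if $\tilde{X}^{\hat{\pi}}_t$ invests the deterministic amount $\frac{\theta}{\alpha\sigma}e^{-r(T-t)}$ (the absolute amount from the note following Proposition 2) and $p(t, \tilde{X}^{\hat{\pi}}_t)$ invests the amount $\tilde{\pi}_p\, p(t, \tilde{X}^{\hat{\pi}}_t)$ (since $\tilde{\pi}_p$ is the \emph{proportion} of the put value held in the stock, by the wealth equation in Proposition 6), then their sum $X^{\hat{\pi}_l}_t$ invests exactly the sum of these two amounts. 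This yields precisely the claimed expression for $\hat{\pi}_l(t, \tilde{X}^{\hat{\pi}}_t)$.

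Finally I would confirm optimality. Since the constructed strategy is self-financing, satisfies $X^{\hat{\pi}_l}_0 = X_0$, and produces the terminal wealth that Proposition 4 identifies as optimal for Problem 2, it attains the supremum subject to $X^{\hat{\pi}_l}_T \ge K_l$ and is therefore optimal. The main obstacle I expect is the careful bookkeeping of the additivity step: one must express both sub-portfolios in the same units (absolute amount in the stock) before adding, preserve the square-integrability condition in the definition of $\mathcal{A}$, and interpret the decomposition as one combined self-financing portfolio rather than two separate accounts — it is the linearity of the wealth SDE that makes this splitting legitimate.
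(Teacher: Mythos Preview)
Your proposal is correct and follows essentially the same approach as the paper: define the strategy as the sum of the unconstrained optimal amount and the put-replicating amount, observe that by linearity the resulting wealth is $\tilde{X}^{\hat{\pi}}_t + p(t,\tilde{X}^{\hat{\pi}}_t)$, and conclude optimality because at $T$ this equals the optimal terminal wealth of Proposition 4. Your write-up is in fact more careful than the paper's, since you explicitly verify the initial budget condition $\tilde{X}_0 + p(0,\tilde{X}_0) = X_0$ via the state price density and spell out the linearity of the wealth SDE that justifies summing the two sub-portfolios; the paper simply asserts these steps in one line.
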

We will sometimes refer to $\tilde{X}^{\hat{\pi}}_t$ as the \textit{shadow wealth process}.
\begin{proof} 
Note that by definition of $\hat{\pi}_l(t,\tilde{X}^{\hat{\pi}}_t) = \hat{\pi}_t \tilde{X}^{\hat{\pi}}_t + \tilde{\pi}_p(t,\tilde{X}^{\hat{\pi}}_t) p(t,\tilde{X}^{\hat{\pi}}_t) $  and (2.3), the resulting wealth process is 
$X^{\hat{\pi}_l}_t = \tilde{X}^{\hat{\pi}}_t + p(t, \tilde{X}^{\pi}_t)$.  \newline
In particular, at t=T: \newline  
$X^{\pi_l}_T = \tilde{X}^{\hat{\pi}}_T + \max\{K- \tilde{X}^{\hat{\pi}}_T ,0\}$, so it yields the optimal terminal value under lower constraint K, hence it is an optimal strategy.  \end{proof}
\subsection{Analysis of the $K_l$-Strategy}
In order to understand the consequences of a lower constraint on terminal wealth on the optimal strategy and the performance of wealth, we will first look at the formal structure of $\hat{\pi}_l$. From this, we can deduce that the investement is generally lower than the one required by the optimal unconstained strategy, and goes to zero when nearing terminal time. Also, it is most sensitive for values of shadow wealth around $K_l e^{-r(T-t)}$. \newline
We then look at the qualitative behaviour in different scenarios, where we can observe that the $K_l$-strategy is more successful than the unconstrained one for a continued decrease of the stock value, but underperforms it for increasing stock values, in which case the difference is amplified by the expected return rate.

\subsubsection{First Observations}
To start, consider the formula for the $K_l$-strategy given by Proposition 7. Note that the first term, $\hat{\pi}_t\tilde{X}^{\hat{\pi}}_t$ is identical to the optimal unconstrained strategy, since the amount invested is independent of the initial wealth (so, substituting $X_0$ by the shadow value $\tilde{X_0}$ has no effect on the strategy). Hence, the difference of this new strategy to the previous optimal one is determined by the second term, $p(t,\tilde{X}^{\hat{\pi}}_t$)$\tilde{\pi}_p$.\newline 
As $p(t,\tilde{X}^{\hat{\pi}}_t$) is the price of a put option, it is always positive. Then, $p(t,\tilde{X}^{\hat{\pi}}_t$)$\tilde{\pi}_p$ > 0 is the case if and only if $f(d) := \Phi(d)d+ \phi(d)  <$ 0 for a suitable $d$. But $f$ is a strictly positive function, therefore the optimal strategy under lower constraint $K_l$ requires always less investment than the optimal unconstrained strategy. \newline
The parameter time does have an effect on the strategy, as the formula suggests, and as can be seen in the example of Figure 9. Again, we focus on the difference $-\tilde{\pi}_p p(t,\tilde{X}^{\hat{\pi}})$.
\begin{figure}[H]\begin{center}
\includegraphics[width=100mm]{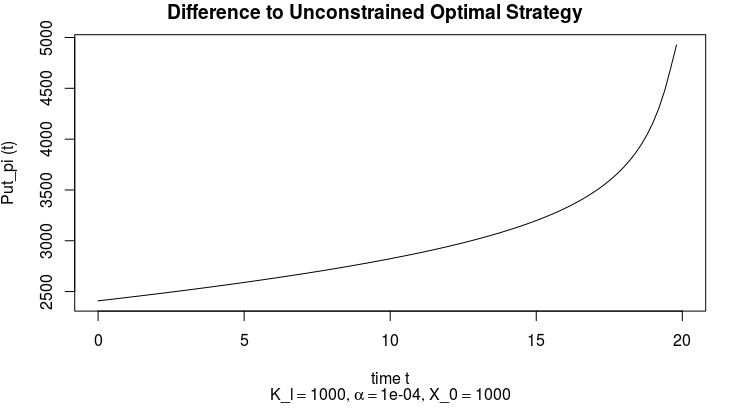}\end{center}
\caption{Difference  $\hat{\pi}$-$\hat{\pi}_l$ over time (amount invested)}
\end{figure} 
The closer we get to the terminal time T (here 20 years), the greater the difference to the optimal strategy becomes, and so the smaller the optimal constrained strategy \newline $\hat{\pi}_l$ = $\hat{\pi} - (-\tilde{\pi}_p p(t,\tilde{X}^{\hat{\pi}}))$ gets and so the less is invested in the risky stock.\newline Since we are rather interested in the maximum difference to the optimal strategy, we now fix t=19 and look at the behaviour of $-\tilde{\pi}_p p(t,\tilde{X}^{\hat{\pi}})$ with respect to the shadow wealth.
 \begin{figure}[H]\begin{center}
\includegraphics[width=100mm]{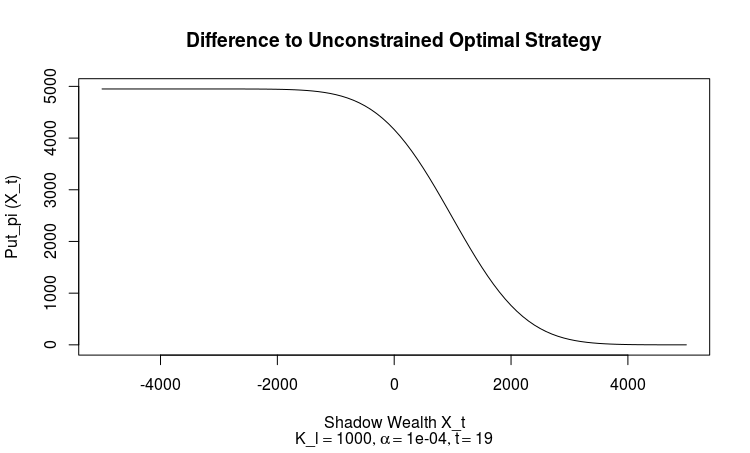}\end{center}
\caption{Difference $\hat{\pi}$-$\hat{\pi}_l$ in function of current shadow wealth (amount invested at t=19)}
\end{figure} 
Here, one can see that for a good performance of the stock (i.e. one that leads to a shadow wealth > 4000), the part of the strategy related to the put-option is very close to zero. So, the constrained and unconstrained strategies are practically identical. However, these high values of wealth are presumably rarely reached, as we will start the process with very small or negative initial shadow wealth. \newline On the other hand, a bad performance implies an investment amount for $\tilde{\pi}_p p(t,\tilde{X}^{\hat{\pi}}_t)$ close to the one required by the optimal unconstrained strategy. Overall, in those cases, the investment for the constrained wealth drops to values close to zero.\newline  Note, that between those two scenarios, the strategy is very sensitive to changes of shadow wealth, as indicated by the steep slope in Figure 10. As a point of orientation, fix \newline
$\tilde{X}^{\hat{\pi}}_t = K_l e^{-r(T-t)}$. By the formula of the strategy, one can easily check that this implies ${d_l = 0}$ and therefore $\hat{\pi}_l = \dfrac{1}{2} \hat{\pi}$, hence the midpoint between the two scenarios is reached. This is why we would generally expect the constrained strategy to adapt fast to changes and be fluctuating, when the shadow wealth is near $K_l e^{-r(T-t)}$.\newline Also note that $ p(t,X^{\hat{\pi}}_t) \tilde{\pi}_p$ is inversely proportional to the variance of the stock. Accordingly, the difference between optimal constrained strategy and optimal unconstrained strategy is reduced for higher $\sigma$. \newline To see how the strategy behaves in practice and how it affects the (terminal) wealth, we look at three different scenarios.
\subsubsection{Different Scenarios} 
\textbf{Scenario 1: $X^{\hat{\pi}}_T > K_l$, $X^{\pi_l}_T =  K_l$ } \begin{figure} [H] 
\begin{minipage}{0.62\linewidth}
\includegraphics[width=72mm]{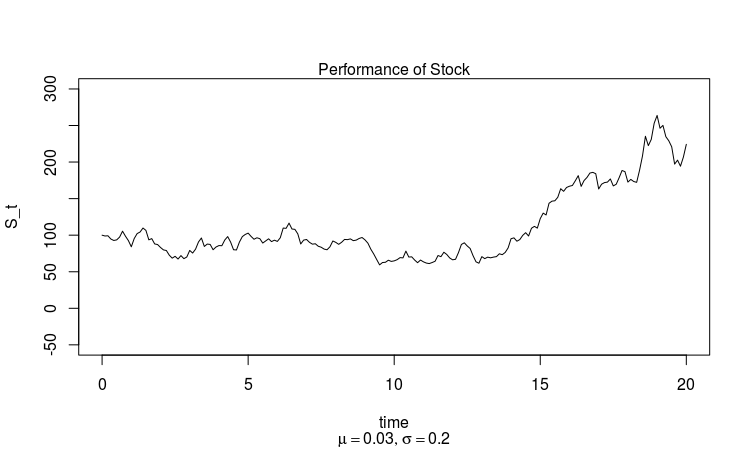} \end{minipage} \begin{minipage}{0.62\linewidth} \includegraphics[width=72mm]{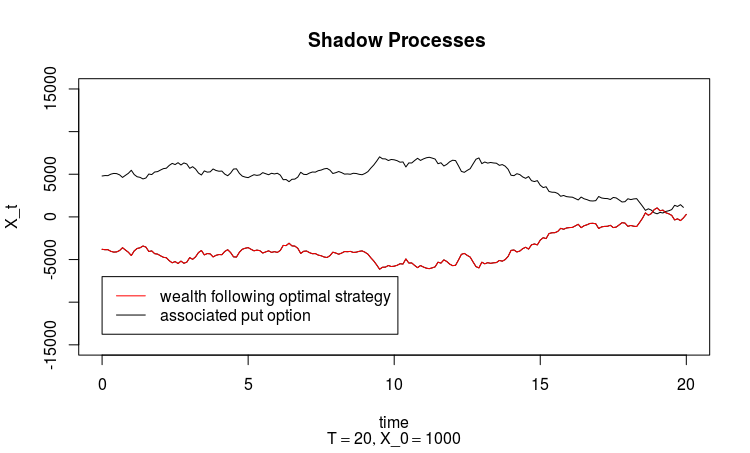}\end{minipage}  \begin{minipage}{0.62\linewidth}\includegraphics[width=72mm]{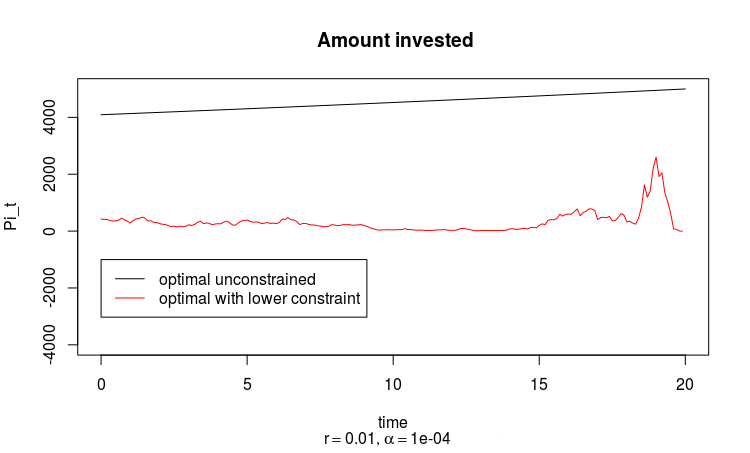} \end{minipage} \begin{minipage}{0.62\linewidth}\includegraphics[width=72mm]{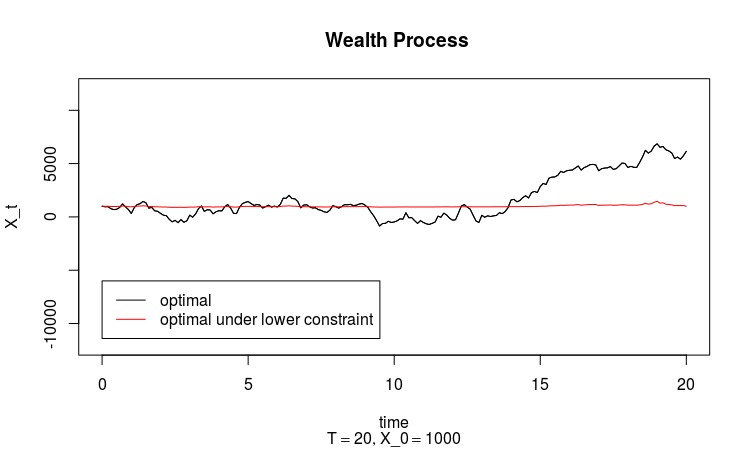} \end{minipage}
\caption{Higher shadow wealth leads to higher sensitivity of $\hat{\pi}_l$.}
\end{figure}
In this case, we can see an increase of the stock value from t=15 on, with a peak around t=18. This increase leads to an increase of the shadow wealth, which implies a decrease of the price of the associated put option. Following the optimal $K_l$-constrained strategy, the investment goes up. Since the values of shadow  wealth are in the sensitive zone (around 0 from t=15 on, see Fig.11), the strategy shows a clear peak and is very volatile. However, the amount invested remains at a rather low level both compared to the optimal strategy, and also proportionally to the overall wealth. Also, the 'jumps' in the investment are a reflection of past movements of the stock. As these big movements are often followed by smaller movements of stock (e.g. at t=18.6 we have an increase of +30 vs. at the next time step we have a decrease of -10 of value of stock), the consequences on the optimal lower constrained wealth are limited.  These two effects could be the reasons why the constrained optimal wealth process is barely affected by the volatility of the stock.
Note that we set ${K_l=  X_0}$=1'000, which is rather high, so the upside and downside potential of the new strategy is limited and most of the wealth is invested in the riskfree bond. 
\newline

\textbf{Scenario 2. $X^{\hat{\pi}}_T < K_l$, $X^{\pi_l}_T =  K_l$ }
\begin{figure} [H] 
\begin{minipage}{0.62\linewidth}
\includegraphics[width=72mm]{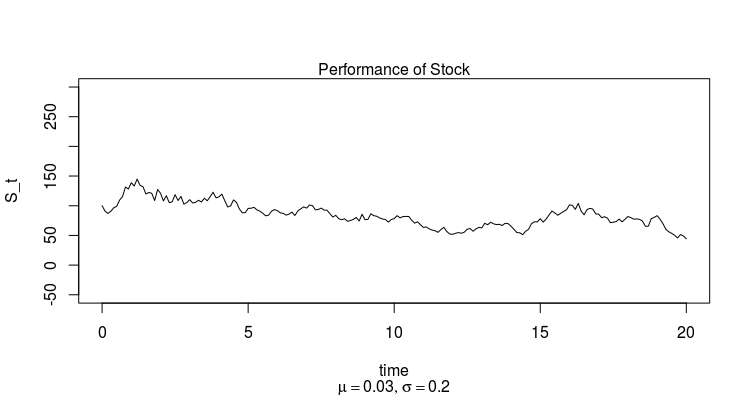} \end{minipage} \begin{minipage}{0.62\linewidth} \includegraphics[width=72mm]{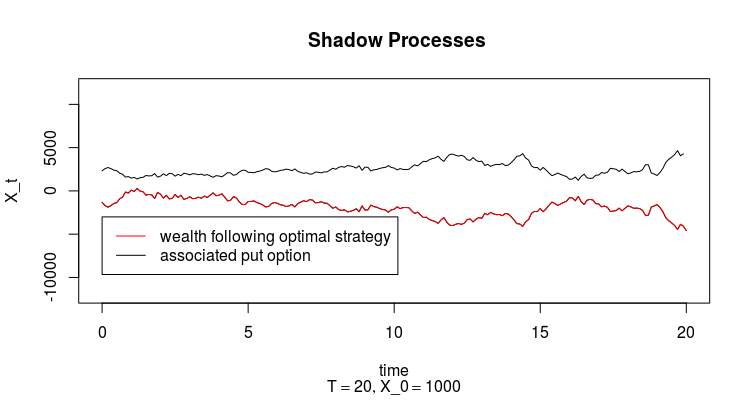}\end{minipage}  \begin{minipage}{0.62\linewidth}\includegraphics[width=72mm]{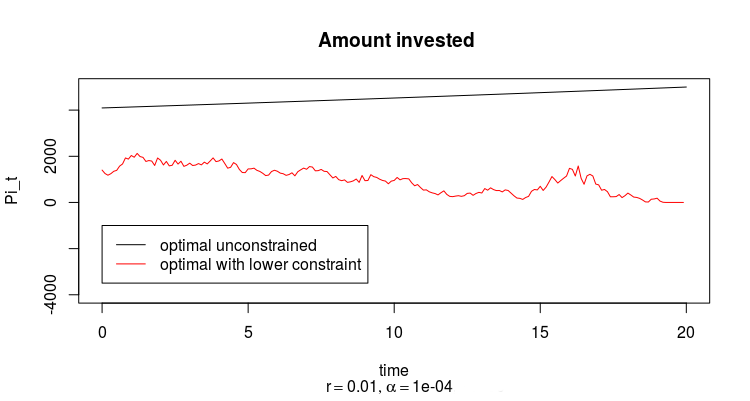} \end{minipage} \begin{minipage}{0.62\linewidth}\includegraphics[width=72mm]{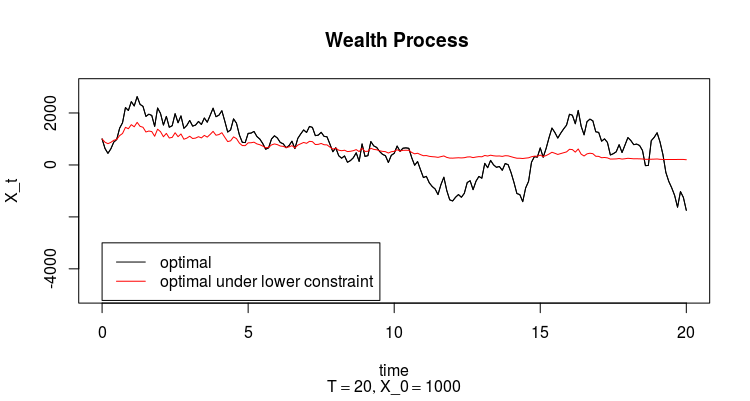} \end{minipage}
\caption{For a decrease in stock value, $\hat{\pi}_l$ goes to zero.}
\end{figure}
In the second scenario, one can see the case in which the constrained strategy outperforms the unconstrained one. Here, we set ${K_l=200}$, to better observe how the new strategy behaves. As the optimal unconstrained strategy consists of investing a deterministic amount, independently from the actual value of wealth (or stock), it fails in the scenario of a decreasing value of stock. This is where the advantage of the constrained strategy kicks in: it reacts to a decrease in wealth by reducing the amount invested. If the downward trend of the stock is continued, this strategy is therefore more successful. 
\newpage
\textbf{Scenario 3:  $X^{\pi_l}_T >  K_l$ }
\begin{figure}[H] 
\begin{minipage}{0.62\linewidth}
\includegraphics[width=72mm]{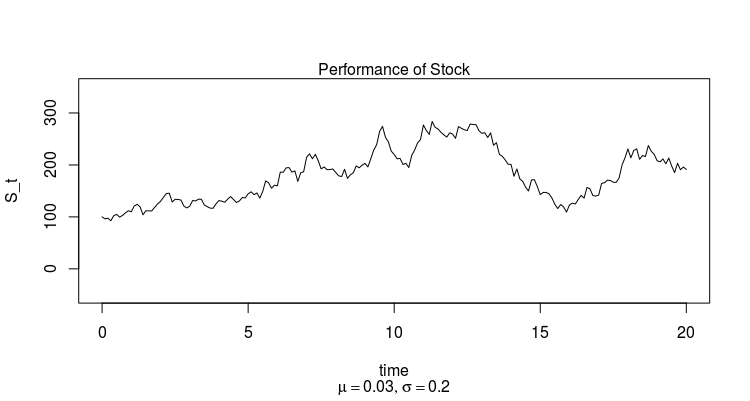} \end{minipage} \begin{minipage}{0.62\linewidth} \includegraphics[width=72mm]{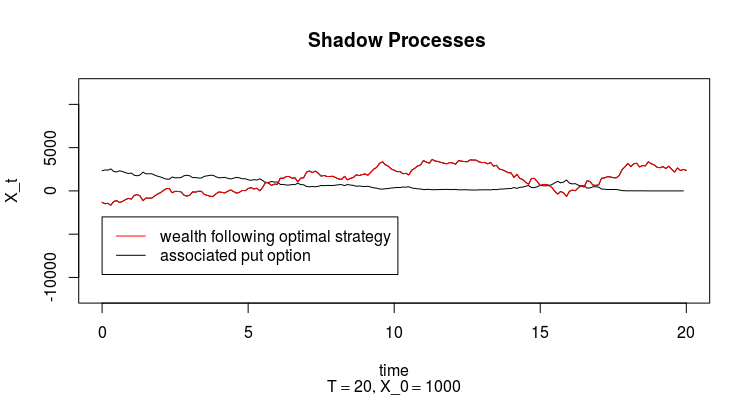}\end{minipage}  \begin{minipage}{0.62\linewidth}\includegraphics[width=72mm]{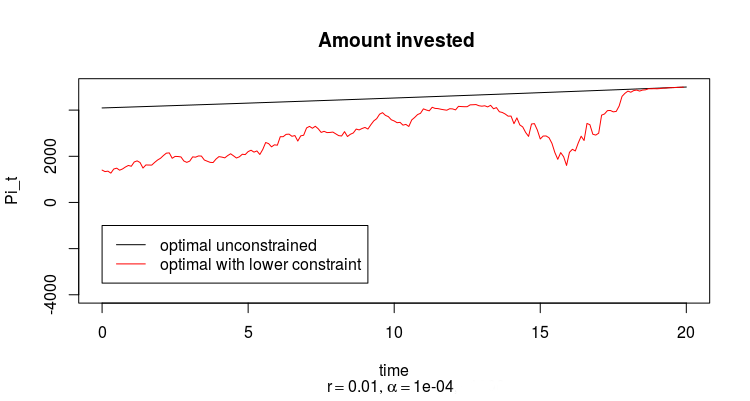} \end{minipage} \begin{minipage}{0.62\linewidth}\includegraphics[width=72mm]{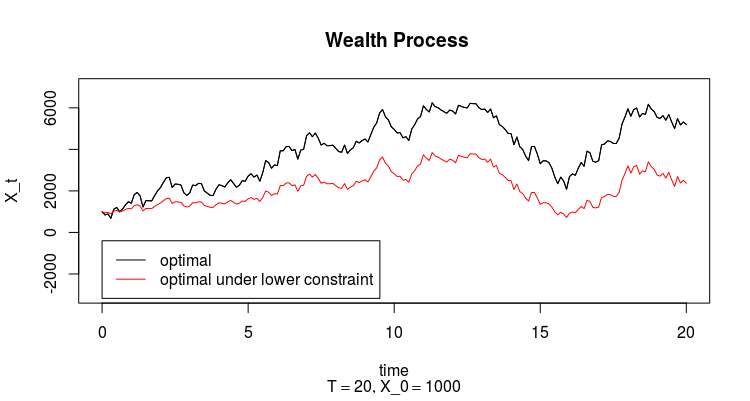} \end{minipage}
\caption{If the stock performs well, the first years are more critical to the missed upside potential by the constrained strategy $\hat{\pi}_l$.}
\end{figure}
The last scenario shows what happens when we have a continued upward trend. As with higher shadow wealth the strategies converge, the investment processes are nearly the same, so the wealth performs in a parallel way. The gap between the resulting terminal wealths, that is a realisation of the missed upside potential, is therefore characterised by what happens for lower shadow wealth values. Since the change of wealth depends on the amount invested multiplied by the change in value of the stock, one should consider both the initial difference between the strategies and the increase of the stock. Both are highly dependant on the expected rate $\mu$, while the other parameters can be neglected. For example, in the case above, the initial difference between the investment is 2'692, which increases for $K_l$>200 (to 3'668 for $K_l$ =1'000), decreases for other reasonable values of $\alpha$ >0.0001, \textit{r} >0.01, $\sigma$>0.2, but for $\mu$=0.06 it is 8'486. This difference in investments is then expected to be multplied by the same high $\mu$, which results in an even higher difference of wealth. This effect can also be observed in Fig.13, where, even though the difference of investments is decreasing, the gap between the corresponding wealth is still growing until circa t=10. 
\newline \newline
Note that only for a continued positive trend of the stock (as we have seen in the last scenario) the optimal constrained strategy results in a terminal wealth larger that $K_l$. For the more frequent scenarios, the terminal wealth is exactly $K_l$. We therefore would expect a probability mass point at $K_l$ in the terminal wealth distribution.
\subsection{Brief Analysis of the $K_l$-Strategy under a Restriction on Investment}
In analogy to the optimal strategy without lower bound, we restrict the investment to be maximum 100$\%$ of the actual wealth.\newline 
\begin{modification}
Let the modified strategy $\hat{\pi}_{l,m}$ be defined for $(t,X^{\hat{\pi}_{l,m}}_t) \in$ [0,T]$\times \mathbb{R}$ by 
 \begin{center}
$\hat{\pi}_{l,m}(t,X^{\hat{\pi}_{l,m}}_t)$ = $\begin{cases}{\hat{\pi}_l (t,X^{\hat{\pi}_{l,m}}_t)}&\text{if $X^{\hat{\pi}_{l,m}}_t \geq \hat{\pi}_l(t,X^{\hat{\pi}_{l,m}}_t) X^{\hat{\pi}_{l,m}}_t$}\\{1}&\text{if $X^{\hat{\pi}_{l,m}}_t <  \hat{\pi}_l(t,X^{\hat{\pi}_{l,m}}_t)X^{\hat{\pi}_{l,m}}_t$}\end{cases}$  \end{center}
 
and $\hat{\pi}_l(t,X_t)$ is the optimal $K_l$-constrained strategy from Proposition 7.   
\end{modification}
Since the investment in the risky stock required by the $K_l$-restricted strategy is always smaller than the one of the unconstrained strategy, the absolute effect of restriction of the investment will generally be lower. But since the new strategy adapts in function of the shadow wealth, which in turn depends on the stock performance, we can observe that a lower value of wealth is connected to a lower investment amount. The difference between this investment-restricted $K_l$-strategy and the unrestricted one in terms of the amount invested will therefore depend less on the movements of stock. However, as seen before, once the investment is limited to 100$\%$ of wealth, it falls below the optimal one and so the wealth is likely to grow less since the process can not fully profit from the upside potential. This results in a higher chance of staying under the required optimal investment amount, so the wealth process is more likely to be 'trapped' at low values.  The earlier in the process it happens, the more effect it will have, so we will look on the starting conditions more closely. It is not surprising that we will see a connection between the resulting terminal wealth and the initial setting of wealth and optimal investment. 
\newline \newline
Two parameters can be identified to have the most impact on the performance of the strategies and the restriction: the variance $\sigma$ and the lower bound $K_l$. \newline
Generally speaking, the standard situation is that the investment required at t=0 is already exceeding the initial wealth $X_0$. This can not easily be seen and  has to do with the structure of the $K_l$-constrained strategy, being that $X_0$ is the basis for the calculation of the shadow value $\tilde{X_0}$, which in turn determines the initial investment. In the diagrams of Figure 14 we see the initial investment as a function of $X_0$ and the line y= $X_0$ for comparison (having set \textit{r}=0.01, $\mu$=0.03, \textit{T} =20) for different combinations of those two parameters. The corresponding wealth processes (for a sample of 30) for a fixed $X_0$= 1000 are also displayed. 
\begin{figure}[H]
\begin{minipage}{0.27\linewidth} 
\includegraphics[width=40mm]{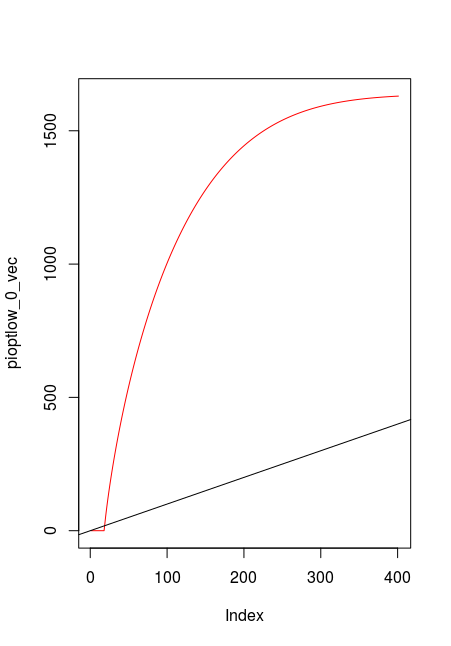} \end{minipage} \begin{minipage}{0.55\linewidth} 
\includegraphics[width=80mm]{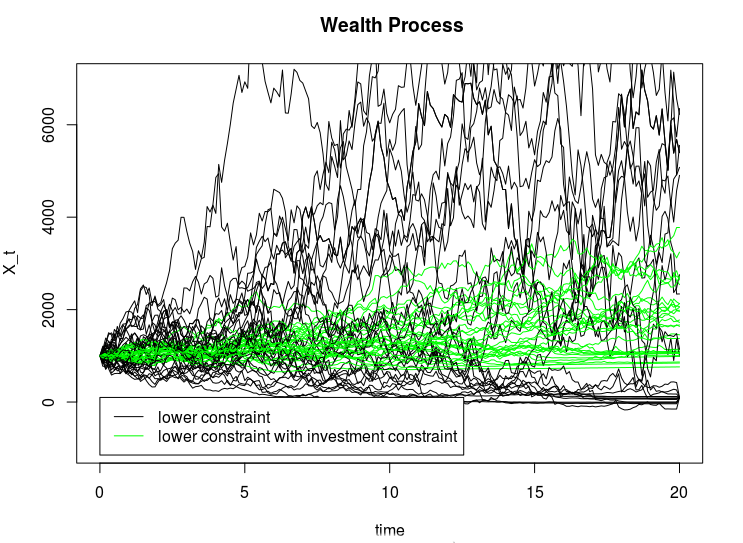}\end{minipage}\begin{minipage}{0.1\textwidth}{\small{$K_l$=100, $\sigma$=0.1}} \end{minipage}\newline
\begin{minipage}{0.27\linewidth}
\includegraphics[width=40mm]{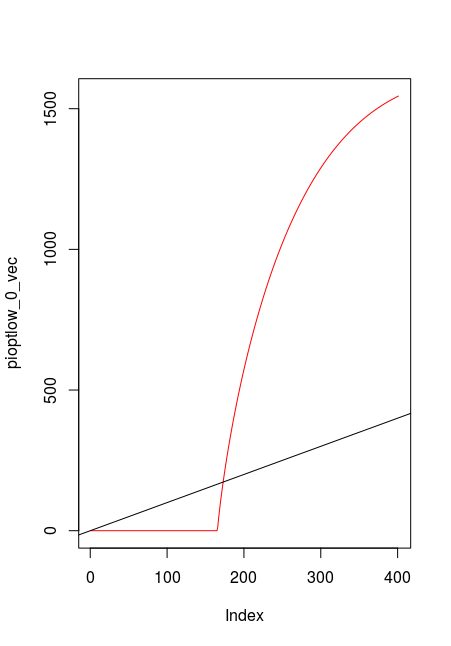} \end{minipage} \begin{minipage}{0.55\linewidth}
\includegraphics[width=80mm]{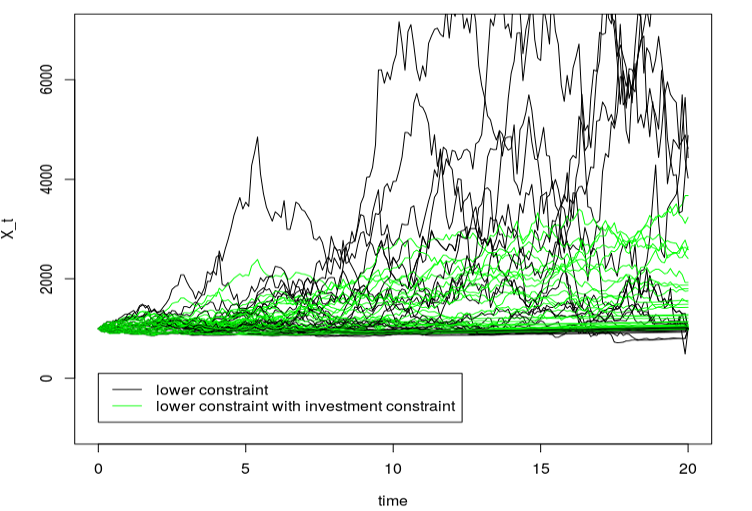} \end{minipage} \begin{minipage}{0.1\textwidth}{\small{$K_l$=1000, $\sigma$=0.1}} \end{minipage}\newline
\begin{minipage}{0.27\linewidth} 
\includegraphics[width=40mm]{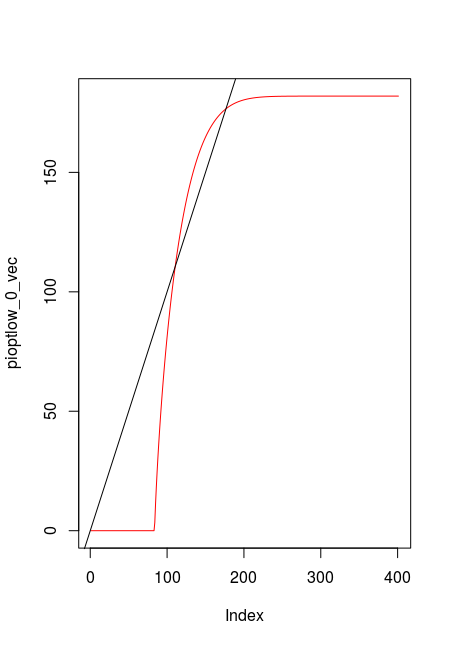} \end{minipage} 
\begin{minipage}{0.55\linewidth}
\includegraphics[width=80mm]{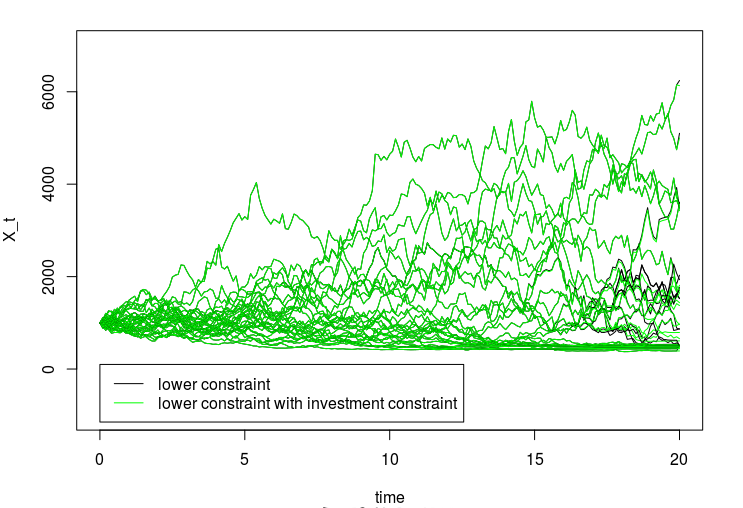} \end{minipage} \begin{minipage}{0.1\textwidth}{\small{$K_l$=500, $\sigma$=0.3} }\end{minipage}\newline
\caption{Initial investment as a function of $X_0$, and wealth process for fixed $X_0$ }
\end{figure}

One can observe that for small $K_l$ the difference between those lines is rather big (except for small values $X_0$, which mostly can be excluded as we had set $X_0$ >$K_le^{-rT}$ $\approx$ 80$\%$ $K_l$), and so the actual investment will be much smaller for the strategy with the investment constraint. This will lead to fewer movements in the wealth, hence the terminal wealth will be more concentrated. For a downward trend of stocks, the investment will be close to 0, which will be an advantage, as the investment-constrained strategy will yield a concentration of terminal wealth around values higher than $K_l$ (where the wealth of the strategy without investment-constraint would end up in those cases). Also note, that even if the wealth processes are identical for some time, a decrease in value of the stock will reduce the investment amount faster for the investment-constrained strategy, so possible loss is minimised. At the same time, of course, for an upwards trend of stocks, the strategy without constraints on investment benefits from higher investments and therefore higher returns.\newline  
Another relevant parameter for the modified strategy seems to be the market volatility $\sigma$. For example, using $\sigma$=0.3 instead of $\sigma$=0.1, generally leads to an initial investment lower than the initial wealth, hence less difference between the investment-restricted and unrestricted strategies. 
\subsection{Comparison of Terminal Wealth Distributions}
After the qualitative observations, we will now quantify the resulting terminal wealth distribution. We are particularly interested in the difference to the distribution of the unconstrained optimal terminal wealth and the impact of the investment constraint. \newline
Let us first look at the theoretical distribution of the optimal terminal wealth under the lower constraint, and then compare it to the empirical results. 
\subsubsection{Theoretical vs. Empirical Distribution }
We begin with the derivation of the theoretical distribution of terminal wealth for an investor that follows the optimal unconstrained strategy from the last chapter. From Proposition 3, we have \newline
$\mathbb{P} [X^{\hat{\pi}}_T \le x] = \mathbb{P}[X^{\hat{\pi}}_0 e^{rT}+T\dfrac{\sigma^2}{\alpha} + \dfrac{\theta}{\alpha} W_T \le x]$
\newline = $\mathbb{P}[W_T \le (x-X_0 e^{rT}-T\dfrac{\theta^2}{\alpha})\dfrac{\alpha}{\theta}] \newline = \Phi(d_T)$, where $d_T := (x-X_0 e^{rT}-T\dfrac{\theta^2}{\alpha})\dfrac{\alpha}{\theta \sqrt{T}}$. \newline Now consider the the optimal strategy, where the terminal wealth is subject to a  lower constraint. \newline
$\mathbb{P}[X^{\hat{\pi}_l}_T \le x]\newline  =  \mathbb{P}[\tilde{X}^{\hat{\pi}}_T + \max\{K_l- \tilde{X}^{\hat{\pi}}_T ,0\}\le x] $ by Proposition 4,$ \newline = \mathbb{P}[K_l \le x |\tilde{X}^{\hat{\pi}}_T < K_l ] \mathbb{P}[\tilde{X}^{\hat{\pi}}_T < K_l] + \mathbb{P}[X^{\hat{\pi}_l}_T \le x |\tilde{X}^{\hat{\pi}}_T \geq K_l ]P[\tilde{X}^{\hat{\pi}}_T \geq K_l] \newline
= \mathbb{P}[K_l \le x ] \mathbb{P}[\tilde{X}^{\hat{\pi}}_T < K_l] + \mathbb{P}[ K_l \leq  X^{\hat{\pi}}_T \le x]$ \newline
=  $\begin{cases}{\mathbb{P}[\tilde{X}^{\hat{\pi}}_T < x]}&\text{if $x \geq K_l $}\\{0 }&\text{if $x < K_l.$}\end{cases} 
\newline \newline
$So, the cumulative distribution of the optimal terminal wealth under the $K_l$-constrained strategy has a probability mass point at $K_l$ and follows the distribution of the terminal wealth of an unconstrained optimal strategy with a shadow value for initial wealth for $x > K_l$. This is expressed as a jump from zero to a positive value at $K_l$ in the CDF, as it is shown below for a standard example ($X_0$= 1'000, $K_l$=800, $\alpha$=0.0001, r=0.01, $\mu$=0.03, $\sigma$=0.1 and T = 20, samplesize 1'000) . \newline
\begin{figure} [H]
 \includegraphics[width=70mm]{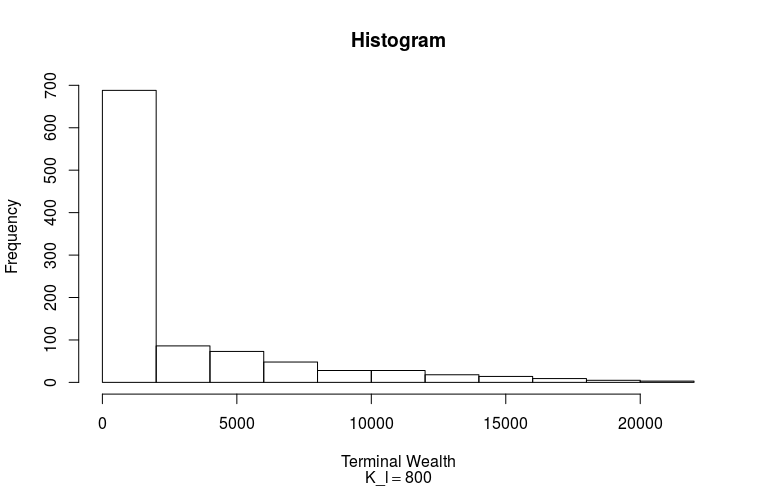} \includegraphics[width=70mm]{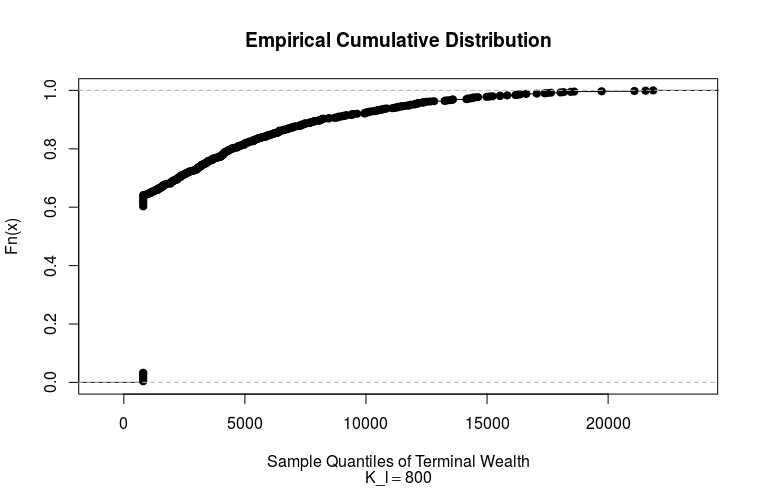} 
 \caption{Empirical distribution of terminal wealth for $\hat{\pi}_l$}
\end{figure}
In this example, the probability, that the optimal terminal wealth lands exactly on the lower constraint, is circa 60$\%$ and with higher $K_l$ this probability increases even more. \newline 
Consequently, as $K_l$ $\rightarrow$ - $\infty$, the jump is shifted to the left, and the distribution converges to the distribution of the shadow terminal wealth, which is normally distributed. At the same time, the shadow value $\tilde{X}_0$ converges to $X_0$, so the shadow wealth process converges to the optimal unconstrained process. This can also be quantitatively observed in the theoretical quantiles below, where the unconstrained strategy is highlighted as a reference. 
\begin{table}[H]

\begin{center} \begin{tabular}{|l| c |c|c|c|c|}
\hline
 $K_l$ & $\tilde{X_0}$ & $\mathcal{Q}_{0.25}$($X_0$) &   $\mathcal{Q}_{0.50}$($X_0$) &  $\mathcal{Q}_{0.75}$($X_0$) &  $\mathcal{Q}_{0.95}$($X_0$)   \\ \hline 
 \cellcolor{OldLace} -$\infty$  &   \cellcolor{OldLace} -&  \cellcolor{OldLace} 3'188.6 &  \cellcolor{OldLace} 9'221.4 &  \cellcolor{OldLace} 15'254.2 & \cellcolor{OldLace} 23'933.4  \\\hline 
 \cellcolor{LightGrey}  1'000 & -10'701.41 &\cellcolor{LightGrey}  1'000 &\cellcolor{LightGrey}  1'000 & \cellcolor{LightGrey} 1'000 &9'641.3  \\
\cellcolor{LightGrey}  -1'000 &-3'377.14  &\cellcolor{LightGrey}  -1'000 & 3'875.2 & 9'908.0  &18'587.2  \\
 -30'000 &999.5621& 3'188.0 & 9'220.9 & 15'253.7 &23'932.9 \\ \hline \end{tabular} \end{center}
 \caption{Theoretical quantiles of $X^{\hat{\pi}_l}$ for different $K_l$} 
\end{table}

Before we look at the impact the investment-constraint has on the terminal $K_l$-bounded wealth distribution, we want to get a sense of the errors between the theoretical distribution and the empirical one (i.e. the one we obtain from the simulation implemented in the code). Here, there are mainly two sources of error: One being the sample size that allows only an approximation of the quantiles, the other one being the step width h, which describes the frequency of re-balancing the portfolio. The latter does also play a role in real-life situations, as it is unrealistic to expect continuous trading (as assumed in the derivation of the theoretical strategy) because of technical, time and cost constraints.\newline
To gain an idea of the dimension of the error, we consider a standard case ($X_0$ = 1'000, $K_l$=100, T = 20, $\alpha$ =0.0001, $\mu$ = 0.03, \textit{r} = 0.01, $\sigma$ = 0.1)  with a fixed h = 0.1 and different sample sizes. As a reference, the theoretical quantiles of $X^{\hat{\pi}_l}_T$ are also listed in terms of total return.

\begin{table}[H]
 
\begin{center}\begin{tabular}{|l| c |c|c|c|}
\hline
 Samplesize s & $\mathcal{Q}_{0.25} (X^{\hat{\pi}_l}_T)$ &   $\mathcal{Q}_{0.50} (X^{\hat{\pi}_l}_T)$ &   $\mathcal{Q}_{0.75} (X^{\hat{\pi}_l}_T)$ &   $\mathcal{Q}_{0.95} (X^{\hat{\pi}_l}_T)$  \\ \hline 
 Theoretical return  &  10$\%$ &   116$\%$ &  719$\%$& 1'587$\%$ \\\hline 
s = 1'000 & 0$\%$ &  17 $\%$& 0 $\%$ & -1 $\%$  \\
s = 3'000  & 0$\%$  &  5$\%$& -2$\%$  & 1 $\%$   \\
s = 5'000 & 0$\%$& 12$\%$ & -1 $\%$ & 0$\%$\\ \hline \end{tabular} \end{center}
 \caption{Deviation of empirical from theoretical quantiles of $X^{\hat{\pi}_l}_T$ (different samplesizes)}
\end{table}

The deviation at $\mathcal{Q}_p (X^{\hat{\pi}_l}_T)$ is given by $[\mathcal{Q}^{emp,s}_p (X^{\hat{\pi}_{l}}_T) - \mathcal{Q}^{theor}_p (X^{\hat{\pi}_l}_T) ]/ {\mathcal{Q}^{theor}_p (X^{\hat{\pi}_l}_T})$, where $\mathcal{Q}^{emp,s}_p (X^{\hat{\pi}_{l}}_T)$ is the p-quantile from the empirical distribution obtained from a sample size s.

Now we fix s = 3'000 and look at different step widths h. For the reader's orientation, the concrete interpretation of h would be: h=1/10 means 'once a month', h=1/49 means 'once a week', h=1/100 means 'twice a week'. \newline Again, the deviation is  $[\mathcal{Q}^{emp,h}_p (X^{\hat{\pi}_{l}}_T) - \mathcal{Q}^{theor}_p (X^{\hat{\pi}_l}_T) ]/ {\mathcal{Q}^{theor}_p (X^{\hat{\pi}_l}_T})$, where $\mathcal{Q}^{emp,h}_p (X^{\hat{\pi}_{l}}_T)$ is the p-quantile from the empirical distribution obtained from a step width h.

\begin{table}[H]
 
\begin{center}\begin{tabular}{|l| c|c|c|c|}
\hline
 step width h & $\mathcal{Q}_{0.25} (X^{\hat{\pi}_l}_T)$ &   $\mathcal{Q}_{0.50} (X^{\hat{\pi}_l}_T)$ &   $\mathcal{Q}_{0.75} (X^{\hat{\pi}_l}_T)$ &   $\mathcal{Q}_{0.95} (X^{\hat{\pi}_l}_T)$ \\\hline 
h = 1/10 & 0$\%$ &  12 $\%$& -5 $\%$ & -2 $\%$  \\
h = 1/49 & 0$\%$  & -4$\%$& 4$\%$  & 0 $\%$   \\
h = 1/100 & 0$\%$& -6$\%$ & -1 $\%$ & -1$\%$\\ \hline \end{tabular} \end{center}
 
 \caption{Deviation of empirical from theoretical quantiles of $X^{\hat{\pi}_l}_T$  (different stepwidths)}
\end{table}

First note, that the standard deviation of the theoretical normal distribution is very high (770$\%$) and so the standard deviation of the sample mean also is: ${770 \% / \sqrt{3'000} \approx 14\%}$. Therefore the values of the deviation at $\mathcal{Q}_{0.50}$ are high, but lie within reasonable range. For this thesis, the computational and time ressources were limited, but it would be interesting to further investigate the tradeoff between error reduction by using a greater sample size and the additional computing resources needed. Compared to the sample size, the size of step width seems to have less impact on the error, indicating that the focus should be on the sample size if one wished to achieve more confidence.\newline
The error of 0$\%$ at $\mathcal{Q}_{0.25}$ can also be explained, because this quantile lies below the lower constraint $K_l$= 100 that is equivalent to a total return of 10$\%$.  \newline
\newline
In addition, these results do not change significantly for lower $K_l$. For example if ${K_l = -30'000}$, the range of errors is similar. This means in particular, that these observations are also true for implementation of the optimal unconstrained strategy. \newline
It also needs to be added that the calculation of the shadow value in the code produces a slight error, but since it is usually <$ 10^{-9}$, we consider it to be negligible. Furthermore, the normal random generator implemented in R might produce inaccuracy of the normal distribution in the tails. However, we would not see this here in the quantiles.
\subsubsection{Impact of Restriction on Investment}
In order to evaluate the difference to the theoretical distribution produced by the limitation of investment to 100$\%$, we look at the three scenarios from Section 3.3 (having fixed ${h = 1/49}$, s = 3'000).
Here, the quantiles are given as a total return on initial wealth, and $\Delta\mathcal{Q}_p := [\mathcal{Q}^{emp}_p (X^{\hat{\pi}_{l,m}}_T) - \mathcal{Q}^{theor}_p (X^{\hat{\pi}_l}_T) ]/ {\mathcal{Q}^{theor}_p (X^{\hat{\pi}_l}_T})$ ( for $\mathcal{Q}^{emp}_p$ the empirical and $\mathcal{Q}^{theor}_p$ the theoretical p-quantile)  measures the deviation. 
\begin{table}[H]
\begin{tabu}{|l|c|c|c|c|c|}
\hline
 Scenario &Distr. & $\mathcal{Q}_{0.25}$($X_0$) &   $\mathcal{Q}_{0.50}$($X_0$) &  $\mathcal{Q}_{0.75}$($X_0$) &  $\mathcal{Q}_{0.95}$($X_0$)   \\ \hline 
$K_l$= 100,&  Theor. & 10$\%$ &   116$\%$ &  719$\%$& 1'587$\%$  \\
 $\sigma$ = 0.1 & $\hat{\pi}_{l,m}$ &  110 $\%$& 162 $\%$ & 225 $\%$  & 343 $\%$ \\\cline{2-6}
\rowfont{\color{DarkBlue}} 
  & $\Delta\mathcal{Q}_p$ &  1004 $\%$& 40 $\%$ & -69 $\%$ & -78 $\%$  \\ \cline{1-6}
$K_l$= 1000, & Theor. & 100$\%$& 100$\%$  & 100 $\%$  & 964 $\%$ \\
 $\sigma$ = 0.1  & $\hat{\pi}_{l,m}$  &  106$\%$& 128$\%$  & 194 $\%$  & 340 $\%$   \\\cline{2-6}
 \rowfont{\color{DarkBlue}}
  & $\Delta\mathcal{Q}_p$ &  6$\%$& 28$\%$  & 94 $\%$  & -65 $\%$  \\ \cline{1-6}
 $K_l$ = 500,  & Theor. &  50$\%$& 50$\%$  & 231 $\%$  & 520$\%$  \\
  $\sigma$ = 0.3  & $\hat{\pi}_{l,m}$  & 51$\%$& 69$\%$  & 223 $\%$  & 519 $\%$  \\\cline{2-6}
  \rowfont{\color{DarkBlue}}& $\Delta\mathcal{Q}_p$& 2$\%$ & 39 $\%$ & -3$\%$ & 0 $\%$\\ \hline \end{tabu} 
 
 \caption{Empirical, investment-constrained vs. theoretical quantiles of $X^{\hat{\pi}}_T$ (varying scenarios)}
\end{table}
 These results reflect the behaviour we have seen in the qualitative analysis: For low $K_l$, the effect of the constraint on investment is generally stronger. Also, the positive effect on lower quantiles can be seen clearly, which is linked to the concentration around values larger than $K_l$. For $K_l$ = 1'000, the uplift of small quantiles is not that strong, the impact of the investment-constraint can rather be detected around values in the middle, which is also consistent with Figure 14.  Note, that in these two cases the 'price' for the uplift of lower quantiles is a strong reduction of $\mathcal{Q}_{0.95}$. This result is even worse if we consider the relative loss in utility instead of wealth, which might be a more consistent way to evaluate the return. For example in the first scenario, the theoretical utility  at  $\mathcal{Q}_{0.5}$ of -0.89 increases by 6$\%$, whereas at $\mathcal{Q}_{0.95}$ it declines from -0.2 by 255$\%$. In more volatile market conditions, however, the optimal $K_l$-constrained strategy will change much less if a restriction on the investment in the stock is introduced. 
\newline \newline
To see if the lower-bounded terminal wealth distribution is more affected by a constraint on investment to 100$\%$ than the optimal unbounded terminal wealth distribution, we look at the situation from Section 2.3 for comparison. Here, the initial wealth was set to be 120$\%$ , 100$\%$ and 80$\%$  of the amount required by the optimal strategy for an initial investment in stocks. Because of the convergence of the strategies, for $K_l$ = -30'000, the results  from $\hat{\pi}_{l,m}$ are identical to the ones from the optimal (unbounded) strategy $\hat{\pi}_{m}$ . 
\begin{table}[H]{
\subcaption{$K_l$ = -30'000}
\begin{tabu}{|l|c|c|c|c|c|}
\hline
 $X_0$ &Distr. & $\mathcal{Q}_{0.25}$($X_0$) &   $\mathcal{Q}_{0.50}$($X_0$) &  $\mathcal{Q}_{0.75}$($X_0$) &  $\mathcal{Q}_{0.95}$($X_0$)   \\ \hline 
4'912&  Theor. & 101$\%$ &   163$\%$ &  224$\%$& 313$\%$  \\
  & $\hat{\pi}_{l,m}$ &  80 $\%$& 160 $\%$ & 215$\%$  & 302 $\%$ \\\cline{2-6}
\rowfont{\color{DarkBlue}} 
  & $\Delta\mathcal{Q}_p$ &  -21 $\%$& -2 $\%$ & -4 $\%$ & -3 $\%$  \\ \cline{1-6}
4'094 & Theor. & 97$\%$& 171$\%$  & 245 $\%$  & 351 $\%$ \\
 & $\hat{\pi}_{l,m}$  &  74$\%$& 158$\%$  & 227 $\%$  & 336 $\%$   \\\cline{2-6}
 \rowfont{\color{DarkBlue}}
  & $\Delta\mathcal{Q}_p$ &  -24$\%$& -7$\%$  & -7 $\%$  & -4 $\%$  \\ \cline{1-6}
 3'275 & Theor &  91$\%$& 183$\%$  & 275 $\%$  & 408$\%$  \\
   & $\hat{\pi}_{l,m}$  & 70$\%$& 141$\%$  & 238 $\%$  & 378 $\%$  \\\cline{2-6}
  \rowfont{\color{DarkBlue}}& $\Delta\mathcal{Q}_p$& -24$\%$ & -23$\%$ & -13$\%$ & -7 $\%$\\ \hline \end{tabu} 

\subcaption{$K_l$ = 3'000}
\begin{tabu}{|l|c|c|c|c|c|}
\hline
 $X_0$ &Distr. & $\mathcal{Q}_{0.25}$($X_0$) &   $\mathcal{Q}_{0.50}$($X_0$) &  $\mathcal{Q}_{0.75}$($X_0$) &  $\mathcal{Q}_{0.95}$($X_0$)   \\ \hline 
4'912&  Theor. & 82$\%$ &   144$\%$ &  205$\%$& 293$\%$  \\
  & $\hat{\pi}_{l,m}$ &  82 $\%$& 145 $\%$ & 205$\%$  & 292 $\%$ \\\cline{2-6}
\rowfont{\color{DarkBlue}} 
  & $\Delta\mathcal{Q}_p$ &  -0 $\%$& 1 $\%$ & 0 $\%$ & -1 $\%$  \\ \cline{1-6}
4'094 & Theor. & 73$\%$& 132$\%$  & 206 $\%$  & 312 $\%$ \\
 & $\hat{\pi}_{l,m}$  & 76$\%$& 133$\%$  & 205 $\%$  & 310 $\%$ \\\cline{2-6}
 \rowfont{\color{DarkBlue}}
  & $\Delta\mathcal{Q}_p$ &  4$\%$& 1$\%$  & 0 $\%$  & -1 $\%$ \\ \cline{1-6}
 3'275 & Theor. &  92$\%$& 95$\%$  & 187 $\%$  & 320 $\%$  \\
   & $\hat{\pi}_{l,m}$  &  91$\%$& 183$\%$  & 185 $\%$  & 309$\%$  \\\cline{2-6}
  \rowfont{\color{DarkBlue}}& $\Delta\mathcal{Q}_p$& 0$\%$ & 2$\%$ & -1$\%$ & -3 $\%$\\ \hline \end{tabu} }
 
\caption{Empirical, investment-constrained vs. theoretical quantiles of $X^{\hat{\pi}}_T$ (varying $K_l$)}
\end{table}
As expected, the impact of the investment constraint is lower for the $K_l$-bounded strategy $\hat{{\pi}_l}$, simply because it generally requires less investment. In particular, $X_0$ is a percentage of the initial investment $\hat{{\pi}}_0 X_0$, which is generally larger than  $\hat{{\pi}_l} (0) X_0$. Another difference is the observation, that the constraint on the investment has a negative effect on the optimal unconstrained strategy $\hat{{\pi}}$, contrary to the positive effect on $\hat{{\pi}}_l$.
This is, because the higher investment in stocks required by $\hat{{\pi}}$ comes with a stronger upside potential in the first place. Limiting this investment (which is more likely in case of bad performance, i.e. at lower quantiles) also reduces the possibility for high returns, which is expressed in lower values of terminal wealth. Finally, the effect on $\mathcal{Q}_{0.25}$ for initial value $X_0$ = 3'275 is particularly interesting, because 92$\%$ is the lower constraint $K_l$. So, even though the difference is small here, it indicates that the terminal wealth can fall below the lower constraint if a restriction on investment is introduced, and in other cases the effect might be stronger. 
\newpage

\section{An Optimal Strategy for Exponential Utility and \mbox{Upper} Constraint $K_u$}
We will now expand the previous problem by adding an upper bound for the terminal wealth. The resulting strategy will then be qualitatively analysed. To understand the resulting terminal wealth distribution we will take a more theoretical approach. 
\subsection{Derivation of the $K_l$-$K_u$-Strategy }
The optimal strategy under upper and lower constraints for terminal wealth will be derived by first deriving the optimal strategy for the isolated case of only an upper constraint. We will find, that it involves selling a call option and using the extra income to follow the optimal strategy. Combining this with the lower constraint will then require to use a part of the new initial shadow wealth to buy a put option, which then will reduce its value. Again, for simpliciy, we will sometimes call this strategy the $K_u$-Strategy and the one combined with the lower constraint the $K_l$ -$K_u$-Strategy.
\subsubsection{Derivation of the $K_u$-Strategy}
We now look at a setting where the terminal wealth faces an upper constraint $K_u\in \mathbb{R}_{[X_0 e^{rT}, \infty)}$ and use similar arguments as before to determine the optimal strategy. We begin by stating the modified problem. \newline
\begin{problem}
Find an optimal strategy $\hat{\pi}_u \in \mathcal{A}$ such that \begin{equation} \mathbb{E}[U(X^{\hat{\pi}_u}_T)] = \sup\limits_{\pi\in \mathcal{A}}\mathbb{E}[U(X^{\pi}_T)]\ \text{and} \ X^{\hat{\pi}_u}_T \leq K_u \text{ a.s. holds.}\end{equation}\end{problem}
This is solved in a similar way to the lower constraint, by determining the structure  of the optimal terminal wealth and constructing a replicating portfolio. \newline However, we first need to establish a statement analogous to Lemma 2 from \cite{Zhou}.\newpage
\begin{lemma}
The optimal terminal wealth corresponding to the solution of Problem 3 is  given by \begin{equation} X^{\hat{\pi}_u}_T = \min\{K_u,I(yH_T )\},\end{equation}  where U is concave, I is the inverse of $U^\prime$ and  y is a positive number such that the budget constraint $\mathbb{E}[H_T X^{\hat{\pi}}_T] = X_0$ holds. \end{lemma}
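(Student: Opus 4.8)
The plan is to mirror the proof of Lemma 2 in \cite{Zhou}, replacing the lower-bound truncation by an upper-bound one, and to exploit that the Black--Scholes market is complete so that the dynamic Problem 3 collapses to a static optimisation. First I would invoke completeness: since there is a unique state price density $H_T$, an $\mathcal{F}_T$-measurable terminal value $X_T$ is attainable by some admissible self-financing strategy if and only if it satisfies the budget constraint $\mathbb{E}[H_T X_T] = X_0$. Hence Problem 3 is equivalent to the static problem of maximising $\mathbb{E}[U(X_T)]$ over all $\mathcal{F}_T$-measurable $X_T$ subject to $\mathbb{E}[H_T X_T] = X_0$ and $X_T \le K_u$ almost surely.

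Next I would handle the budget constraint with a Lagrange multiplier $y > 0$ and solve the relaxed problem state by state. For fixed $y$ the quantity to be maximised pointwise is $g_\omega(x) := U(x) - y H_T(\omega)\, x$ over $x \le K_u$. Because $U$ is concave, $g_\omega$ is concave with derivative $U'(x) - y H_T$, so it increases until $x = I(y H_T)$ (where $U'(x) = y H_T$) and decreases thereafter. Maximising this concave function over the half-line $\{x \le K_u\}$ therefore returns $\min\{K_u, I(y H_T)\}$: if the free optimum $I(y H_T)$ already respects the bound we keep it, otherwise $g_\omega$ is still increasing at $K_u$ and the constrained optimum lies on the boundary. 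This yields exactly the candidate $X^{\hat{\pi}_u}_T = \min\{K_u, I(y H_T)\}$ of (4.2).

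It then remains to pin down $y$ and to verify global optimality. For the former I would show that the map $y \mapsto \mathbb{E}[H_T \min\{K_u, I(y H_T)\}]$ is continuous and strictly decreasing, with limit $e^{-rT}K_u$ as $y \to 0^+$ (using $\mathbb{E}[H_T] = e^{-rT}$) and limit $-\infty$ as $y \to \infty$; since the standing assumption $K_u \ge X_0 e^{rT}$ gives $e^{-rT}K_u \ge X_0$, the intermediate value theorem supplies a $y > 0$ matching the budget $X_0$. For the verification step, pointwise maximality gives $U(X_T) - y H_T X_T \le U(X^{\hat{\pi}_u}_T) - y H_T X^{\hat{\pi}_u}_T$ for every admissible competitor $X_T$; taking expectations and cancelling the equal budget terms $y\,\mathbb{E}[H_T X_T] = y\,\mathbb{E}[H_T X^{\hat{\pi}_u}_T] = y X_0$ yields $\mathbb{E}[U(X_T)] \le \mathbb{E}[U(X^{\hat{\pi}_u}_T)]$, establishing optimality.

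I expect the main obstacle to be the existence-and-admissibility step rather than the pointwise algebra: one must guarantee that a multiplier $y$ exists (which is precisely where the assumption $K_u \ge X_0 e^{rT}$ and suitable integrability of $H_T\, I(y H_T)$ enter) and that the resulting truncated claim, being bounded above by $K_u$ and integrable against $H_T$, is genuinely replicable by some $\hat{\pi}_u \in \mathcal{A}$. The concavity argument is routine, but the care needed for integrability in the tails, where $I(y H_T)$ grows only logarithmically in $H_T$ for exponential utility, and for invoking completeness cleanly, is what makes the statement nontrivial.
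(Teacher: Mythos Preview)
Your proposal is correct and lands on the same inequality as the paper, but the packaging differs. The paper does not set up a Lagrangian or derive the candidate; it simply posits $X^{\hat{\pi}_u}_T=\min\{K_u,I(yH_T)\}$ and verifies optimality via the concavity inequality $U(a)-U(b)\le U'(b)(a-b)$, then splits into the two events $\{X^{\hat{\pi}_u}_T=K_u\}$ and $\{X^{\hat{\pi}_u}_T<K_u\}$ to bound $U'(X^{\hat{\pi}_u}_T)(X^{\pi}_T-X^{\hat{\pi}_u}_T)$ by $yH_T(X^{\pi}_T-X^{\hat{\pi}_u}_T)$ in each case, and finishes by cancelling the budget terms. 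Your pointwise maximisation of $U(x)-yH_T x$ over $\{x\le K_u\}$ is exactly the same computation in disguise, but your route is more constructive (it explains where the $\min$ comes from) and you treat the existence of the multiplier $y$ and the role of the assumption $K_u\ge X_0 e^{rT}$ with more care than the paper, which simply assumes such a $y$ is available. The paper's case split, on the other hand, makes the sign argument on $\{X^{\hat{\pi}_u}_T=K_u\}$ slightly more explicit. Either presentation is fine here.
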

\begin{proof} The proof is similar to the one  from \cite{Zhou}.\newline Since U is concave, we have  $U(a)-U(b) \leq  U^{\prime}(b)(a-b) \ \forall \ a,b \in \ \mathbb{R}$.\newline From this follows in particular for any admissible strategy $\pi$ such that $X^{\pi}_T \leq K_u$: \newline $\mathbb{E}[U(X^{\pi}_T)] - \mathbb{E}[U(X^{\hat{\pi}_u}_T)] \newline \leq \mathbb{E}[U'(X^{\hat{\pi}_u}_T) (X^{\pi}_T - X^{\hat{\pi}_u}_T)] \newline =  \mathbb{E}[U'(X^{\hat{\pi}_u}_T) (X^{\pi}_T - X^{\hat{\pi}_u}_T)| X^{\hat{\pi}}_T \geq K_u ]   \mathbb{P}[ X^{\hat{\pi}_u}_T \geq K_u]     \newline + \mathbb{E}[U'(X^{\hat{\pi}_u}_T) (X^{\pi}_T - X^{\hat{\pi}_u}_T)|     X^{\hat{\pi}_u}_T < K_u      ] \mathbb{P} [ X^{\pi_u}_T < K_u ]$. \newline Evaluating the second term:  \newline  $X^{\hat{\pi}_u}_T < K_u  \implies  X^{\hat{\pi}_u}_T  = I(yH_T)\text{ and since } U'(I(yH_T)) = yH_T, \text{we have}\newline \mathbb{E}[U'(X^{\hat{\pi}_u}_T) (X^{\pi}_T - X^{\hat{\pi}_u}_T)|     X^{\hat{\pi}_u}_T < K_u ] = \mathbb{E}[ yH_T (X^{\pi}_T - X^{\hat{\pi}_u}_T)|     X^{\hat{\pi}_u}_T < K_u ] $\newline Evaluating the first term: \newline From $X^{\hat{\pi}_u}_T = \min\{K_u,I(yH_T )\} \leq K_u \implies X^{\hat{\pi}_u}_T = K_u$ follows $(X^{\pi}_T - X^{\hat{\pi}_u}_T) = (X^{\pi}_T - K_u) \leq 0$. \newline Also, since $U'$ is decreasing and $X^{\hat{\pi}_u}_T \leq I(yH_T) $ it follows $U'(X^{\hat{\pi}_u}_T) \geq U'(I(yH_T)) = yH_T$. This leads to \newline 
$\mathbb{E}[U'(X^{\hat{\pi}_u}_T) (X^{\pi}_T - X^{\hat{\pi}_u}_T)| X^{\hat{\pi}_u}_T \geq K_u ] = - \mathbb{E}[U'(X^{\hat{\pi}_u}_T) (X^{\hat{\pi}_u}_T - X^{\pi}_T )| X^{\hat{\pi}_u}_T \geq K_u ] \newline  \leq  -  \mathbb{E}[yH_T (X^{\hat{\pi}_u}_T - X^{\pi}_T )| X^{\hat{\pi}_u}_T \geq K_u] =   \mathbb{E}[yH_T (X^{\pi}_T - X^{\hat{\pi}_u}_T )| X^{\hat{\pi}_u}_T \geq K_u]$.
\newline So in summary we have \newline
$\mathbb{E}[U(X^{\pi}_T)] - \mathbb{E}[U(X^{\hat{\pi}_u}_T)]  \newline \leq  y \mathbb{E}[H_T(X^{\pi}_T - X^{\hat{\pi}_u}_T )| X^{\hat{\pi}_u}_T \geq K_u] \mathbb{P}[ X^{\hat{\pi}_u}_T \geq K_u]  + y \mathbb{E}[H_T (X^{\pi}_T - X^{\hat{\pi}_u}_T)|     X^{\hat{\pi}_u}_T < K_u] \mathbb{P} [ X^{\hat{\pi}_u}_T < K_u ] \newline = y \mathbb{E}[H_T (X^{\pi}_T - X^{\hat{\pi}_u}_T)] = y(X_0 -X_0) = 0$, \newline
because the budget constraint holds for both strategies. So, $\mathbb{E}[U(X^{\pi}_T)] \leq \mathbb{E}[U(X^{\hat{\pi}_u}_T)]$ for all admissible strategies $\pi$, from which follows statement (4.1)\end{proof}\newpage  In order to solve Problem 3 for the exponential utility function, we state:
\begin{proposition}The optimal terminal wealth for Problem 3 under the utility function is given by \begin{equation} X^{\hat{\pi}_u}_T = \tilde{X}^{\hat{\pi}}_T - \max\{\tilde{X}^{\hat{\pi}}_T - K_u, 0\} \end{equation}
where $\tilde{X}^{\pi}_t$ is the optimal unconstrained wealth process from (2.15), with shadow value  \newline $\tilde{X}^{\hat{\pi}}_0 = (-ln(\dfrac{y}{\alpha}) + rT-\dfrac{\theta^2}{2}T)\dfrac{1}{\alpha} e^{-rT}$ for y >0 such that  $\mathbb{E}[H_T X^{\hat{\pi}}_T] = {X}_0$.\end{proposition}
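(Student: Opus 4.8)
The plan is to obtain the stated expression directly from Lemma 1, in close analogy with the proof of Proposition 4. Lemma 1 already delivers the optimal terminal wealth in the abstract form $X^{\hat{\pi}_u}_T = \min\{K_u, I(yH_T)\}$, valid for any strictly increasing, concave utility, where $I = (U')^{-1}$ and $y>0$ is the multiplier fixed by the budget constraint $\mathbb{E}[H_T X^{\hat{\pi}_u}_T] = X_0$. Since the exponential utility $U(x) = -e^{-\alpha x}$ has $U'(x) = \alpha e^{-\alpha x}$, which is strictly positive and strictly decreasing, $U$ is strictly increasing and concave and admits the inverse marginal $I(y) = -\frac{1}{\alpha}\ln(y/\alpha)$, exactly as computed in the proof of Proposition 4. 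Thus Lemma 1 applies verbatim and supplies the starting point.

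First I would rewrite the minimum as a difference using the elementary pointwise identity $\min\{K_u, b\} = b - \max\{b - K_u, 0\}$ with $b = I(yH_T)$. This recasts the abstract solution as $X^{\hat{\pi}_u}_T = I(yH_T) - \max\{I(yH_T) - K_u, 0\}$, which already carries the shape of the claim, with a sold call replacing the long put of the lower-constraint case. Next I would identify $I(yH_T)$ with the shadow terminal wealth $\tilde{X}^{\hat{\pi}}_T$; this is precisely the matching step carried out for Proposition 4 and can be reused unchanged. Substituting $H_T = e^{-(r+\theta^2/2)T - \theta W_T}$ into $I$ gives $I(yH_T) = -\frac{1}{\alpha}[\ln(y/\alpha) - (r + \frac{\theta^2}{2})T - \theta W_T]$, and comparing this with the explicit form $\tilde{X}^{\hat{\pi}}_T = \tilde{X}_0 e^{rT} + T\frac{\theta^2}{\alpha} + \frac{\theta}{\alpha}W_T$ of Proposition 3 matches the $W_T$-coefficients automatically and forces the deterministic parts to coincide exactly when $\tilde{X}_0 = (-\ln(y/\alpha) + rT - \frac{\theta^2}{2}T)\frac{1}{\alpha}e^{-rT}$, which is the asserted shadow value.

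With this choice $I(yH_T) = \tilde{X}^{\hat{\pi}}_T$, and substituting back yields $X^{\hat{\pi}_u}_T = \tilde{X}^{\hat{\pi}}_T - \max\{\tilde{X}^{\hat{\pi}}_T - K_u, 0\}$, as claimed; the multiplier $y$ (equivalently $\tilde{X}_0$) is then pinned down by imposing $\mathbb{E}[H_T X^{\hat{\pi}_u}_T] = X_0$. Since the genuine variational work — establishing the $\min\{K_u, I(yH_T)\}$ form via a concavity/duality argument — is already contained in Lemma 1, the remaining task here is largely bookkeeping.

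The only substantive point to watch is the sign handling: the upper constraint produces a subtracted $\max\{\cdot - K_u, 0\}$ (a short call) rather than the added put of Proposition 4, so I would confirm that the same shadow value $\tilde{X}_0$ reproduces the unconstrained shadow process in both settings. I expect the main obstacle, if one can call it that, to be purely notational, namely ensuring the deterministic terms in the comparison of $I(yH_T)$ and $\tilde{X}^{\hat{\pi}}_T$ cancel with the correct sign, which I would verify by the same term-by-term matching used for the lower bound.
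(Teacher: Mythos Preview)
Your proposal is correct and follows essentially the same approach as the paper: invoke Lemma 1 to obtain $X^{\hat{\pi}_u}_T = \min\{K_u, I(yH_T)\}$, rewrite the minimum as $I(yH_T) - \max\{I(yH_T) - K_u, 0\}$, and then identify $I(yH_T)$ with $\tilde{X}^{\hat{\pi}}_T$ via the same shadow-value matching used in Proposition 4. The paper's own proof is in fact briefer than yours, simply citing Lemma 1 and the results of Proposition 4 without repeating the term-by-term comparison.
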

\begin{proof}
With Lemma 1 and the notation and results from proof of Proposition 4, we get for the exponential function:\newline 
$X^{\hat{\pi}_u}_T = \min\{K_u,I(y H_T)\} = I(y H_T) - \max\{I(y H_T)-K_u, 0\}$ with the same formula for the shadow value. \end{proof}
So, in this case the optimal terminal wealth is identical to the wealth resulting from the optimal strategy and starting value $\tilde{X}_0$ minus the payoff of a call option with strike price $K_u$. This means, that the optimal wealth process can be replicated by the replicating strategy of the call option plus the optimal strategy from the first chapter. Again, we first determine the pricing function corresponding to the payoff of the call function $\max\{\tilde{X}^{\hat{\pi}}_T - K_u, 0\}$:
 \begin{proposition}The pricing function corresponding to the call option with payoff\newline $\max\{\tilde{X}^{\hat{\pi}}_T - K_u, 0\}$ is given by \begin{equation}{c(t, \tilde{X}^{\hat{\pi}}_t)= \Phi (-d_u) (\tilde{X}^{\hat{\pi}}_{t} -K_ue^{-r(T-t)})  + \dfrac{\theta \sqrt{T-t}}{\alpha}  e^{-r(T-t)} \phi(d_u)}\end{equation} \newline with $\Phi (x)$ the cumulative normal distribution function,  $\phi (x)$ its density, and\newline $d_u = d_u(t, \tilde{X}^{\hat{\pi}}_t) =( K_u-  \tilde{X}^{\hat{\pi}}_{t}e^{r(T-t)}) \dfrac{\alpha}{\sqrt{T-t} \theta}$.\end{proposition}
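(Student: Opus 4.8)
The plan is to follow the risk-neutral valuation argument of Proposition 5 almost verbatim, changing only the direction of the exercise region and the sign of the payoff. As there, I would work under the risk-neutral measure $\mathbb{Q}$, where $W^{\mathbb{Q}}_t = W_t + \theta t$ is a standard Brownian motion and the discounted shadow wealth process is a $\mathbb{Q}$-martingale, so that $\tilde{X}^{\hat{\pi}}_T = \tilde{X}_0 e^{rT} + \frac{\theta}{\alpha} W^{\mathbb{Q}}_T$. Writing $W^{\mathbb{Q}}_T = Z\sqrt{T}$ with $Z \sim \mathcal{N}(0,1)$, the call is in the money exactly when $\tilde{X}^{\hat{\pi}}_T > K_u$, which rearranges to $Z > (K_u - \tilde{X}_0 e^{rT})\frac{\alpha}{\sqrt{T}\theta} = d_u$ at $t=0$. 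Thus the exercise event is the upper tail $\{Z > d_u\}$, of probability $\Phi(-d_u)$, in contrast to the lower tail that appeared for the put.

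Next I would evaluate $c(0,\tilde{X}_0) = e^{-rT}\mathbb{E}_{\mathbb{Q}}[(\tilde{X}^{\hat{\pi}}_T - K_u)\mathds{1}_{\{Z > d_u\}}]$ by splitting the integrand into its constant part and its part linear in $Z$. The constant part $\tilde{X}_0 e^{rT} - K_u$ contributes $e^{-rT}(\tilde{X}_0 e^{rT}-K_u)\Phi(-d_u) = (\tilde{X}_0 - K_u e^{-rT})\Phi(-d_u)$, while the linear part produces the Gaussian-density term through the elementary identity $\int_{d_u}^{\infty} z\,\phi(z)\,dz = \phi(d_u)$ (note the sign is positive for the upper tail, whereas the put used $\int_{-\infty}^{d_0} z\,\phi(z)\,dz = -\phi(d_0)$). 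Collecting the two pieces gives the claimed formula at $t=0$; promoting it to general $t$ is then the same substitution as in Proposition 5, replacing $T$ by $T-t$, $\tilde{X}_0$ by $\tilde{X}^{\hat{\pi}}_t$, and $d_u$ by $d_u(t,\tilde{X}^{\hat{\pi}}_t)$.

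This computation holds no real difficulty; the only place to be careful is the sign bookkeeping, since the call flips both the tail (hence $\Phi(d_u) \to \Phi(-d_u)$) and the overall sign of the payoff relative to the put, and one must check that these two flips combine to leave the $\phi(d_u)$ term positive. As an independent cross-check I would invoke put-call parity: since $(\tilde{X}^{\hat{\pi}}_T - K_u)^{+} - (K_u - \tilde{X}^{\hat{\pi}}_T)^{+} = \tilde{X}^{\hat{\pi}}_T - K_u$ and the discounted shadow wealth is a $\mathbb{Q}$-martingale, discounting gives $c(t,\tilde{X}^{\hat{\pi}}_t) - p(t,\tilde{X}^{\hat{\pi}}_t) = \tilde{X}^{\hat{\pi}}_t - K_u e^{-r(T-t)}$, where $p$ is the put price of Proposition 5 with $K_l$ replaced by $K_u$ (so that $d_l = d_u$). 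Adding $\tilde{X}^{\hat{\pi}}_t - K_u e^{-r(T-t)}$ to that put formula and using $1 - \Phi(d_u) = \Phi(-d_u)$ reproduces the stated expression, confirming the direct calculation.
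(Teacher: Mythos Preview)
Your proposal is correct and mirrors the paper's own proof: both compute $c(0,\tilde{X}_0)=e^{-rT}\mathbb{E}_{\mathbb{Q}}[(\tilde{X}^{\hat{\pi}}_T-K_u)\mathds{1}_{\{Z>d_u\}}]$ by splitting into the constant and $Z$-linear pieces, use $1-\Phi(d_u)=\Phi(-d_u)$, and then extend to general $t$ by the usual substitution. Your put--call parity cross-check is a nice addition that the paper does not include, but the core argument is the same.
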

\begin{proof}
By using the same risk neutral valuation arguments and notation as for the proof of Proposition 5, we get for t= 0 \newline  $c(0, \tilde{X}^{\hat{\pi}}_0) = e^{-rT}\mathbb{E}[(\tilde{X}^{\hat{\pi}}_T - K_u) \mathds{1}_{\{Z > d_0\}}]  \newline = \tilde{X}^{\hat{\pi}}_{0} \Phi (-d_0)-  K_u e^{-rT}\Phi (-d_0) + \dfrac{\theta \sqrt{T}}{\alpha}  e^{-rT} \dfrac{1}{\sqrt{2\pi}}   \bigintsss_{d_0}^{\infty}  Z e^{-Z^2/2} dZ$ \newline  = $\Phi (-d_0)[ \tilde{X}^{\hat{\pi}}_{0}-  K_l e^{-rT}]  + \dfrac{\theta \sqrt{T}}{\alpha}  e^{-rT}  \dfrac{1}{\sqrt{2\pi}} e^{-d_0^2/2}$, using that $1- \Phi(d_0)= \Phi (-d_0)$. The statement then follows from expanding for any $t \in$ [0,$T$].\end{proof}
Next, we determine the replicating portfolio. We state 
\begin{proposition} The replicating portfolio  of the pricing function (4.4) is given by the strategy 
\begin{equation} \tilde{\pi}_c(t, \tilde{X}^{\hat{\pi}}_t) = \dfrac{\Phi(-d_u)}{\sigma \sqrt{T-t}(\phi(d_u) -\Phi(-d_u)d_u)}, 
 \end{equation} with $d_u$  as  in Proposition 9.\end{proposition}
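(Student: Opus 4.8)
The plan is to mirror the proof of Proposition 6. I would show that the pricing function $c(t,\tilde{X}^{\hat{\pi}}_t)$ from (4.4) satisfies the wealth equation (2.3) when the proportion invested is $\tilde{\pi}_c$. Since $c$ is then a self-financing wealth process that attains the terminal payoff $\max\{\tilde{X}^{\hat{\pi}}_T - K_u,0\}$, this identifies $\tilde{\pi}_c$ as a replicating strategy for the call, which is exactly the claim.

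First I would introduce the shorthand $d := d_u(t,\tilde{X}^{\hat{\pi}}_t)$ and record $d_x = -\dfrac{\alpha}{\theta\sqrt{T-t}}e^{r(T-t)}$, together with the identity $\tilde{X}^{\hat{\pi}}_t - K_u e^{-r(T-t)} = -\dfrac{\theta\sqrt{T-t}}{\alpha}e^{-r(T-t)}d$ that follows at once from the definition of $d_u$. Differentiating (4.4) with respect to $\tilde{X}^{\hat{\pi}}_t$, using $\dfrac{d}{dx}\Phi(-d) = -\phi(d)d_x$ and $\dfrac{d}{dx}\phi(d) = -d\phi(d)d_x$, the two $\phi$-terms cancel exactly, just as in the put case, and I expect to obtain the clean expression $c_x = \Phi(-d)$. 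Put--call parity $c_x - p_x = 1$ gives an independent check, since Proposition 6 yielded $p_x = -\Phi(d)$. Differentiating once more gives $c_{xx} = \dfrac{\alpha}{\theta\sqrt{T-t}}e^{r(T-t)}\phi(d)$, and $c_t$ is computed as in Proposition 6, with the same first-and-last-term cancellation leaving only the $\Phi$- and $\phi$-drift contributions.

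With these derivatives, and using that $\tilde{X}^{\hat{\pi}}_t$ is the Itô drift-diffusion process from the proof of (2.15) with diffusion coefficient $\dfrac{\theta}{\alpha}e^{-r(T-t)}$, I would apply Itô's Lemma to obtain $dc(t,\tilde{X}^{\hat{\pi}}_t)$. Its diffusion part is $c_x\dfrac{\theta}{\alpha}e^{-r(T-t)}\,dW_t = \Phi(-d)\dfrac{\theta}{\alpha}e^{-r(T-t)}\,dW_t$. Matching this against the $dW_t$-coefficient $\sigma\tilde{\pi}_c c$ of the wealth equation gives the defining relation $\Phi(-d)\dfrac{\theta}{\alpha}e^{-r(T-t)} = \sigma\tilde{\pi}_c\, c$. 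To solve it I would first rewrite the pricing function in the compact form $c = \dfrac{\theta\sqrt{T-t}}{\alpha}e^{-r(T-t)}\big(\phi(d) - \Phi(-d)d\big)$, obtained by substituting the parity identity into (4.4); dividing through then cancels the common factor $\dfrac{\theta}{\alpha}e^{-r(T-t)}$ and delivers precisely $\tilde{\pi}_c = \dfrac{\Phi(-d)}{\sigma\sqrt{T-t}(\phi(d)-\Phi(-d)d)}$.

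The routine-but-delicate step is the $c_t$ computation, where several $\phi(d)\,d_t$ contributions must be tracked and cancelled with the correct signs; that bookkeeping is where I expect the main risk of error. Conceptually, however, the decisive simplification is $c_x = \Phi(-d)$, since the entire replicating proportion is read off from the diffusion term alone. The drift then only has to be consistent, which it is because $c$ is by construction the discounted $\mathbb{Q}$-expectation of the payoff and hence satisfies the associated pricing PDE.
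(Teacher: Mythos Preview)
Your proposal is correct and follows essentially the same route as the paper: compute $c_x=\Phi(-d_u)$, $c_{xx}$ and $c_t$ using the identity $\tilde X^{\hat\pi}_t-K_u e^{-r(T-t)}=-\tfrac{\theta\sqrt{T-t}}{\alpha}e^{-r(T-t)}d_u$, apply It\^o's Lemma to $c$, and read off $\tilde\pi_c$ by matching the diffusion coefficient against that of the wealth equation. Your extra steps---the put--call parity sanity check $c_x-p_x=1$ and the explicit compact rewriting $c=\tfrac{\theta\sqrt{T-t}}{\alpha}e^{-r(T-t)}(\phi(d_u)-\Phi(-d_u)d_u)$ to isolate $\tilde\pi_c$---are nice additions that the paper leaves implicit.
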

\begin{proof}
First take the partial derivatives of $c(t, \tilde{X}^{\hat{\pi}}_t)$:\newline
$c_t=   e^{-r(T-t)}[-\Phi(-d_u)rK_u + \dfrac{\theta}{\alpha} \sqrt{T-t} \phi(d) (r + \dfrac{1}{2(T-t)})] , \newline 
c_x = \Phi (-d_u), \newline
c_{xx} = \phi(-d_u) \dfrac{\alpha}{\sqrt{T-t} \theta} e^{r(T-t)}$
using $\tilde{X}^{\hat{\pi}}_t-K_ue^{-r(T-t)} = -d_u e^{-r(T-t)} \sqrt{T-t} \dfrac{\theta}{\alpha}$ and cancellations. \newline
Since $\tilde{X}^{\hat{\pi}}_t$ is an Ito drift diffusion process, we can apply Ito's Lemma and get \newline $dc = \{ c_t + c_x (r\tilde{X}^{\hat{\pi}}_t + \dfrac{\theta^2}{\alpha}e^{-r(T-t)}) + c_{xx} \dfrac{\theta^2}{2 \alpha^2}e^{-2r(T-t)}\}dt + c_x \dfrac{\theta}{\alpha} e^{-r(T-t)}dW_t$. 
\newline
In order to satisfy the dynamics of the wealth equation,  $dc=(rc+ \tilde{\pi}_c \theta \sigma)dt + \sigma \tilde{\pi}_c c dW(t)$ , it needs to hold \newline$ c_x \dfrac{\theta}{\alpha} e^{-r(T-t)}$ = $\sigma \tilde{\pi}_c c$, from which follows equation (4.5).
\end{proof}
Again, combining (4.5) with the optimal strategy gives the wanted result. \newline
 \begin{proposition} An optimal strategy for Problem 3 is given by the amount to be invested  at t 
 \begin{equation}\hat{\pi}_u(t,\tilde{X}^{\hat{\pi}}_t)= \dfrac{\theta}{\alpha \sigma}e^{-r(T-t)} - c(t,\tilde{X}^{\hat{\pi}}_t) \dfrac{\Phi(-d_u)}{\sigma \sqrt{T-t}(\phi(d_u) -\Phi(-d_u)d_u)} \end{equation} 
for the shadow wealth process $\tilde{X}^{\hat{\pi}}_t$ and $d_u = d_u(t, \tilde{X}^{\hat{\pi}}_t) \text{ as in Proposition 9}$. \end{proposition}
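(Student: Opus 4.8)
The plan is to follow the template of the proof of Proposition 7 verbatim, replacing the put by the call and the sum by a difference. First I would observe that the amount in (4.6) decomposes as
\[
\hat{\pi}_u(t,\tilde{X}^{\hat{\pi}}_t) = \hat{\pi}_t\,\tilde{X}^{\hat{\pi}}_t - \tilde{\pi}_c(t,\tilde{X}^{\hat{\pi}}_t)\,c(t,\tilde{X}^{\hat{\pi}}_t),
\]
i.e. the absolute amount prescribed by the optimal unconstrained strategy of Proposition 2 (applied to the shadow process) minus the absolute amount prescribed by the call-replicating portfolio of Proposition 10. I would then exploit the linearity of the wealth equation (2.3) in the invested amount: if $X^{(1)}$ and $X^{(2)}$ solve (2.3) with absolute stock investments $a^{(1)}$ and $a^{(2)}$, then $X^{(1)}-X^{(2)}$ solves (2.3) with investment $a^{(1)}-a^{(2)}$. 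Taking $a^{(1)}=\hat{\pi}_t\tilde{X}^{\hat{\pi}}_t$ with initial value $\tilde{X}_0$ produces the shadow wealth process $\tilde{X}^{\hat{\pi}}_t$ of (2.15), while $a^{(2)}=\tilde{\pi}_c c$ produces, by Proposition 10, the call-pricing process $c(t,\tilde{X}^{\hat{\pi}}_t)$. Hence the wealth generated by $\hat{\pi}_u$ is
\[
X^{\hat{\pi}_u}_t = \tilde{X}^{\hat{\pi}}_t - c(t,\tilde{X}^{\hat{\pi}}_t).
\]

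Next I would verify the two endpoints. At $t=T$ the pricing function collapses to the call payoff, $c(T,\tilde{X}^{\hat{\pi}}_T)=\max\{\tilde{X}^{\hat{\pi}}_T-K_u,0\}$, so
\[
X^{\hat{\pi}_u}_T = \tilde{X}^{\hat{\pi}}_T - \max\{\tilde{X}^{\hat{\pi}}_T-K_u,0\} = \min\{\tilde{X}^{\hat{\pi}}_T,K_u\},
\]
which is exactly the optimal terminal wealth (4.3) of Proposition 11 and trivially satisfies $X^{\hat{\pi}_u}_T\le K_u$ a.s. At $t=0$ I would confirm the budget constraint using the martingale property of $H_t\tilde{X}^{\hat{\pi}}_t$ established in the proof of Proposition 5 together with the risk-neutral price of the call:
\[
\mathbb{E}[H_T X^{\hat{\pi}_u}_T] = \mathbb{E}[H_T \tilde{X}^{\hat{\pi}}_T] - \mathbb{E}[H_T \max\{\tilde{X}^{\hat{\pi}}_T-K_u,0\}] = \tilde{X}_0 - c(0,\tilde{X}_0) = X^{\hat{\pi}_u}_0.
\]
Since the shadow value in Proposition 11 is defined through $y>0$ so that $\mathbb{E}[H_T X^{\hat{\pi}}_T]=X_0$ for this very terminal wealth, the left-hand side equals $X_0$, giving $X^{\hat{\pi}_u}_0=X_0$. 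As $\hat{\pi}_u$ thus lies in $\mathcal{A}$, starts at $X_0$, and attains the optimal terminal value of Lemma 1, it solves Problem 3.

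The genuinely new content has already been discharged in Lemma 1 and Proposition 11 (that $\min\{K_u,I(yH_T)\}$ is both optimal and the correct target) and in Propositions 9--10 (pricing and replication of the call), so the present argument is essentially bookkeeping. The one point that deserves care, and which I regard as the main obstacle, is the budget consistency: one must check that $\tilde{X}_0-c(0,\tilde{X}_0)$ equals $X_0$ exactly rather than merely bounding it, which is why I invoke the martingale identity and the state-price-density valuation together rather than a one-sided estimate. A secondary issue worth noting explicitly is admissibility, namely that $\hat{\pi}_u$ is $\mathcal{F}_t$-progressively measurable with $\int_0^T \hat{\pi}_u^2\,ds<\infty$; this is inherited from the two admissible component strategies on $[0,T)$, where the denominator $\phi(d_u)-\Phi(-d_u)d_u$ in $\tilde{\pi}_c$ stays bounded away from zero.
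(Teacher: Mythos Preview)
Your proposal is correct and follows exactly the paper's own approach: decompose $\hat{\pi}_u$ as $\hat{\pi}_t\tilde{X}^{\hat{\pi}}_t-\tilde{\pi}_c\,c$, infer that the resulting wealth process is $\tilde{X}^{\hat{\pi}}_t-c(t,\tilde{X}^{\hat{\pi}}_t)$, and read off the terminal value (4.3). The paper's proof is a two-line version of yours; your additional checks of the budget identity $X^{\hat{\pi}_u}_0=X_0$ and of admissibility are not carried out explicitly in the paper but are welcome extra care.
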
 \begin{proof}
By definition of $\hat{\pi}_u(t,\tilde{X}^{\hat{\pi}}_t) = \hat{\pi}_t \tilde{X}^{\hat{\pi}}_t - \tilde{\pi}_c(t,\tilde{X}^{\hat{\pi}}_t) c(t,\tilde{X}^{\hat{\pi}}_t) $ the resulting wealth process is 
$X^{\hat{\pi}_u}_t = \tilde{X}^{\hat{\pi}}_t - c(t, \tilde{X}^{\hat{\pi}}_t)$  with 
$X^{\hat{\pi}_u}_T = \tilde{X}^{\hat{\pi}}_T - \max\{\tilde{X}^{\hat{\pi}}_T -K_u ,0\}$, so (4.3) holds. \end{proof}

\subsubsection{The $K_l$-$K_u$-Strategy}
If we combine the upper and lower constraints on terminal wealth, we get an intuitive result, which is summarized in the following Proposition (using the previous notation).
\newline
\begin{proposition}
The strategy determined by the absolute investment in the stock at t, 
\begin{equation}\hat{\pi}_{l,u}(t,\tilde{X}^{\hat{\pi}}_t)=  \hat{\pi}_t \tilde{X}^{\hat{\pi}}_t +\tilde{\pi}_p(t,\tilde{X}^{\hat{\pi}}_t)  p(t,\tilde{X}^{\hat{\pi}}_t) - \tilde{\pi}_c(t,\tilde{X}^{\hat{\pi}}_t) c(t,\tilde{X}^{\hat{\pi}}_t), \end{equation} provided that ${X}_0$ =  $\mathbb{E}[H_T X^{\hat{\pi}_{l,u}}_T]$, 
is a solution to \begin{equation} \mathbb{E}[U(X^{\hat{\pi}_{l,u}}_T)] = \sup\limits_{\pi\in \mathcal{A}}\mathbb{E}[U(X^{\pi}_T)]\ \text{and} \ K_l\leq X^{\hat{\pi}_u}_T \leq K_u \text{ a.s. for }K_l < K_u .\end{equation} \end{proposition}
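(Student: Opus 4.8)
The plan is to fold the two one-sided duality arguments already developed into a single characterisation of the optimal terminal wealth, and then recover the strategy from the replication results. First I would establish a two-sided analogue of Lemma 1 (and of Lemma 2 in \cite{Zhou}): for the doubly-constrained problem with $K_l \le X^{\pi}_T \le K_u$ a.s., the optimal terminal wealth is the clamped unconstrained optimiser
\begin{equation*}
X^{\hat{\pi}_{l,u}}_T = \max\{K_l, \min\{K_u, I(yH_T)\}\},
\end{equation*}
where $y > 0$ is fixed by the budget constraint $\mathbb{E}[H_T X^{\hat{\pi}_{l,u}}_T] = X_0$.

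The verification reuses the concavity inequality $U(a) - U(b) \le U'(b)(a-b)$ exactly as in the proof of Lemma 1, but now splits the sample space into three regions according to whether $I(yH_T)$ lies below $K_l$, strictly between the bounds, or above $K_u$. On the middle region $U'(X^{\hat{\pi}_{l,u}}_T) = U'(I(yH_T)) = yH_T$, so the estimate is an equality. On $\{I(yH_T) \le K_l\}$ the optimal wealth equals $K_l \ge I(yH_T)$, so $U'(X^{\hat{\pi}_{l,u}}_T) \le yH_T$ while any competitor satisfies $X^{\pi}_T \ge K_l = X^{\hat{\pi}_{l,u}}_T$; on $\{I(yH_T) \ge K_u\}$ the optimal wealth equals $K_u \le I(yH_T)$, so $U'(X^{\hat{\pi}_{l,u}}_T) \ge yH_T$ while $X^{\pi}_T \le K_u = X^{\hat{\pi}_{l,u}}_T$. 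In each case the sign of $(X^{\pi}_T - X^{\hat{\pi}_{l,u}}_T)$ is precisely what is needed to obtain $U'(X^{\hat{\pi}_{l,u}}_T)(X^{\pi}_T - X^{\hat{\pi}_{l,u}}_T) \le yH_T(X^{\pi}_T - X^{\hat{\pi}_{l,u}}_T)$ pointwise, whence
\begin{equation*}
\mathbb{E}[U(X^{\pi}_T)] - \mathbb{E}[U(X^{\hat{\pi}_{l,u}}_T)] \le y\,\mathbb{E}[H_T(X^{\pi}_T - X^{\hat{\pi}_{l,u}}_T)] = y(X_0 - X_0) = 0
\end{equation*}
for every admissible $\pi$ meeting both constraints.

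Next I would identify, exactly as in Propositions 4 and 8, that $I(yH_T) = \tilde{X}^{\hat{\pi}}_T$ for the shadow value $\tilde{X}_0$, and apply the elementary algebraic identity
\begin{equation*}
\max\{K_l, \min\{K_u, \tilde{X}^{\hat{\pi}}_T\}\} = \tilde{X}^{\hat{\pi}}_T + \max\{K_l - \tilde{X}^{\hat{\pi}}_T, 0\} - \max\{\tilde{X}^{\hat{\pi}}_T - K_u, 0\},
\end{equation*}
which holds because $K_l < K_u$ guarantees the put and call payoffs never overlap. This decomposes the optimal terminal wealth into the shadow wealth plus a put payoff minus a call payoff. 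By the replication results of Propositions 5--6 and 9--10, the strategy $\hat{\pi}_{l,u}$ in (4.7) produces, through (2.3), the wealth process $X^{\hat{\pi}_{l,u}}_t = \tilde{X}^{\hat{\pi}}_t + p(t,\tilde{X}^{\hat{\pi}}_t) - c(t,\tilde{X}^{\hat{\pi}}_t)$, whose terminal value is precisely the clamped expression above, hence lies in $[K_l, K_u]$ a.s.\ and coincides with the optimal terminal wealth.

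The hard part will be the two-sided verification lemma, where the sign bookkeeping must be carried out carefully on the two boundary regions: the direction of the concavity estimate is preserved directly on $\{I(yH_T) \le K_l\}$, but on $\{I(yH_T) \ge K_u\}$ it must survive multiplication by a nonpositive factor, which reverses the inequality $U'(X^{\hat{\pi}_{l,u}}_T) \ge yH_T$ into the desired bound. Once that is settled, the recognition of the shadow wealth, the put/call decomposition, and the budget matching that fixes $y$ are routine consequences of the machinery already assembled for the one-sided problems.
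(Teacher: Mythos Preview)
Your proposal is correct and in fact more complete than the paper's own argument. The paper's proof consists of a single observation: the strategy (4.7) produces the terminal wealth $\tilde{X}^{\hat{\pi}}_T + \max\{K_l-\tilde{X}^{\hat{\pi}}_T,0\} - \max\{\tilde{X}^{\hat{\pi}}_T-K_u,0\}$, and since $K_l<K_u$ this ``both satisfies (3.2) and (4.2) and hence solves both Problem 2 and Problem 3.'' That last step is loose: Problems 2 and 3 each pin down their own shadow value through their own budget constraint, so a terminal wealth that merely has the \emph{form} of (3.2) or (4.2) with some other shadow value is not automatically optimal for either one, let alone for the genuinely new two-sided problem (4.8). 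What the paper is really relying on, without stating it, is precisely the two-sided verification lemma you write out: the clamped random variable $\max\{K_l,\min\{K_u,I(yH_T)\}\}$ dominates every admissible competitor once $y$ is fixed by the budget identity $\mathbb{E}[H_TX^{\hat{\pi}_{l,u}}_T]=X_0$.

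Your three-region concavity bookkeeping is exactly the right way to make this rigorous, and the algebraic identity $\max\{K_l,\min\{K_u,x\}\}=x+(K_l-x)^+-(x-K_u)^+$ together with the replication results then delivers the strategy just as in the paper. So the underlying decomposition is the same, but you supply the missing duality argument that the paper sweeps into a one-line appeal to the separate one-sided problems.
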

\begin{proof} By definition of $\hat{\pi}$, $\tilde{\pi}_p$ and $\tilde{\pi}_c$, the strategy (4.7) results in the terminal wealth \newline $\tilde{X}^{\hat{\pi}}_T +   \max\{K_l- \tilde{X}^{\hat{\pi}}_T ,0\}   - \max\{\tilde{X}^{\hat{\pi}}_T - K_u, 0\}$ , which both satisfies (3.2) and(4.2) (note that $K_u > K_l$) and hence solves both Problem 2 and Problem 3. 
\end{proof}

This implies that the shadow initial wealth $\tilde{X}_0$ will increase if we add an upper constraint to the (pre-existing) lower constraint on terminal wealth, because an additional amount of 'money' is available due to selling a call-option. The effect of the increase will depend on the price of the call-option, which in turn is determined by the strike price $K_u$. Before we look at this effect more closely, we analyse the behaviour of the $K_u$-strategy, the optimal strategy that only faces an upper constraint, separately.

\subsection{Analysis of the $K_l$-$K_u$-Strategy }
In contrary to the case of a lower constraint, now a call option is involved in the optimal strategy, whose value is increasing with wealth. At the same time it also becomes very sensitive and as a consequence this will require to short-sell large amounts of stocks in very short time. The good news is that with increasing $K_l$ and  $K_u$ an almost sure convergence to $\hat{\pi}_{l}$ can be observed, since the frequency of these extremes diminishes. We will also find a criterion for a 'balance' between $K_u$ and $K_l$.
\subsubsection{Analysis of the $K_l$-Strategy}
Since the payoffs of put and call options are antagonistic, so will be the behaviour of the strategies linked to them. This means for example, if the value of the shadow wealth rises, the put-option loses value and the replicating strategy would increase investment, equivalently the call-option would gain value, so its replicating strategy would reduce investment. It is therefore not surprising that the difference to the optimal strategy  due to an upper constraint looks like the 'inverted' difference to the optimal strategy due to a lower constraint (below, an illustration of the case $K_u$ = 2'000 and $\mu$ = 0.03, $r$ = 0.01, $\sigma$ = 0.1 and $T$=20).

\begin{figure}[H]
 \centering \includegraphics[width=80mm]{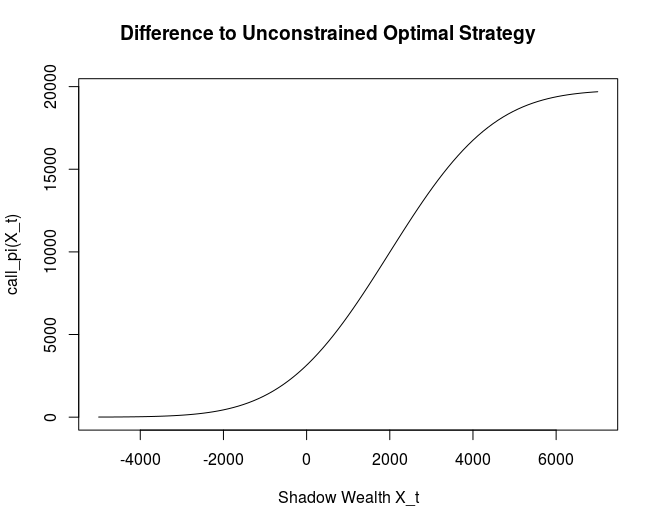}
 \caption{Difference $\hat{\pi}$-$\hat{\pi}_u$ in function of current shadow wealth (amount invested at t=19)}
\end{figure}
This can also be seen in the formula, for example it holds that:
\begin{center}-$\tilde{\pi}_c (d)$=$\dfrac{-\Phi(-d)}{\sigma \sqrt{T-t} (\phi(d)- \Phi(-d)d)} =  \dfrac{-\Phi(-d)}{\sigma \sqrt{T-t} (\phi (-d) + (-d) \Phi(d)}$ =$-\tilde{\pi}_p (-d)\ \forall\ d \in\mathbb{R}.$\end{center} 

A more detailed illustration and comparison of these processes can be found in the Appendix. Here, we will only briefly outline the consequences of this strategy on the wealth process. It can be observed that the behaviour over time is similar for both constrained strategies and it shows a peak on a high values of t (in the case of the Appendix it was at t=17, in the one from the previous chapter it was at t= 19). Combining this with the increase of the strategy for high values of $\tilde{X}^{\hat{\pi}}_t$ as seen in Figure 16, these effects add up for high $\tilde{X}^{\hat{\pi}}_t$ and t. This leads to very small (i.e. negative) values for $\hat{\pi}_l$, since the positive investment from the optimal strategy can't compensate this extreme effect. 
 \newline 
In the example below it can be seen particularely well how the time and the shadow wealth influence the strategy and the wealth resulting from it. 
 \begin{center}
\begin{figure}[H] 
 \begin{minipage}{0.65\linewidth}
\includegraphics[width=77mm]{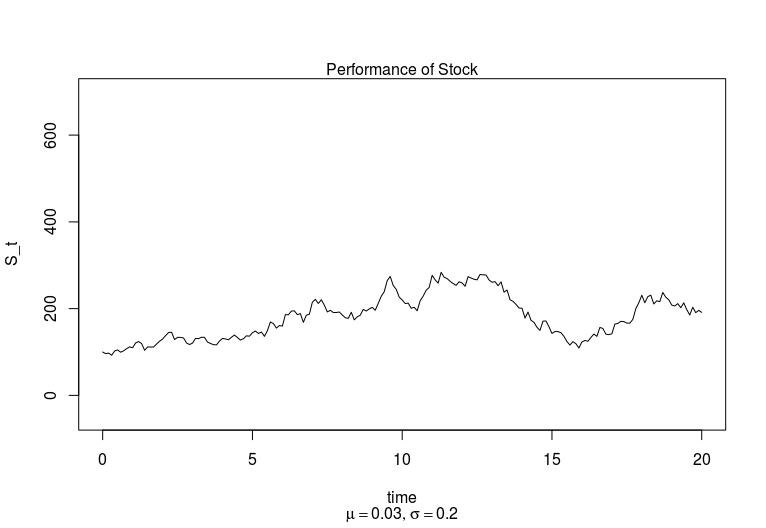} \end{minipage} \begin{minipage}{0.65\linewidth} \includegraphics[width=77mm]{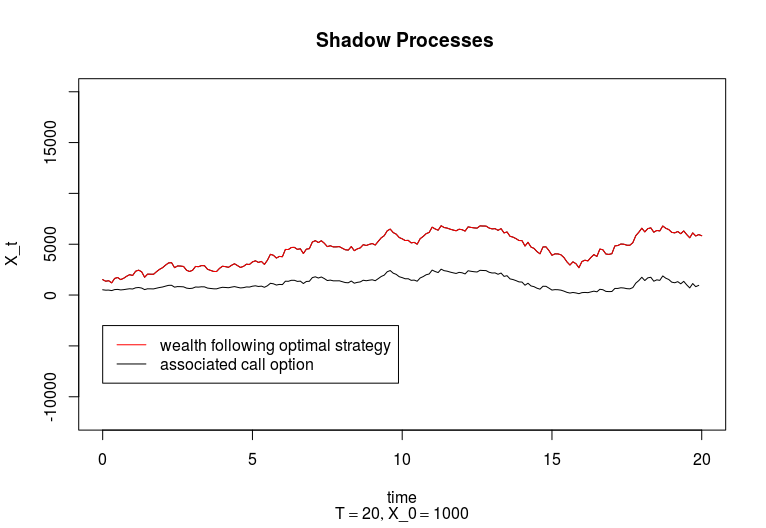}\end{minipage}  \begin{minipage}{0.65\linewidth}\includegraphics[width=77mm]{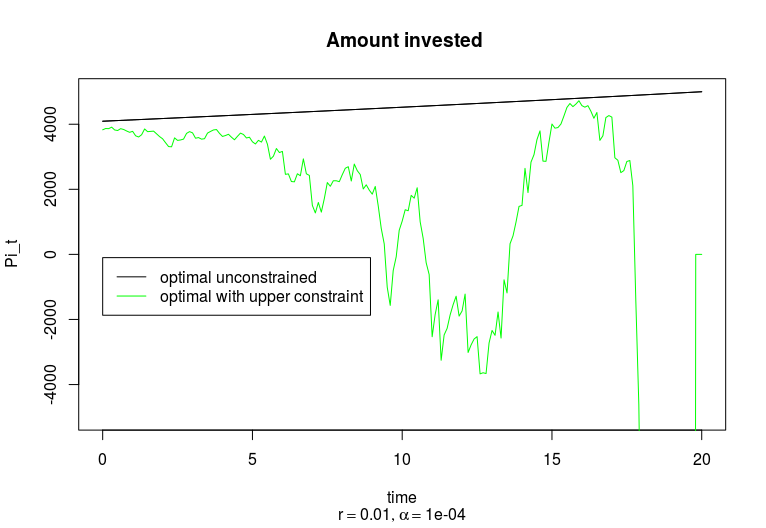} \end{minipage} \begin{minipage}{0.65\linewidth}\includegraphics[width=77mm]{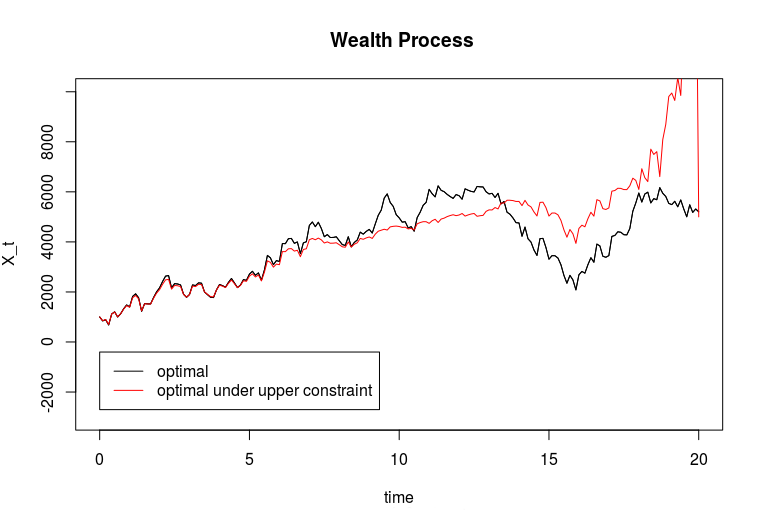} \end{minipage}
\caption{$\hat{\pi}_u$ becomes negative for higher wealth and nearing maturity ($K_u$ = 1'250).}
\end{figure} \end{center} In the first time sequence from t=0 to circa t= 10 the stock price, the shadow wealth and the associated call option are moderately increasing. However, the value of shadow wealth is increasing from circa 1'000 to circa 5'000, which has an overproportional effect on the strategy, as seen in Figure 16. This is why the moderate fluctations of the shadow wealth produce movements in the strategy that are increasingly more accentuated. To this adds the effect of time, which can be seen later on. Even though the movements in the shadow wealth are not increasingly extreme, the strategy $\hat{\pi}_l$ is increasingly sensitive (even around low values of $\tilde{X}^{\hat{\pi}}_t$ at circa $t$=17). This being said, the strategy seems to work well, the strong negative investments in stocks towards maturity pay off when the stock is even just slightly decreasing. In fact, since the goal was to reach a target wealth  $K_u$ or below, it might perform too well. This is could be the reason why investments are drastically reduced to zero in the last periods before maturity.\newline
In practice, the strong negative investments can be interpreted as short-selling and borrowing and could lead to a few problems. First, transaction costs might become a considerable factor, as the strategy requires to sell and buy a large number of stocks within short time. Then, the rebalancing will happen in discrete time, opposite to the assumptions of the theory (in the simulation, the step width was $h$=1/10). Since the strategy is very sensitive, the output might differ more from the theoretical results (or possibly less for shorter $h$) and further investigations of the impact of $h$ would be useful before implementing this strategy in practice.
Finally, in the case of poor stock performance and loss of value, the terminal wealth itself will be negative.  Without a lower constraint, this seems not only to lead to an increased probability of debts, but in particular an increased probability of very hight depts. However, this problem could be easily avoided by setting a lower constraint for terminal wealth. 
Looking at the effect of the upper constraint, the impact of the call-option-related part of $\hat{\pi}_{l,u}$ can be reduced by setting a higher $K_u$. However, it might be difficult to explain the choice of a very high constraint, as the missed potential might be minimal. A reasonable choice of the constraints will also be discussed in the next section. For a first assertion, we have set $K_u$ = 3'000 in the next scenario.
 \begin{center}
\begin{figure}[H] 
 \begin{minipage}{0.65\linewidth}
\includegraphics[width=77mm]{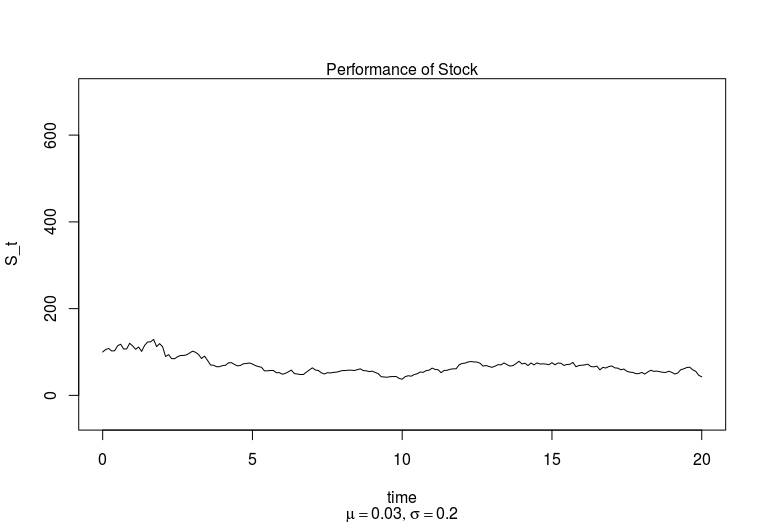} \end{minipage} \begin{minipage}{0.65\linewidth} \includegraphics[width=77mm]{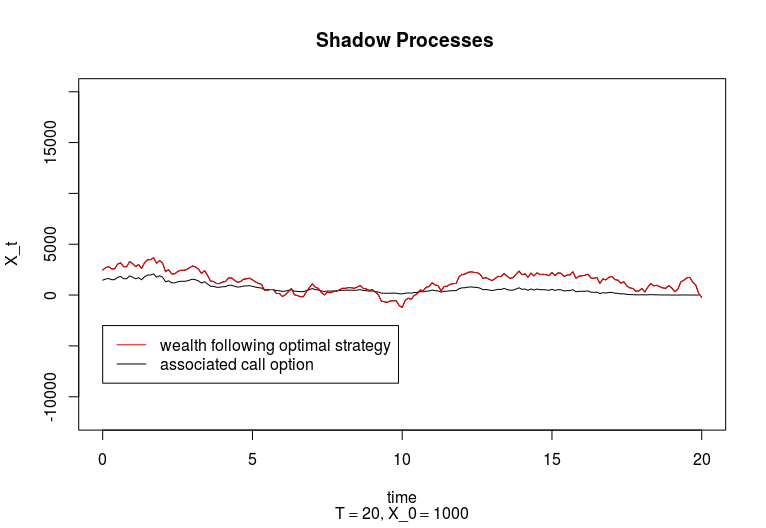}\end{minipage}  \begin{minipage}{0.65\linewidth}\includegraphics[width=77mm]{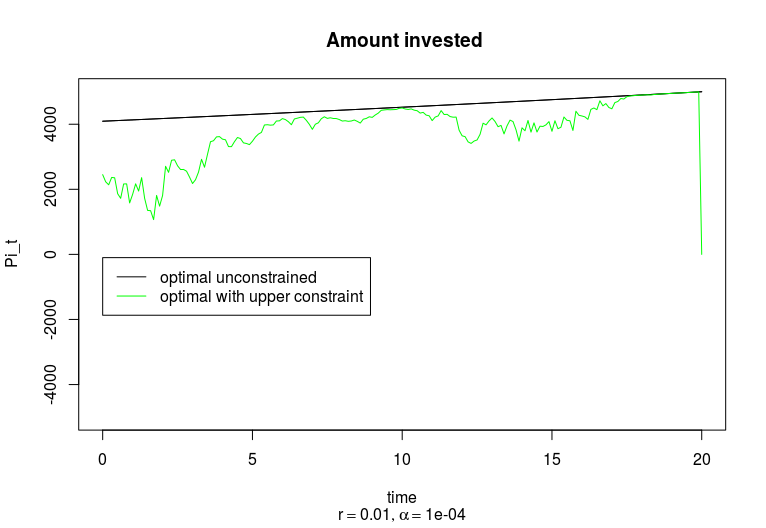} \end{minipage} \begin{minipage}{0.65\linewidth}\includegraphics[width=77mm]{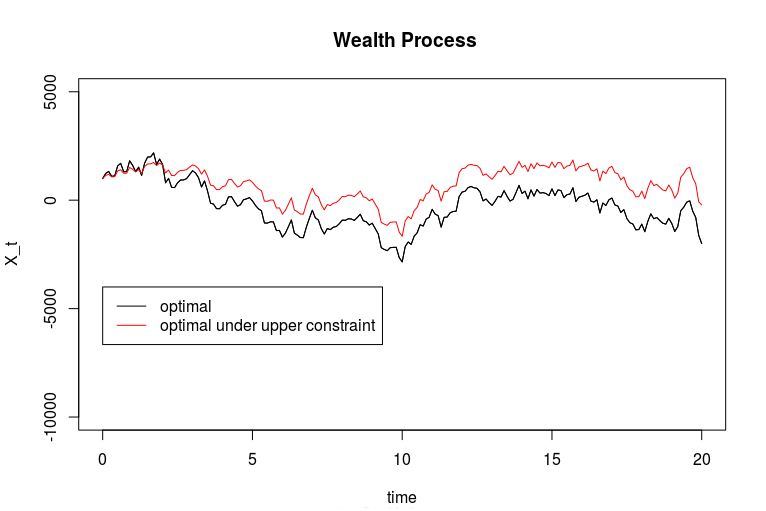} \end{minipage}
\caption{Poor stock performance leads to higher investment ($K_u$ = 3'000).}
\end{figure}\end{center}

Here, the effects we have seen before are also playing a role, but overall their impact is reduced. Towards the end of the investment horizon, we even observe a convergence of the optimal $K_u$-constrained strategy to the optimal unconstrained one. This is, because of two reasons. First, a higher $K_u$ generally leads to a higher probability that the shadow wealth falls under it, which would result in a payoff of zero for the call-option. Second, the performance of the stock in this case is rather poor, and so is the performance of shadow wealth. Hence, the price of the option is near zero, especially for times near to maturity. This again shows the importance of the value of shadow wealth to the strategy: looking back at Figure 16, which illustrated a similar case, we see that the difference of investment to the optimal unconstrained strategy for a shadow wealth around 0 is a factor 10 less than the one from wealth around 5'000. Equivalently, a 'bad' stock performance leads to an increased investment by $\hat{\pi}_u$, which makes sense, because in this scenario the target wealth then would be less likely to be surpassed. 
\newline 
It would be interesting to further investigate the behaviour of $\hat{\pi}_u$. But since the investment-constrained $K_u$-strategy, as well as the combination with the lower constraint, the $K_l$- $K_u$-strategy, might reduce its sensitivity and since ultimately these strategies are of more practical interest to us, we would rather have a closer look at these two. 
\subsubsection{Analysis of the $K_l$-$K_u$-Strategy}
We now combine the upper and the lower bound for terminal wealth and measure the impact of these constraints on the strategy. \newline First, note that $K_l$ and $K_u$ determine the weight of $\tilde{\pi}_c c(\tilde{X}^{\hat{\pi}}_t, t)$ and $\tilde{\pi}_p p(\tilde{X}^{\hat{\pi}}_t, t)$ on the total optimal strategy $\hat{\pi}_{l,u}$. Also, they are critical for the computation of $\tilde{X}_0$. We therefore first establish a point of orientation for comparison of $K_l$ and $K_u$. \newline
\begin{proposition}
For an upper bound $K_u$ and a lower bound $K_l$ for terminal wealth, the initial wealth $X_0$ and a solution for shadow wealth for the  optimal constrained strategy from Proposition 12 $\tilde{X_0}$, it holds: 
\begin{equation}
K_u + K_l = 2X_0e^{rT} \implies \tilde{X}_0 = X_0.
\end{equation} .\end{proposition}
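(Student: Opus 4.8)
The plan is to exploit the fact that the shadow value $\tilde{X}_0$ is the \emph{unique} solution of the budget constraint $X_0 = \mathbb{E}[H_T X^{\hat{\pi}_{l,u}}_T]$: rather than solving for $\tilde{X}_0$ explicitly from its defining formula, I would guess $\tilde{X}_0 = X_0$ and verify that this guess satisfies the budget constraint exactly when $K_u + K_l = 2X_0 e^{rT}$. Since the constraint pins down $\tilde{X}_0$ uniquely, this verification suffices.

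First I would rewrite the budget constraint in terms of the option prices from Propositions 5 and 9. From Proposition 12 the terminal wealth is $X^{\hat{\pi}_{l,u}}_T = \tilde{X}^{\hat{\pi}}_T + \max\{K_l - \tilde{X}^{\hat{\pi}}_T, 0\} - \max\{\tilde{X}^{\hat{\pi}}_T - K_u, 0\}$. Using the identity $H_T = e^{-rT}\tfrac{d\mathbb{Q}}{d\mathbb{P}}$, which follows from the Girsanov density relating $\mathbb{P}$ and $\mathbb{Q}$ in the proof of Proposition 5, every physical expectation $\mathbb{E}[H_T\,\cdot\,]$ becomes $e^{-rT}\mathbb{E}_{\mathbb{Q}}[\,\cdot\,]$. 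Applying this term by term and recalling the risk-neutral pricing formulas, and using $\mathbb{E}[H_T \tilde{X}^{\hat{\pi}}_T] = e^{-rT}\mathbb{E}_{\mathbb{Q}}[\tilde{X}^{\hat{\pi}}_T] = \tilde{X}_0$ (the discounted shadow process is a $\mathbb{Q}$-martingale), I obtain
$$\mathbb{E}[H_T X^{\hat{\pi}_{l,u}}_T] = \tilde{X}_0 + p(0, \tilde{X}_0) - c(0, \tilde{X}_0).$$
Hence the budget constraint reads $\tilde{X}_0 + p(0,\tilde{X}_0) - c(0,\tilde{X}_0) = X_0$, and it now suffices to show that $p(0, X_0) = c(0, X_0)$ under the stated relation between the two bounds.

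This reduces to a put--call symmetry statement, which is the crux of the argument. Setting $\tilde{X}_0 = X_0$, the shadow terminal wealth is $\tilde{X}^{\hat{\pi}}_T = X_0 e^{rT} + \tfrac{\theta}{\alpha}W^{\mathbb{Q}}_T$, which under $\mathbb{Q}$ is normal with mean $m := X_0 e^{rT}$ and variance $\tfrac{\theta^2}{\alpha^2}T$. The condition $K_u + K_l = 2X_0 e^{rT}$ says precisely that $K_u$ and $K_l$ are reflections of each other about the mean $m$, i.e. $K_u = m + \delta$ and $K_l = m - \delta$ with $\delta = (K_u - K_l)/2$. Writing $Z := \tilde{X}^{\hat{\pi}}_T - m$, which is symmetric about $0$ under $\mathbb{Q}$, I would express the undiscounted call and put payoffs as $\mathbb{E}_{\mathbb{Q}}[(Z - \delta)^+]$ and $\mathbb{E}_{\mathbb{Q}}[(-Z - \delta)^+]$; the distributional symmetry $Z \stackrel{d}{=} -Z$ makes these equal, so $c(0, X_0) = p(0, X_0)$ and the budget constraint collapses to $\tilde{X}_0 = X_0$.

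Finally I would close the loop with uniqueness: the map $\tilde{X}_0 \mapsto \mathbb{E}[H_T X^{\hat{\pi}_{l,u}}_T]$ is strictly increasing, because $X^{\hat{\pi}_{l,u}}_T = \operatorname{median}(K_l, \tilde{X}^{\hat{\pi}}_T, K_u)$ is non-decreasing in $\tilde{X}^{\hat{\pi}}_T$ and strictly increasing on the event $\{K_l < \tilde{X}^{\hat{\pi}}_T < K_u\}$, which has positive probability, while $\tilde{X}^{\hat{\pi}}_T$ is strictly increasing in $\tilde{X}_0$ and $H_T > 0$. Therefore the value $\tilde{X}_0 = X_0$ found above is the only solution of the budget constraint, which proves the implication. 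I expect the main obstacle to be handling the put--call symmetry cleanly, in particular matching the reflection of the strikes about the $\mathbb{Q}$-mean to the hypothesis $K_u + K_l = 2X_0 e^{rT}$, rather than the mechanical rewriting of the budget constraint in risk-neutral terms.
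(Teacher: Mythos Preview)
Your proposal is correct and follows the same overall line as the paper: both check the budget constraint $\tilde{X}_0 + p(0,\tilde{X}_0) - c(0,\tilde{X}_0) = X_0$ at the guess $\tilde{X}_0 = X_0$ by showing $p(0,X_0) = c(0,X_0)$ under the hypothesis $K_u + K_l = 2X_0 e^{rT}$. The paper obtains the put--call symmetry by observing $d_u = -d_l$ and substituting into the explicit option formulas of Propositions~5 and~9, whereas you argue directly from the distributional symmetry $Z \stackrel{d}{=} -Z$ of the centered $\mathbb{Q}$-Gaussian $\tilde{X}^{\hat{\pi}}_T - X_0 e^{rT}$; this is a cleaner and more conceptual route to the same identity. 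You also add a monotonicity-based uniqueness argument to conclude $\tilde{X}_0 = X_0$ is \emph{the} shadow value, which the paper omits (it stops at ``$X_0$ is a solution for shadow wealth'').
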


\newpage

\begin{proof}
First note that $K_u = 2X_0e^{rT} -K_l$ implies $d_u$ =  $K_u -X_0e^{rT}$ = -$(K_l - X_0 e^{rT})$ = -$d_l$. \newline Plugging this in the formula of the call option price gives \newline c(0, $X_0$) \newline =  $\Phi(-d_u) ( X_0 - K_u e^{rT}) + \dfrac{\theta}{\alpha} \sqrt{T} e^{-rT}\phi(d_u)$\newline
 =  $\Phi(d_l) ( X_0 - (2X_0 e^{rT}-K_l) e^{rT}) + \dfrac{\theta}{\alpha} \sqrt{T} e^{-rT}\phi(-d_l)$\newline
 =  $\Phi(d_l) ( K_l e^{-rT}-X_0) + \dfrac{\theta}{\alpha} \sqrt{T} e^{-rT}\phi(d_l)$\newline
 = p(0,$X_0$)
 \newline
 So, the prices at t=0 of the options are the same, hence they set each other off and it holds: \newline
 $X_0$ + p (0,$X_0$) - c(0,$X_0$) = $X_0$, hence $X_0$ is a solution for shadow wealth. 
\end{proof} We will see later that if the described relation between $K_u$ and $K_l$ holds, the distribution of terminal wealth on [$K_l$,$K_u$] will be identical to the distribution of the optimal unconstrained terminal wealth. Further, it can easily be seen that a decrease of $K_l$ and an increase of $K_u$ at the same time by the same amount will have no impact on the distribution (except for the limits of its definition area).
\newline
However, if $K_l + K_u > 2X_0e^{rT}$, then $\tilde{X_0}$< $X_0$ and the same holds for the inequality in the other sense. To see how $\tilde{X}_0$ changes for different ($K_l,K_u$)-combinations, refer to the illustration below, where $X_0$=1'000 (in a setting with $r$=0.01, $\mu$=0.03, $\sigma$=0.2).
\begin{figure}[H] 
\centering \includegraphics[width=150mm]{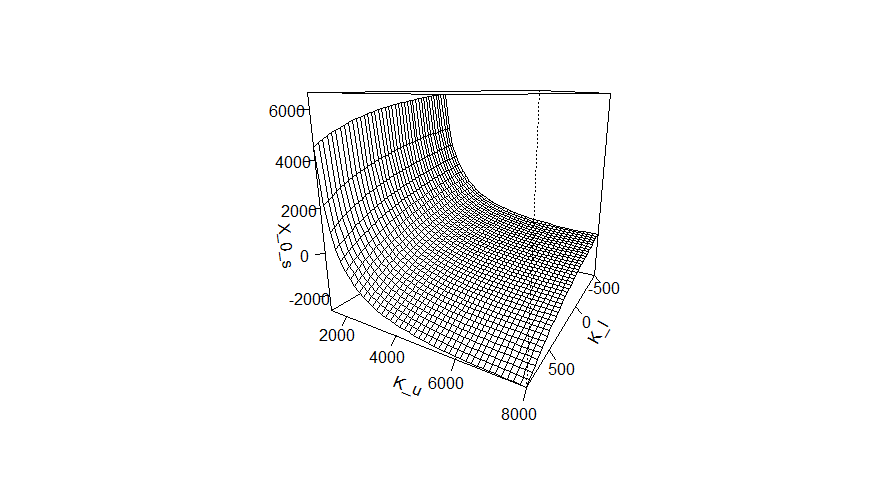} \caption{Shadow initial wealth in function of $K_l$ and $K_u$ ($X_0$ = 1'000)}
\end{figure}
We will now have a closer look on the qualitative implications on the strategies if $K_l$ and  $K_u$ vary and are particularly interested in reducing the sensitivity of $\hat{\pi}_{l,u}$ towards $\tilde{X}^{\hat{\pi}}_t$. The following scenarios are simulated for a sample of 20 paths and $T$=20, $r$=0.01, $\sigma$ = 0.2 and $\mu$ =0.03. $K_l$ is fixed at 0 and for a better comparison, the scales in the diagrams weren't adapted. \newline
\begin{figure}[H] {\subcaption{$K_u$=4'000}
 \begin{minipage}{0.44\linewidth}
\includegraphics[width=75mm]{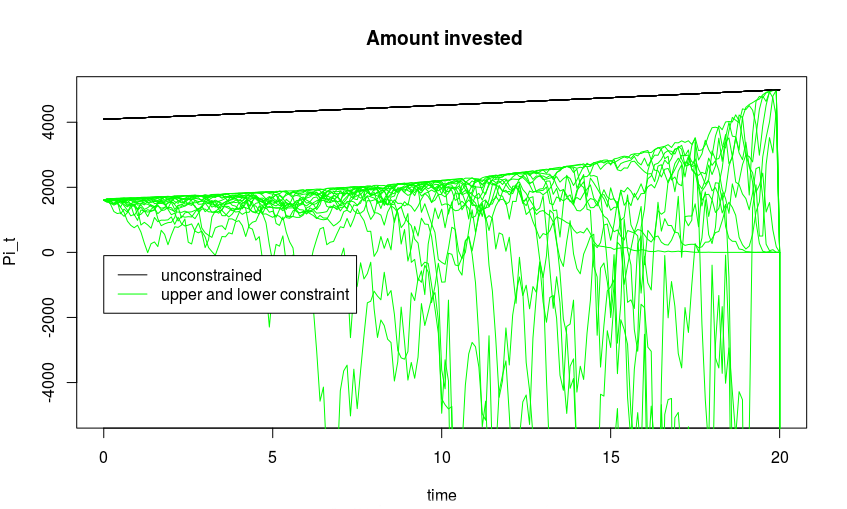} \end{minipage} \begin{minipage}{0.44\linewidth} \includegraphics[width=75mm]{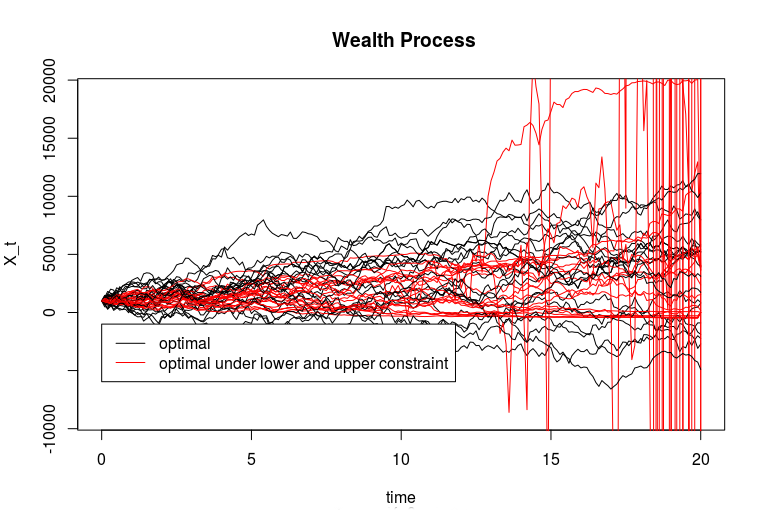}\end{minipage} \subcaption{$K_u$=2'443} 
 \begin{minipage}{0.44\linewidth}
 \includegraphics[width=75mm]{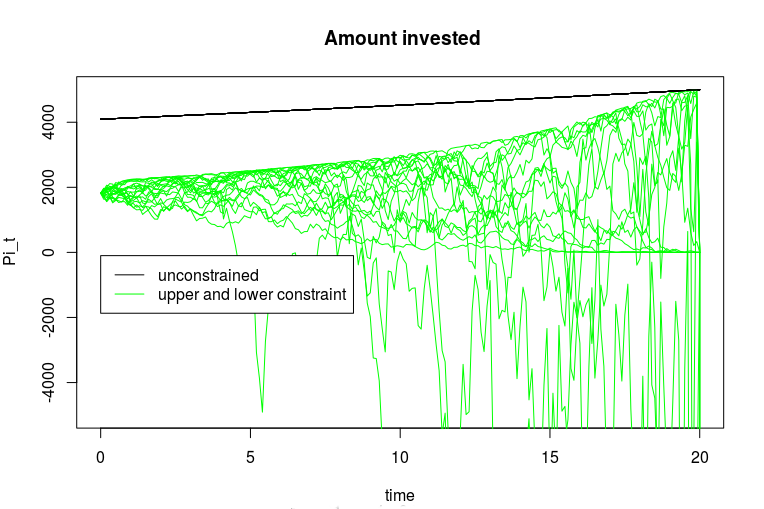} \end{minipage} \begin{minipage}{0.45\linewidth}\includegraphics[width=78mm]{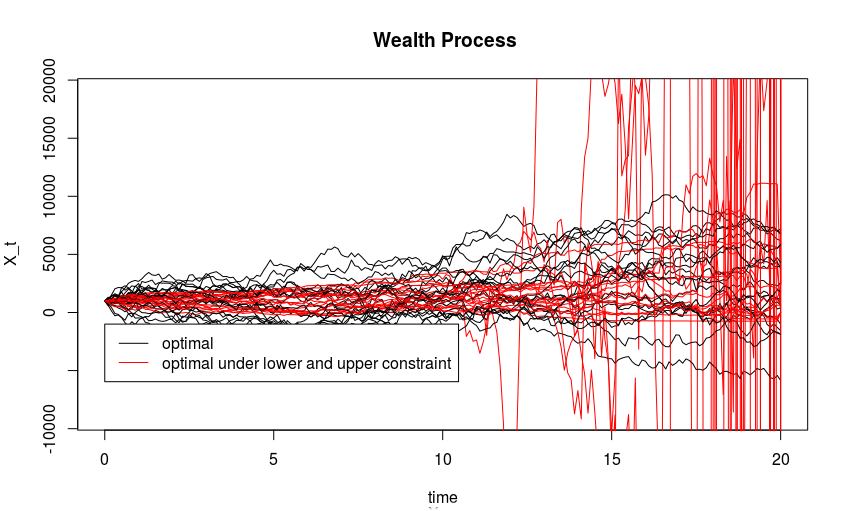} \end{minipage} \subcaption{$K_u$=1'500} 
 \begin{minipage}{0.5\linewidth} 
 \includegraphics[width=76mm]{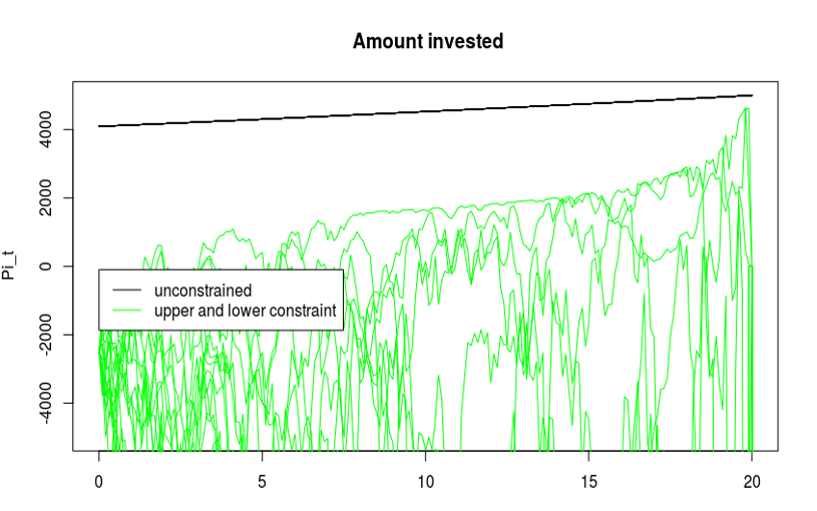} \end{minipage}  \begin{minipage}{0.45\linewidth}\includegraphics[width=78mm]{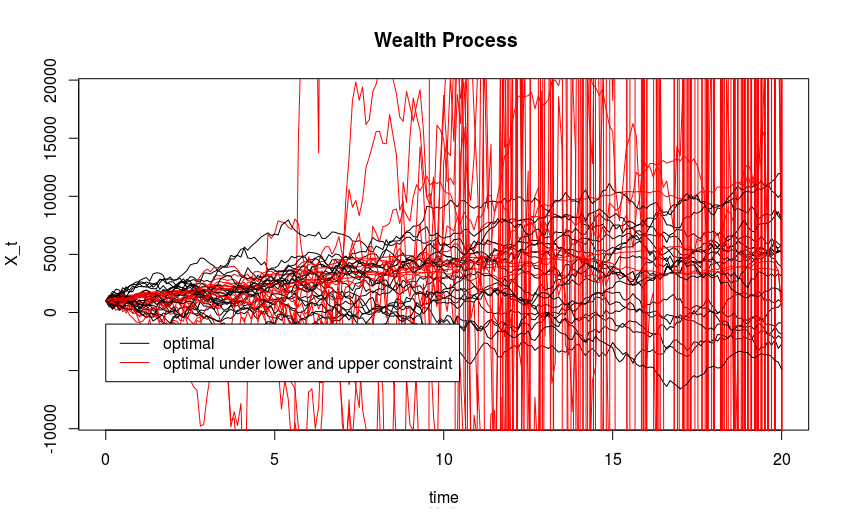}\end{minipage} }
\caption{Amount invested over time and wealth processes for $\hat{\pi}_{l,u}$ (different $K_u$)}
\end{figure}
It can be observed, that the impact of the part of the strategy related to the upper constraint, $\hat{\pi}_u c(0,\tilde{X}^{\hat{\pi}}_t)$ is still strong, and its extremely low investments dominate the total strategy and the wealth process. However, a difference between different upper constraints can be found. In the first scenario, $K_u$ = 4'000 is rather large (but it lies between the 50$\%$ and 75$\%$ quantile of the optimal strategy under the lower bound $K_l$=0, so it still represents a significant reduction of upside potential) and so the weight of the upper constrained-part of the strategy is smaller. This can be seen, since fewer strategy-paths follow extremely low investments and there is a concentration on an upper line converging to the optimal strategy, which seems to be an upper boundary for the investments. Also, looking at the terminal wealth, two points of concentration can be spotted: $K_l$ and $K_u$, indicating that the terminal wealth distribution now has two probability mass points. \newline
The scenario in the middle represents the setting from Proposition 13, where $X_0$ = $\tilde{X}_0$. It is, in a sense, a balanced mix of upper and lower constraints, and so the negative excesses of investment are more pronounced. From the paths of the optimal wealth processes it can be guessed that the stock performance is rather worse than in the first scenario. Since it is known that $\hat{\pi}_u$ is sensitive especially to high values of $X_t$, the investments of this scenario would be expected to be even smaller for other simulations. 
\newline In the third case, the upper constraint is very low, which leads, as expected, to an even higher impact of $\tilde{\pi}_c c(0,\tilde{X}^{\hat{\pi}}_t$) on $\hat{\pi}_{l,u}$. Here, from $t$=0 onwards, most of the investments are negative. 
 
The case for different $K_l$ is similar: For lower $K_l$ the $\hat{\pi}_{l,u}$ is closer to $\hat{\pi}_u$ and for higher $K_l$ it is closer to $\hat{\pi}_u$. An illustration of this can be found in the Appendix. 
If we combine these two results, and set a high lower constraint and a high upper constraint, the total strategy indeed looks more acceptable. For example, in the case below, we have set $K_l$ = 800 and $K_u$ = 5'000 for an initial investment of 1'000. \newline

\begin{figure} [H]\begin{minipage}{0.53\linewidth}
 \includegraphics[width=76mm]{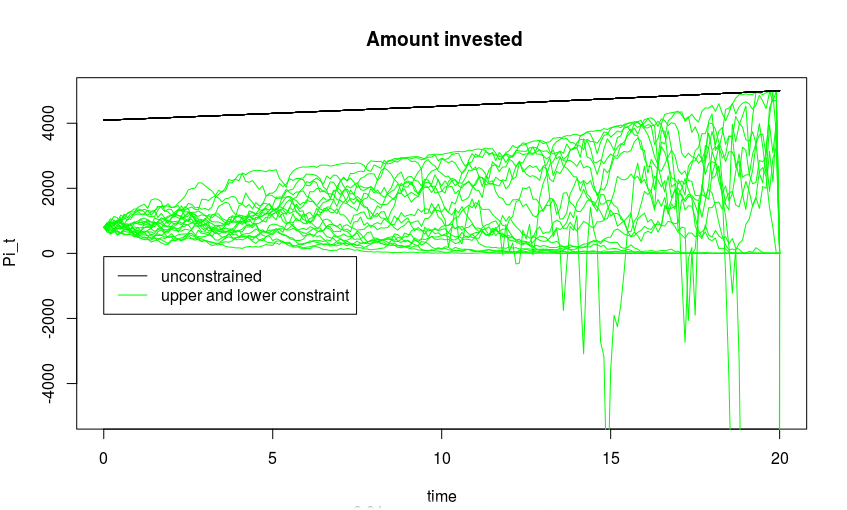} \end{minipage}  \begin{minipage}{0.53\linewidth}\includegraphics[width=76mm]{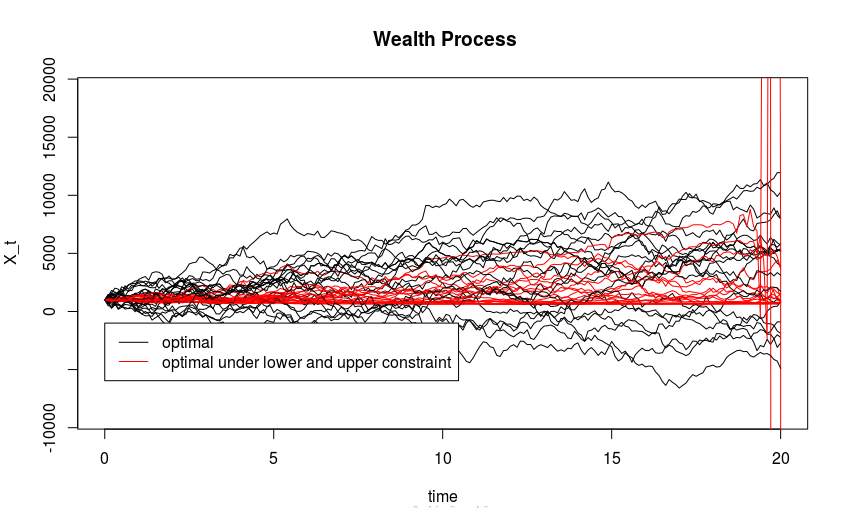}\end{minipage} 
\caption{Amount invested over time and wealth processes for $\hat{\pi}_{l,u}$ (high $K_l$ and $K_u$)}
\end{figure}

\subsection{Distribution of Terminal Wealth}
In this section we will see, that the effect of adding lower and upper bounds to the optimal strategy on terminal wealth distribution can be described as a shift of the quantiles by the future value of the  difference of initial wealth and initial shadow wealth. 
\newline \newline
Let us first calculate the theoretical distribution of terminal wealth, following the optimal strategy subject to both a lower constraint $K_l$ and an upper constraint $K_u$. \newline
$\mathbb{P}[X^{\hat{\pi}_{l,u}}_T \le x]\newline  =  \mathbb{P}[\tilde{X}^{\hat{\pi}}_T - \max\{ \tilde{X}^{\hat{\pi}}_T -K_u ,0\}  + \max\{K_l-\tilde{X}^{\hat{\pi}}_T ,0\} \le x] $ by Proposition (12) \newline
= $\mathbb{P}[ \{ \tilde{X}^{\hat{\pi}}_T | \tilde{X}^{\hat{\pi}}_T \geq K_u \} \cap \{\tilde{X}^{\hat{\pi}}_T | K_u \leq x \}] + \mathbb{P}[\{\tilde{X}^{\hat{\pi}}_T| K_l \leq \tilde{X}^{\hat{\pi}}_T < K_u \} \cap \{\tilde{X}^{\hat{\pi}}_T| \tilde{X}^{\hat{\pi}}_T \leq x\}]$ \newline + $\mathbb{P}[\{ \tilde{X}^{\hat{\pi}}_T | \tilde{X}^{\hat{\pi}}_T < K_l \} \cap \{\tilde{X}^{\hat{\pi}}_T| K_l \leq x \} ]$, \newline
since $K_l <K_u$ and $\{\tilde{X}^{\hat{\pi}}_T | \tilde{X}^{\hat{\pi}}_T \geq K_u \} \cup \{ \tilde{X}^{\hat{\pi}}_T | K_l \leq \tilde{X}^{\hat{\pi}}_T  < K_u \} \cup \{\tilde{X}^{\hat{\pi}}_T | \tilde{X}^{\hat{\pi}}_T < K_l\}$ is the union of disjoint sets and has probability 1. \newline
For $K_l \leq x < K_u$ we then  get: \newline
$\mathbb{P}[X^{\hat{\pi}_{l,u}}_T \le x]$ \newline 
 =  $\mathbb{P} [ \{ \tilde{X}^{\hat{\pi}}_T | K_l \leq \tilde{X}^{\hat{\pi}}_T \leq x \}] + \mathbb{P}[\{\tilde{X}^{\hat{\pi}}_T| \tilde{X}^{\hat{\pi}}_T < K_l \}]$ \newline = $\mathbb{P}[\tilde{X}^{\hat{\pi}}_T \leq x \}] - \mathbb{P}[\tilde{X}^{\hat{\pi}}_T < K_l] + \mathbb{P}[\tilde{X}^{\hat{\pi}}_T < K_l]$ \newline = $\mathbb{P}[\tilde{X}^{\hat{\pi}}_T \leq x]$ ,\newline
and it is easy to see that $\mathbb{P}[X^{\hat{\pi}_{l,u}}_T \le x] = 1$ for $x \geq  K_u$  and $\mathbb{P}[X^{\hat{\pi}_{l,u}}_T \le x] = 0$ for $x <  K_u$ . 
\newline In summary this leads to: \newline \newline
$\mathbb{P}[X^{\hat{\pi}_{l,u}}_T \le x]$ =  $\begin{cases}{\mathbb{P}[\tilde{X}^{\hat{\pi}}_T \leq x]}&\text{if $K_l \leq x < K_u $}\\{1}&\text{if x $\geq K_u$} \\{0}&\text{if x < $K_l $}   \end{cases} $
\newline \newline
So, again, we will find probability mass points at the boundaries $K_l$ and $K_u$ and the probability of reaching values outside these limits is zero. In between, the terminal wealth distribution follows the cumulative distribution of the shadow terminal wealth, i.e normal distribution with $\mathbb{E}_0[\tilde{X}^{\hat{\pi}}_T] = \tilde{X}^{\hat{\pi}}_0 e^{rT} + T\dfrac{\theta^2}{\alpha}$ and Var$(\tilde{X}^{\hat{\pi}}_T) = \dfrac{\theta^2}{\alpha^2}T$.
\newline From this follows: 
\begin{proposition}
The shift of quantiles of terminal wealth that results from introducing upper and lower constaints on terminal wealth is given by: \newline
\begin{equation}
\tilde{\mathcal{Q}}_p = \mathcal{Q}_p + (\tilde{X}_0 -X_0) e^{rT} ,
\end{equation} 
where $\mathcal{Q}_p$ is the p-quantile of the distribution of $X^{\hat{\pi}}_T$, $\tilde{\mathcal{Q}}_p$ is the p-quantile of the distribution of $X^{\hat{\pi}_{l,u}}_T$, $X_0$ is the initial wealth, and $\tilde{X_0}$ the initial shadow wealth from Proposition 12. \end{proposition}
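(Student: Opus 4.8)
The plan is to exploit the fact that, on the interior region $[K_l,K_u)$, the constrained terminal wealth already carries the distribution of the shadow wealth, so the whole statement reduces to comparing the quantiles of two normal laws that differ only by a deterministic location shift.

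First I would recall from Proposition 3 that the unconstrained terminal wealth is $X^{\hat{\pi}}_T = X_0 e^{rT} + T\dfrac{\theta^2}{\alpha} + \dfrac{\theta}{\alpha}W_T$, while the shadow terminal wealth solves the same SDE with $X_0$ replaced by $\tilde{X}_0$, i.e. $\tilde{X}^{\hat{\pi}}_T = \tilde{X}_0 e^{rT} + T\dfrac{\theta^2}{\alpha} + \dfrac{\theta}{\alpha}W_T$. Because the solution of the linear wealth equation is affine in the initial condition and both processes are driven by the same $W_T$, subtracting gives the pathwise identity $\tilde{X}^{\hat{\pi}}_T = X^{\hat{\pi}}_T + (\tilde{X}_0 - X_0)e^{rT}$; in particular both random variables are normal with the common variance $\dfrac{\theta^2}{\alpha^2}T$ and means differing by exactly $(\tilde{X}_0 - X_0)e^{rT}$.

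Next I would invoke the general location-shift property of quantiles: if $Y = X + c$ for a deterministic constant $c$, then $\mathbb{P}[Y\le y] = \mathbb{P}[X\le y-c]$, and taking the generalised inverse gives $Q_p(Y) = Q_p(X) + c$ for every $p\in(0,1)$. Applied to the identity of the previous step, this shows that the $p$-quantile of the shadow wealth equals $\mathcal{Q}_p + (\tilde{X}_0 - X_0)e^{rT}$. I would then appeal to the CDF of $X^{\hat{\pi}_{l,u}}_T$ computed immediately above the statement: for $K_l \le x < K_u$ one has $\mathbb{P}[X^{\hat{\pi}_{l,u}}_T \le x] = \mathbb{P}[\tilde{X}^{\hat{\pi}}_T \le x]$. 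Hence, for every probability level $p$ whose quantile lands strictly inside $(K_l,K_u)$, namely for $\mathbb{P}[\tilde{X}^{\hat{\pi}}_T < K_l] < p < 1 - \mathbb{P}[\tilde{X}^{\hat{\pi}}_T \ge K_u]$, the constrained CDF coincides with the shadow CDF at the relevant point, so $\tilde{\mathcal{Q}}_p$ equals the shadow $p$-quantile, and combining the two steps yields $\tilde{\mathcal{Q}}_p = \mathcal{Q}_p + (\tilde{X}_0 - X_0)e^{rT}$, as claimed.

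The main subtlety to handle carefully is the presence of the probability mass points at $K_l$ and $K_u$: the shift formula holds verbatim only for quantiles falling in the open interval $(K_l,K_u)$, so I would state this range restriction explicitly and verify that on it the equality is exact rather than merely approximate. Outside this range the constrained quantile is pinned to $K_l$ or $K_u$ and the shift ceases to apply, so the argument must be phrased as a statement about interior quantiles.
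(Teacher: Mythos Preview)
Your proposal is correct and follows essentially the same route as the paper: both arguments reduce the claim to comparing two normal laws that differ only by the deterministic location shift $(\tilde{X}_0-X_0)e^{rT}$, and both invoke the CDF identity $\mathbb{P}[X^{\hat{\pi}_{l,u}}_T\le x]=\mathbb{P}[\tilde{X}^{\hat{\pi}}_T\le x]$ on $[K_l,K_u)$ derived just before the proposition. The paper carries out the quantile comparison by writing $\mathcal{Q}_p=\Phi^{-1}(p)V+E$ explicitly, whereas you use the general location-shift property $Q_p(X+c)=Q_p(X)+c$; these are equivalent. Your treatment is in fact slightly more careful than the paper's, since you state explicitly that the formula holds only for probability levels whose quantile lands in $(K_l,K_u)$, while the paper's proof writes ``$X^{\hat{\pi}_{l,u}}_T\sim\mathcal{N}(\tilde{E},\tilde{V})$'' somewhat loosely and does not flag the mass points at the boundaries.
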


\begin{proof}
To see the statement, first recall that the original unconstrained terminal wealth distribution has normal distribution. For better readability we introduce the notation \newline $E := \mathbb{E}[X^{\hat{\pi}}_t] = X_0 e^{rT} + T \dfrac{T \theta^2}{\alpha}$ and $V := \dfrac{\theta}{\alpha}\sqrt{T}$. \newline We then have by the definition of the quantiles: \newline
$\mathcal{Q}_p$\newline  = $\inf\{ z| \mathbb{P}[X^{\hat{\pi}}_T \leq z] \geq p\}$ \newline
 = $\inf\{ z | \mathbb{P}[\dfrac{X^{\hat{\pi}}_T -E}{V} \leq\dfrac{z -E}{V}] \geq p\}$ \newline
= $\inf\{ z | \mathbb{P}[Z_T     \leq \dfrac{z -E}{V}] \geq p\}$  with $Z_T :=\dfrac{X^{\hat{\pi}}_T -E}{V} \sim \mathcal{N} $(0,1) \newline
= $\inf\{ z | \Phi (\dfrac{z -E}{V}) \geq p\}$ 
and so  $\Phi(\dfrac{\mathcal{Q}_p -E}{V}) = p$, because $\Phi$ is (right) continuous.
 \newline We then have ${\mathcal{Q}_p}$ =  $\Phi^{-1}(p)V + E$ , where $\Phi^{-1}(p)$ is the p-quantile of the normal distribution. \newline
 Applying the same procedure to  $X^{\hat{\pi}_{l,u}}_T \sim \mathcal{N} (\tilde{E}, \tilde{V})$ with $\tilde{E} = \tilde{X}_0 e^{rT} + T\dfrac{\theta^2}{\alpha}$ and $\tilde{V} = V$ leads to: \newline
$ \tilde{\mathcal{Q}}_p = \Phi^{-1}(p) \tilde{V} + \tilde{E}$ \newline = $\Phi^{-1}(p)V + \tilde{X}_0 e^{rT} + T\dfrac{\theta^2}{\alpha}$ = $\Phi^{-1}(p)V +\tilde{X}_0 e^{rT}  + X_0 e^{rT}+ T\dfrac{\theta^2}{\alpha} - X_0 e^{rT}$ \newline = $\mathcal{Q}_p + (\tilde{X}_0- X_0)e^{rT}$ .
\end{proof}

This result has an intuitive interpretation. For example, if we only consider an upper constraint on terminal wealth, then $\tilde{X}_0 = X_0 + c(0,\tilde{X}_0)$ and so the quantiles are shifted exactly by the future value of the price of the call option at t=0. In other words: the results of this strategy are equivalent to selling a call option (on an optimal shadow process) and putting the received money in a bank account. For the lower constraint, the effect is analogous. In this sense, the constained strategy is somewhat trivial.\newline
Combining this result with Figure 19, where the effect of the upper and lower constraints on the shadow wealth is illustrated, we can assess their impact on the terminal wealth distribution. \newline In particular, giving up significant upside potential by lowering $K_u$ leads to a great positive shift of quantiles. This is also reflected in Table 13, which shows the resulting empirical quantiles for a fixed lower constraint $K_l$ = 0,  a market given by $\sigma$= 0.2, r = 0.01, $\mu$ = 0.03 and varying upper constraints(sample size 400).
For example, setting $K_u$ as low as 1'500 leads to a shadow wealth almost 4 times higher than the original initial wealth. Hence, the quantiles are lifted significantly, by $(3'901-1'000)e^{0.01\cdot 20} = 3'543$. However, since the upper constraint is so low, this can only by seen in $\mathcal{Q}_{0.10} (X_T)$.  Note that $K_u$ = 2'443 is the case where the constraints set each other off in the sense that the distribution of terminal wealth between the boundaries is identical to the unconstrained one (here, this can be seen at $\mathcal{Q}_{0.25}(X_T))$.

\begin{table}[H]
\begin{tabular}{|l| c|c|c|c|c|c|}
\hline
 $K_u$ & $\tilde{X_0}$ & $\mathcal{Q}_{0.10}$($X_T$)& $\mathcal{Q}_{0.25}$($X_T$) &   $\mathcal{Q}_{0.50}$($X_T$) &  $\mathcal{Q}_{0.75}$($X_T$) &  $\mathcal{Q}_{0.95}$($X_T$)   \\ \hline 
 \cellcolor{OldLace}unconstrained &  \cellcolor{OldLace}1'000 &  \cellcolor{OldLace} -2'444 &  \cellcolor{OldLace}164 &  \cellcolor{OldLace}3'067 &  \cellcolor{OldLace}6'158& \cellcolor{OldLace}10'649\\\hline 
$\infty$ &  -1'038.1  &\cellcolor{LightGrey}  0 &\cellcolor{LightGrey}  0&  578 & 3'669 &  8'160 \\
 \cellcolor{LightGrey}  4'000 &-288.7 &\cellcolor{LightGrey} 0 &\cellcolor{LightGrey}  0 &  1'493 & \cellcolor{LightGrey} 4'000 & \cellcolor{LightGrey} 4'000 \\
\cellcolor{LightGrey} 2'443 &  999.7 &\cellcolor{LightGrey} 0  &\cellcolor{OldLace} 164 & \cellcolor{LightGrey}2'443 & \cellcolor{LightGrey}2'443 & \cellcolor{LightGrey}2'443  \\
\cellcolor{LightGrey}  1500 & 3'901.0  & 1'100 &\cellcolor{LightGrey} 1'500&\cellcolor{LightGrey} 1'500  &\cellcolor{LightGrey}1'500 &\cellcolor{LightGrey}1'500  \\ \hline \end{tabular} 
\caption{Quantiles of $X^{\hat{\pi}^{l,u}}_T$ (different $K_u$ )} \end{table}

\newpage

\section{Conclusion}
When implementing an optimal strategy using exponential utility, special attention should be accorded to risk aversion. First, in practice, assuming constant risk aversion seems to be problematic. Second, the strategy is very sensitive to the risk aversion parameter, so the latter should be fitted carefully. 
A characteristic of this optimal unconstrained strategy seems to be the deterministic amount that is invested in the stock. As a consequence, it is particularly attractive for a small investors with lower initial wealth around 4'000, but has no noticeable effect on higher wealth (around 10'000). Also, the terminal wealth is normally distributed, which could be an advantage, because it is a well-known concept, but it can also lead to negative retirement wealth. \newline
Introducing constraints on the retirement wealth can be an effective tool to control downside risk and improve the return between the limits, but their impact depends on the choice of their values. A high lower constraint will reduce the investment close to zero and a small lower constraint might be of little meaning. Besides this, the lower constrained optimal strategy seems to be suitable for implementation. Concerning the upper constraint, it might be easier to set an (individual) value, since also larger values show a relevant increase of quantiles. However, this strategy involves short-selling large amounts of stocks and, at least in practice, seems to be very sensitive. This effect can be reduced by setting a lower constraint, which is one of the reasons why it might be best to use the constraints in combination. Another could be the possibility of financing the upper by the lower constraint, yielding the same distribution as the optimal unconstrained strategy between the boundaries.\newline
Restricting the investment to 100$\%$ of wealth does avoid negative terminal wealth, but changes its distribution towards a log-normal, i.e. lower quantiles decrease. In addition, retirement wealth can surpass the upper and lower constraints. The extent of the effect of this restriction depends on many factors and can vary between no effect and full effect even for realistic scenarios.
\newline
As this is a first approach to the optimal strategy under constraints, the model was chosen to be rather simple. On one hand, the simulation of the market with constant volatility and expected return and only one risky asset might not reflect the complexity of  real markets in an accurate way. On the other hand, the investment process does not take into consideration additional requirements by the investor such as a saving process, value loss by inflation or transaction costs. Especially the latter could have a considerable impact and should therefore be taken into account.


\section{Outlook}
A basis for future research could be the consideration of other utility functions, that are more realistic, and compare the results. Since the effect of constraints on the terminal wealth distribution could be quantified in this thesis and an equivalent result is known for power utility, a direct comparison between these strategies could be a start. \newline
Further, it might be a difficult task for investors to estimate their future needs for retirement and fix a constraint at the beginning of a (rather long) investment horizon. Because the developed strategy involves trading with options, it might offer enough flexibility to modify the constraints at some point within the investment period and it could be interesting to further investigate this possibility. \newline
To further develop the optimal constrained strategy for exponential utility, it also seems necessary to consider the investor's saving process, transaction costs and effects of inflation.\newline

\newpage

\section*{Appendix A}
\addcontentsline{toc}{section}{Appendix A}
\subsection{Derivation of HJB}
The idea is to expand the problem by not only considering a wealth process starting from $t_0= 0$ as defined in (2.3), but for any fixed time $t$ $\in$ [0,T]. The optimal strategy then depends on time $t$ and the wealth at time $t$ and yields the expected utility of terminal wealth  given by  the \textit{optimal value function}
\begin{equation}V:[0,T] \times \mathbb{R}^{+} \rightarrow  \mathbb{R}^{+}, \newline (t,y) \longmapsto V(t,y) := \sup\limits_{\pi\in \mathcal{A}}\{ \mathbb{E}[U(X^{\pi}_T) | X^{\pi}_{t} = y]\}.\end{equation} For simplicity we write $\mathbb{E}[U(X^{\pi}_T) | X^{\pi}_t = y ] =: \mathbb{E}_{t,y}[U(X^{\pi}_T)]$.
\newline
Let $\hat{\pi}$ and $\tilde{\pi}$ be two strategies (called \textit{control laws}) defined on [t,T] $\times \mathbb{R}_{[y, \infty)}$, for a fixed starting point ($t$,y)$\in$ [0,T] $\times\mathbb{R}^{+}$, such that $\hat{\pi}$ is the optimal control law and $\tilde{\pi}$ is a control law that switches to the optimal control law after a short period of time h: 
\newline

$\tilde{\pi}(s,y)$= $\begin{cases}{\hat{\pi}(s)}&\text{if t $\in$ [t + h,T]}\\{\pi(s)}&\text{if s  $\in$ [t,t+h)}\end{cases}$ \newline
for a fixed arbitrary control law $\pi \in \mathcal{A}$ and $h>0$ such that $t+h$ < T. \newline  
The \textit{value function} is defined as follows:  \newline \begin{equation}
 J: [0,T]\times\mathbb{R}^{+} \times \mathcal{A}\rightarrow \mathbb{R}^{+},\ (t, y,\pi) \longmapsto J(t,y,\pi) = \mathbb{E}_{t,y}[U(X^{\pi}_T)]
\end{equation}
Of course, for the optimal strategy the value function is identical with the optimal value function. Comparing the two strategies, it is clear that the value function of the optimal strategy should by definition be larger or equal to the value function of any other strategy, in particular $V(t,y)\geq J(t, y,\tilde{\pi})$ $\forall$ (t,y) $\in [0,T]\times\mathbb{R}^{+}.$\newline If  $\tilde{\pi}$ takes $X^{\pi}_{t}$ at time $t$ to $X^{\pi} _{t+h}$ at time t+h, the expected utility at terminal time is $\mathbb{E}[U(X^{\hat{\pi}}_T)|X^{\hat{\pi}}_{t+h} = X^{\pi}_{t+h}]= V(t+h, X^{\pi}_{t+h})$. Since $X^{\pi}_{t+h}$ is stochastic and $X^{\pi}_{t} = y$ is fixed, it follows  $J(t, y, \tilde{\pi})$ = $\mathbb{E}[V(t+h, X^{\pi}_{t+h}) | X^{\pi}_{t} = y ] = \mathbb{E}_{t,y}[V(t+h, X^{\pi}_{t+h})]$. From this results the inequality
 \begin{equation}
 V(t,y) \geq \mathbb{E}_{t,y}[V(t+h, X^{\pi}_{t+h}]
\end{equation}
Using Ito's formula, $V(t+h, X^{\pi}_{t+h})$ can be expanded:\newline $\mathbb{E}[V(t+h, X^{\pi}_{t+h})] = V(t, y) + \newline
\mathbb{E}_{t,y}[ $$\int_{t}^{t+h} \{\dfrac{\partial V}{\partial t}(s,X^{\pi}_s) + 
[(rX^{\pi}_s+ \pi_s(\mu-r)X^{\pi}_s]{\dfrac{\partial V}{\partial x}}(s,X^{\pi}_s) + ( \sigma  \pi_s X^{\pi}_s ) ^2\dfrac{\partial ^2V}{\partial x^2}\} ds$$] + E_{t,y}[\sigma \pi_s X^{\pi}_s dW(s)]$. If we assume enough integrability, the stochastic part vanishes. With equation (2.7) follows\begin{equation} 0 \geq E[\int_{t}^{t+h} \{\dfrac{\partial V}{\partial t}(s,X^{\pi}_s) + [(rX^{\pi}_s+\pi_s(\mu-r)X^{\pi}_s]{\dfrac{\partial V}{\partial x}}(s,X^{\pi}_s)| X^\pi_{t}=y]\end{equation}
Dividing both sides by h, finding the limit for h$\downarrow$ 0 and setting $x = y$ then gives  the \textbf{Hamilton-Jacobi-Bellmann Equation (HJB)} as in Chapter 1. 
\subsection{Solving the linear SDE}
Having set a boundary condition $X(0) = X_0$, we want to solve the differential equation \begin{verse}\begin{center}
$dX^{\hat{\pi}}_t = [rX^{\hat{\pi}}_t + \dfrac{\theta^2}{\alpha} e^{-r(T-t)}] dt + \dfrac{\theta}{\alpha}e^{-r(T-t)} dW_t$ \end{center}\end{verse}
Via the substitutions $dZ_t := rdt$ and $dH_t := \dfrac{\theta^2}{\alpha} e^{-r(T-t)}dt + \dfrac{\theta}{\alpha}e^{-r(T-t)} dW_t $ this transforms to \begin{verse}\begin{center}$dX^{\hat{\pi}}_t = X^{\hat{\pi}}_t dZ_t + dH_t$  and  $X(0) = X_0$ \end{center}\end{verse}
and can be written as  $X^{\hat{\pi}}_t = \bigintsss_{0}^{t} X_s dZs + H_t$. \newline
By Theorem 52 in Chapter 5 of \cite{Protter}, it has a (unique) solution \begin{verse}\begin{center}
$\epsilon _H(Z)_t = \epsilon(Z)_t \{ H_0 +  \bigintsss_{0}^{t} \epsilon(Z)_s^{-1} d( H_s - [H,Z]_s)\}, \ \ \ \ \ \ (\star$)\end{center}\end{verse} where [H,Z] is the quadratic covariation defined in Section 6, Chapter 2. \newline
First, note that$ \bigintsss_{Z_0}^{Z_t} dZ(t) = \bigintsss_{0}^{t}r dt$ , and hence \begin{equation}Z_t = Z_0 + rt. \end{equation}
Then, by Theorem 37 in Chapter 2 of \cite{Protter},  $\epsilon(Z)_t$ has the form $\epsilon(Z)_t = e^{Z_t - [Z,Z]_t /2}$ \newline and since $[Z,Z]_t = 0$, this gives with (6.5) \begin{equation} \epsilon(Z)_t = e^{Z_0 + rt} 
\end{equation}
Further, from the second substitution, we get: \begin{equation} H_t = H_0 + \bigintsss_{0}^{t} \dfrac{\theta^2}{\alpha} e^{-r(T-s)}ds +  \bigintsss_{0}^{t} \dfrac{\theta}{\alpha}e^{-r(T-s)} dW_s. 
\end{equation}Plugging (6.6), (6.5), (6.4) into ($\star$) and having $d[H,Z]_t = 0$ gives a solution  \begin{verse}\begin{center}
$X^{\hat{\pi}}_t$ = $e^{Z_0 + rt} \left\lbrace H_0 + \bigintsss_{0}^{t} e^{-Z_0 -rs} ( \dfrac{\theta^2}{\alpha } e^{-r(T-s)}ds + \dfrac{\theta}{\alpha}e^{-r(T-s)} dW_s) \right\rbrace$ .
\end{center}\end{verse} At t= 0 we have $X_0 = e^{Z_0}(H_0 + 0)$, so we can set $Z_0 = 0$ and $H_0 = X_0$ and get
\begin{verse}\begin{center}
$X^{\hat{\pi}}_t = e^{rt} X_0 +   e^{rt}  e^{-rT}\left\lbrace  \bigintsss_{0}^{t} e^{-rs} (\dfrac{\theta^2}{\alpha }e^{rs} ds + \dfrac{\theta}{\alpha} e^{rs} dW_s) \right\rbrace$  \newline
= $e^{rt} X_0 + e^{t-T} \left\lbrace  \bigintsss_{0}^{t} (\dfrac{\theta^2}{\alpha }ds + \dfrac{\theta}{\alpha}dW_s)\right\rbrace$ \newline
= $e^{rt} X_0 + e^{t-T} (\dfrac{\theta^2}{\alpha }t + \dfrac{\theta}{\alpha}W_t)$,
\end{center}\end{verse}
which is the optimal wealth process from Proposition 3.\newline Note that for a completion of the proof it also needs to be checked that $H_t$ is a semimartingale and $Z_t$ a continuous semimartingale.

\subsection{Expected Utility Theory: From Lotteries to Utility Functions}
Utility theory is an approach to describe the decisions of people in situations when the outcome is not certain and is widely used in financial economics. Since it also is the basis for the results used in this thesis, this section is thought to give a brief outline of its main ideas. However, they can not be reflected in all completeness and rigour. This summary is based on \cite{Noeldeke}, \cite{Machina},\cite{Norstad} and \cite{Kirkwood}, except for the examples.

\begin{wrapfigure}{l}{0.42\textwidth}
\includegraphics[width=49mm]{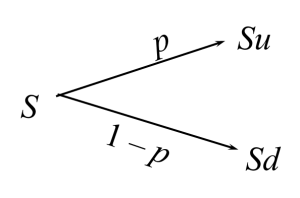}\caption{Example of a Lottery \cite{Lottery}} \end{wrapfigure}
The basic construction of a situation facing uncertainty is called a \textit{lottery}: It has two possible outcomes, each with a probability, and the probabilites add up to one. To stay close to the model of this thesis, this could be the value of a stock moving up to a value $S_u$ with a probability \textit{p} and moving down to a value $S_d$ with a probability 1-\textit{p}.  \newline  Of course, the decision-maker doesn't need to participiate in the lottery -so the question arises, when he would be willing to do so. A key observation here is that the simple expected value of the outcome is not a suitable criterion to describe the decisions of most people. In fact, they prefer a deterministic outcome over an uncertain one to a certain degree, even when the expected value of the lottery is higher than the deterministic alternative. This observation is commonly called \textit{risk aversion}. \newline To assess the decision maker's risk aversion in more detail, it is possible to compare different lotteries. For example, the decision-maker could be in a situation where he can invest in two different stocks with different upward and downward possibilites and different resulting values. He then would prefer one option over the other or be indifferent. These preferences are described by \textit{preference relations} $\succeq$ and are assumed to have some properties to ensure consistency (completeness, transitivity, monotony, continuity). Furthermore it is assumed, that they satisfy the so-called \textit{independence axiom}, which states that if one prefers a lottery A over the lottery B he would also prefer another lottery, that leads to A with a certain probability p (and with  probability 1-p to another lottery C ) over one that leads to the lottery B with p (and also with probability 1-p to another lottery C).  In the previous example, this could mean that if the investor prefers stock A over stock B, he will also prefer a portfolio with one stock A and a stock C over a portfolio with stock B and stock C. \newline 
This axiom implies one main result of utility theory, the \textit{Representation Theorem}, which quantifies preferences between situations of uncertainty. Since the axiom of independence allows to 'continously mix' lotteries, every lottery can be traced down to the comparison with a value, which is the lottery that has a certain output with probability 1. In our example, this would be some value accorded to the investment in a riskfree asset (it is $\textit{not}$ the expected value). The theorem states that the number accorded to any lottery has the form of a linear combination of some function on the outputs of the lottery, weighted by their probabilites. In other words: The preferences of a decision-maker with respect to scenarios with uncertain outcome can be described by the expected value of a function on these outcomes (called $\textit{utility function}$).
\newline  In our example, this would imply, that the decision-maker would invest in the stock, if its expected utility is at least the expected utility of the riskfree asset (invested for the same period): $pU(S_u) + (1-p) U(S_d) \geq \mathbb{E}[U(e^{r})] = U(e^{r})$. 
\subsection{Exponential Utility and Risk Aversion}
One property of utility functions is invariance over affine transformations (called $\textit{cardinality}$), because the same preference relation is invariant over transformations of the lotteries in a similar sense. This is the reason, why instead of using -$e^{-rx}$  to describe exponential utility, often also the form $1-e^{-rx}$ is used. As a result, the values of utility functions cannot be interpretated in an absolute manner. However, they can be used to quantify the 'degree' of risk aversion that an individual exhibits. \newline Arrow and Pratt \cite{Pratt} found, that the more a utility function is concave, the higher the corresponding  $\textit{certainty equivalent}$ is. This is the riskfree alternative outcome that would be considered to be equivalent to the expected utility of a fixed lottery by an investor. For the same risky situation, the more risk-averse an investor is, the smaller is his certainty equivalent. As a very extreme example think of a highly risk-averse person, that, even if negative riskfree interest rates would actually reduce his wealth, he still would prefer this option over the uncertainty of the outcome of a stock with attractive expected return. But for decreasing interest rates, at some point, even he might be willing to invest in the risky asset. This interest would then be considered to be the certainty equivalent to the stock. Since it can be shown that if one utility function is tranformed by a concave function then its certainty equivalent is decreasing, the link from risk aversion to the convexity of the utility function is established. This motivates the definition of the measure for risk aversion ((6) in \cite{Pratt}), called the \textit{coefficient of absolute risk aversion}: 
\begin{verse}

 \begin{center}$\rho(x) := \dfrac{-U''(x)}{U'(x)}$ \end{center}\end{verse}

Clearly, for the exponential utility this gives $\rho (x)$=$\alpha$ , so risk aversion is constant and does not change with  increasing wealth. This property is called \textit{CARA} (constant absolute risk aversion) and is the main reason that exponential utility is often considered as unrealistic (for example, \cite{Guiso} also come to this conclusion looking at empirical data).  

\begin{figure}[H]
 \includegraphics[width=82mm]{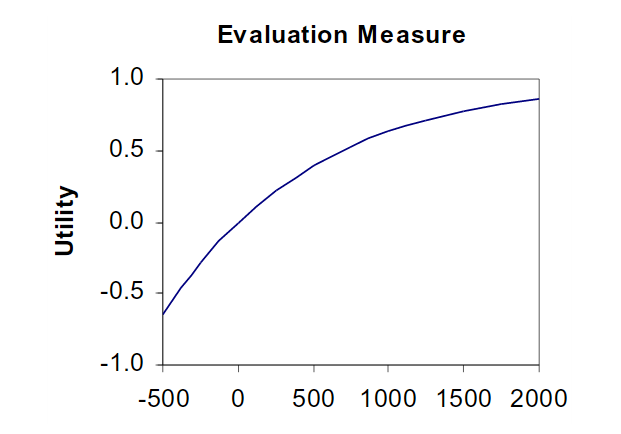} 
\caption{An exponential utility function with parameter $\alpha$=0.0001}
\end{figure}

%
%
%

\subsection{Analytical Approach: Effect of Investment-Restriction}
In order to assess the impact of the restriction on investment in stocks to 100$\%$ of current wealth on a more abstract level, we try to find the probabilities of two 'extreme' scenarios: The first being that the optimal wealth process $X^{\hat{\pi}}_t$ is not affected by the limitation of the investment, hence its terminal wealth distribution is identical with the one from the original strategy. The other case is that the initial wealth is lower than the investment amount required by the optimal strategy $\hat{\pi}$(hence the strategy is constrained by the modification) and the wealth never gets past the limit defined by $\hat{\pi}_t$. It then follows entirely the movements of the stocks and results in a log-normal-distribution.\newline
For the first scenario, we state: 
\begin{result} The probability that Modification 1 (Restriction on Investment) has no effect on the optimal wealth process given by Proposition 2, assuming is given by: 
\begin{equation}\mathbb{P}[X_t > \hat{\pi}_t X^{\hat{\pi}}_t   \forall t \in [0,T-1]] = \prod_{t = 0}^{T-2} \Phi(\dfrac{C_t + \theta}{\sqrt{t+1}})\end{equation} for  $C_t$ := $\dfrac{\alpha}{\theta}X_0 e^{rT}+\theta t -\dfrac{1}{\sigma} $ and assuming $X_0$ > $\hat{{\pi}}_0 X_0$\end{result}
\begin{proof}We want to calculate the probability that $X^{\hat{\pi}}_t$ is always larger than the investment $\hat{\pi}_t X^{\hat{\pi}}_t$ : 
$\mathbb{P}[ X^{\hat{\pi}}_t > \hat{\pi}_t X^{\hat{\pi}}_t \ \forall t \in [0,T-1] ]$ \newline
We iterate over yearly time-steps for simplicity, but for an arbitrary step width h the approach should also work by setting t+h instead of t+1 etc. \newline
$\mathbb{P}[X^{\hat{\pi}}_t > \hat{\pi}_t \ \ X^{\hat{\pi}}_t  \ \forall t \in [0,T-1] ]$
\newline
= $\mathbb{P}[X^{\hat{\pi}}_{T-1} > \hat{\pi}_{T-1} X^{\hat{\pi}}_{T-1} | X^{\hat{\pi}}_{T-2} >\hat{\pi}_{T-2} X^{\hat{\pi}}_{T-2}] \ \mathbb{P}[X_t > \hat{\pi}_t X^{\hat{\pi}}_t\ \forall t \in [0,T-2] ]$ \newline
= $\mathbb{P}[X^{\hat{\pi}}_{T-1} > \hat{\pi}_{T-1} X^{\hat{\pi}}_{T-1} | X^{\hat{\pi}}_{T-2} > \hat{\pi}_{T-2} X^{\hat{\pi}}_{T-2}]\cdot..\cdot\mathbb{P}[X^{\hat{\pi}}_{1} > \hat{\pi}_1 X_1 | X_0 > \hat{\pi}_0X_0 ]$ \newline
 = $\prod_{t=0}^{T-2} \mathbb{P}[X^{\hat{\pi}}_{t+1} > \hat{\pi}_{t+1} X^{\hat{\pi}}_{t+1} | X^{\hat{\pi}}_{t} > \pi_{t} X^{\hat{\pi}}_{t}]$,   ($\star$) \newline assuming $\mathbb{P}[X_0>\pi_0 X_0]$ = 1.\newline
Since $X^{\hat{\pi}}_t$ follows the wealth process given by (2.11), we have: \newline
$\mathbb{P}[X^{\hat{\pi}}_{t+1} > \hat{\pi}_{t+1} X^{\hat{\pi}}_{t+1} | X^{\hat{\pi}}_{t} > \hat{\pi}_{t}X^{\hat{\pi}}_{t}]$  \newline
= $\mathbb{P}[X^{\hat{\pi}}_0 e^{r(t+1)}+(t+1)\dfrac{\theta^2}{\alpha}e^{r(t+1-T)}+\dfrac{\theta}{\alpha}e^{r(t+1-T)}W_{t+1}> \dfrac{\theta}{\alpha \sigma}e^{-r(T-(t+1))} |X_0 e^{rt}+t \dfrac{\theta^2}{\alpha}e^{r(t-T)}+\dfrac{\theta}{\alpha}e^{r(t-T)}W_t > \dfrac{\theta}{\alpha \sigma}e^{-r(T-t)}]$, where  $W_t$ is the standard Brownian movement. 
\newline
For $C_t$ := $\dfrac{\alpha}{\theta}X_0 e^{rT}+\theta t -\dfrac{1}{\sigma} $ and $W_{t+1}-W_t$ := Z$ \sim \mathcal{N}$(0,1), this simplifies to \newline 
$\mathbb{P}[C_t +\theta +W_t+Z > 0|C_t + W_t > 0]$ \newline
= $\mathbb{P}[ \theta + Z > -(W_t + C_t)]$ \newline 
= $\mathbb{P}[\theta + W_{t+1}-W_t > -W_t -C_t]$ \newline
= $\mathbb{P}[W_{t+1} > -C_t - \theta]$ \newline 
= $1-\Phi(-\dfrac{C_t+\theta}{\sqrt{t+1}})$, since $\dfrac{W_{t+1}}{\sqrt{t+1}} \sim \mathcal{N}$(0,1),\newline
= $\Phi(\dfrac{C_t+\theta}{\sqrt{t+1}})$ \newline
Plugging this into ($\star$) gives the result. \end{proof}

By finding the limits, we then get: \begin{result} It holds: 
\begin{equation}\lim_{\theta \to 0}  \mathbb{P}[X^{\pi}_t > \hat{\pi}_t X^{\hat{\pi}}_t \forall t \in [0,T-1] ]  = 1  \end{equation}
\begin{equation}\lim_{\theta \to \infty}  \mathbb{P}[X^{\hat{\pi}}_t > \hat{\pi}_t X_t \forall t \in [0,T-1] ]  = \begin{cases}{1}&\text{if $\mu - r > 1 $}\\{0 }&\text{if $\mu - r < \frac{1}{T-1} $}\end{cases} \end{equation}for the scenario of Result 1 and $\theta = \dfrac{\mu-r}{\sigma}$ the market price of risk.
\end{result}
\begin{proof}
First introduce the notation $\Phi_t := \Phi(\dfrac{\dfrac{\alpha}{\theta}X_0 e^{rT}+\theta (t+1) -\dfrac{1}{\sigma}}{\sqrt{t+1}})$\newline
Now show (6.6): $\theta \downarrow 0$ is the case if  $\mu-r \downarrow  0 $ and/or $\sigma \uparrow \infty$\newline
In both cases , \newline
$ \lim_{\theta \to 0} \dfrac{\dfrac{\alpha}{\theta}X_0 e^{rT}+\theta (t+1) -\dfrac{1}{\sigma}}{\sqrt{t+1}}$ = $\infty$, hence 
$\lim_{\theta \to 0}\Phi_t= 1$ and with (6.5) this gives the result. \newline
For (6.7): $\theta \to \infty$ is the case if $\mu - r \to \infty$ or if $\sigma \downarrow 0$. \newline 
The first case is similar to the proof of 6.6. The second case can be written as \newline
$ \dfrac{ \lim_{\sigma \to 0} \dfrac{\alpha \sigma}{(\mu - r) }X_0 e^{rT}+ \lim_{\sigma \to 0} [\theta (t+1) -\dfrac{1}{\sigma}]}{\sqrt{t+1}}$  \newline = $\dfrac{1}{\sqrt{t+1}} \lim_{\sigma \to 0}\dfrac{\alpha \sigma}{(\mu - r)}X_0 e^{rT}+ \dfrac{1}{\sqrt{t+1}}\lim_{\sigma \to 0} (\dfrac{(\mu-r) (t+1) -1}{\sigma}) $ \newline = $\dfrac{1}{\sqrt{t+1}}\lim_{\sigma \to 0} (\dfrac{(\mu-r) (t+1) -1}{\sigma}) $ 
=$ \begin{cases}{\infty}&\text{if$(\mu - r)(t+1)-1 > 0 $}\\{-\infty}&\text{if $(\mu - r)(t+1)-1 < 0 $}\end{cases}$ \newline
So,  $ \lim_{\sigma \to 0} \Phi_t$ = $\begin{cases}{1 }&\text{if $\mu - r> \dfrac{1}{(t+1)} $}\\{0 }&\text{if $\mu - r < \dfrac{1}{(t+1)} $}\end{cases}$ and therefore \newline
$\prod_{t=0}^{T-2} \Phi_t$ = $\begin{cases}{1}&\text{if $\mu - r > 1 $}\\{0 }&\text{if $\mu - r < \frac{1}{T-1} $}\end{cases} $
\end{proof}

For the second scenario, we get: \begin{result}
The probability, that the optimal wealth process given by Propsition 2 is entirely limited by the Modification 1 (Restriction on Investment) is given by\begin{equation}\mathbb{P}[X_t < \pi_t X_t   \forall t \in [0,T-1]] = \prod_{t = 0}^{T-2} \Phi(\dfrac{\tilde{A}_t}{ \sigma \sqrt{t+1}})\end{equation} for $ A_t$:= ln( $\dfrac{\theta}{\sigma \alpha X_0})-r(T-t-1)+ (\mu-\dfrac{\sigma^2}{2})(t+1)$ and assuming $X_0$ < $\hat{\pi}_0X_0$\end{result}
\begin{proof}
The second extreme case is the scenario, where the wealth $X^{\hat{\pi}}_t$ always stays under the investment amount required by the optimal strategy. We then have:
\newline $\mathbb{P}[X^{\hat{\pi}}_t < \hat{\pi}_t X^{\hat{\pi}}_t \forall t \in [0,T-1]] = \prod_{t=0}^{T-2} \mathbb{P}[X^{\hat{\pi}}_{t+1} < \hat{\pi}_{t+1} X^{\hat{\pi}}_{t+1} | X^{\hat{\pi}}_{t} < \hat{\pi}_{t} X^{\hat{\pi}}_{t}]$ \newline for the same reasons as before.\newline  Since the movement of $X^{\hat{\pi}}_t$ is identical with the movement of the stocks, it follows \newline  $\mathbb{P}[X^{\hat{\pi}}_{t+1} <\hat{\pi}_{t+1} X^{\hat{\pi}}_{t+1}|X_{t} < \hat{\pi}_{t} X^{\hat{\pi}}_t]$ \newline = $ \mathbb{P}[X_0 e^{-(\mu-\dfrac{\sigma^2}{2})(t+1)+\sigma W_{t+1}} < \dfrac{\theta}{\sigma \alpha} e^{-r(T-t-1)}|X_0e^{-(\mu-\dfrac{\sigma^2}{2})t+\sigma W_{t}}<\dfrac{\theta}{\sigma \alpha}e^{-r(T-t)}]$ \newline = $\mathbb{P}[$ln$(X_0)-(\mu-\dfrac{\sigma^2}{2})(t+1)+\sigma W_{t+1}<$ln$(\dfrac{\theta}{\sigma \alpha})-r(T-t-1)|$ln$(X_0)-(\mu-\dfrac{\sigma^2}{2})t+\sigma W_{t}<$ln$(\dfrac{\theta}{\sigma\alpha})-r(T-t)]$\newline = $\mathbb{P}[-(\mu-\dfrac{\sigma^2}{2})+\sigma W_{t+1}<r+A_t|\sigma W_{t} < A_t]$ for $ A_t:= $ln$( \dfrac{\theta}{\sigma \alpha})-r(T-t)-$ln$(X_0)+ (\mu-\dfrac{\sigma^2}{2})t$ \newline
=$\mathbb{P}[0 <A_t -\sigma W_{t}+\sigma W_{t} + (\mu-\dfrac{\sigma^2}{2}) + r- \sigma W_{t+1}| 0< A_t-\sigma W_{t} ]$ \newline = $\mathbb{P}[A_t -\sigma W_{t} > -(\sigma W_{t} + (\mu-\dfrac{\sigma^2}{2}) + r- \sigma W_{t+1})]$ \newline =  $\mathbb{P}[ - \sigma W_{t+1} > -(\mu-\dfrac{\sigma^2}{2}) - r - A_t]$ \newline =  $\mathbb{P}[W_{t+1} < \dfrac{\tilde{A}_t}{\sigma} ]$ \ for $\tilde{A}_t = A_t + r +(\mu-\dfrac{\sigma^2}{2})$ \newline =$ \Phi(\dfrac{\tilde{A}_t}{ \sigma \sqrt{t+1}})$ \end{proof} 
Taking the limits leads to: \begin{result} It holds
\begin{equation}\lim_{\theta \to 0}  \mathbb{P}[X_t < \pi_t X_t \forall t \in [0,T-1] ]  = 0  \end{equation}
\begin{equation}\lim_{\theta \to \infty}  \mathbb{P}[X_t < \pi_t X_t \forall t \in [0,T-1] ]  = 1  \end{equation}for the scenario of Result 1 and $\theta = \dfrac{\mu-r}{\sigma}$ the market price of risk
\end{result}
\begin{proof}
Note that $\lim_{\mu -r \to 0 or \sigma \to  \infty } ln(\dfrac{\mu -r}{\sigma ^2 \alpha}) = \lim_{x \to 0  } ln(x) = - \infty$  , \newline so $ \lim_{\theta \to 0} \tilde{A_t} =  \lim_{\theta \to 0} (  $ln$(\dfrac{\mu -r}{\sigma ^2 \alpha})+ (\mu-\dfrac{\sigma^2}{2})(t+1) ) -r(T-t)-$ln$(X_0) + r= - \infty$ \newline 
 In the case $ \sigma \to \infty$, we can apply Hôpital and get \newline $\lim_{\sigma \to \infty} \dfrac{\tilde{A_t}}{\sigma \sqrt{t+1}}=  \lim_{\sigma \to \infty} \dfrac{ \dfrac{-2}{\sigma} -\sigma(t+1)}{\sqrt{t+1}} = - \infty$ \newline
and therefore, $ \lim_{\theta \to 0}  \Phi(\dfrac{\tilde{A}_t}{\sqrt{t+1} \sigma}) = 0$.
\newline On the other hand, we have $\theta \to \infty$ if $\mu - r \to \infty$ or $\sigma \to 0$. Both can be easily calculated and gives the result (2.23).
\end{proof} 
The Results 2 and 4 seem to contradict the empirical observations, where we saw an increasing impact of the restriction on investment in stocks with increasing volatiliy and decreasing $\mu$ -r. However, this could also indicate that limits might not be a suitable concept to assess these effects. For example, if $\theta$ $\to$ 0, the investment in Result 4, $\hat{\pi}_t$, goes to zero and hence also the initial wealth, because of the assumption $\mathbb{P}[ X_0 < \hat{\pi}_0 X_0]$ = 1. But this case is exluded in (2.3). For $\sigma$ $\to$ 0, the limits seem to make sense: In the limit, the wealth process is deterministic, so it only depends on the initial conditions. Since we assumed $\mathbb{P}[X_0 > \hat{\pi}_0 X_0$] = 1  for Result 2 and  $\mathbb{P}[ X_0 < \hat{\pi}_0 X_0$] = 1  for Result 4, the limits seem reasonable. In particular, we see that the gap $\mu$-r needs to be sufficiently large. Otherwise, the deterministic amount to be invested in stocks grows at a faster rate than $X^{\pi}_t$  and surpasses it at some point.\newline  \newline
Also note, that the two scenarios do generally not converge at the same rate. For example, if we set $X_0$= $X_0 \pi_0$ to be able to make a comparison, we get for a 'realistic' setting ($\alpha$ = 0.001, r = 0.01, $\mu$ = 0.04, $\sigma$ = 0.1, T = 20) the probability of the first scenario to be 1.54 $\%$, whereas for the second scenario it is 10.37 $\%$.  This indicates that once the wealth falls under the optimal investment strategy and follows the modified strategy, it is more likely to stick with it and stay under the optimal investment curve. This is consistent with the observations we made, in particular the resulting distribution of terminal wealth being more concentrated at lower quantiles.  

\subsection{Quantiles of Optimal Terminal Wealth Distribution}
For a more detailed description of the CDF of $X^{\hat{\pi}}_T$, some empirical quantiles are listed below. To get these, a sample of 3000 paths and a stepwidth of 0.1 (for iteration of time of investment t) was used. \newline For comparison, the theoretical CDF of $X^{\hat{\pi}}_T$ $\sim$ $\mathcal{N}$( $X_0 e^{rT} + T\dfrac{\theta^2}{\alpha}$, $\dfrac{\theta^2}{\alpha^2}T$) is listed as well. Note, that these values also only depend on $X_0 \alpha$ (for the values below, $\alpha$ is set to be 0.001). The other parameters were fixed at r= 0.01 $\mu$= 0.03, $\sigma$= 0.1, T =20, $\alpha$= 0.001.
\begin{table}[H]
 \begin{minipage}{0.43\linewidth} \centering
 terminal wealth quantiles
\setlength{\tabcolsep}{0.8mm}
\renewcommand{\arraystretch}{1.2}
\begin{tabular}{|l|c|c|c|r|}
\hline
$X_0$&25$\%$&50$\%$&75$\%$&95$\%$\\ \hline 
10&182&818&1'404 & 2'300  \\
\rowcolor{LightMintGreen} 
$10^2$& 291 & 928 & 1'514 & 2'410\\
\rowcolor{LightMintGreen} 
 $10^3$ &1'403 & 1'990 & 2'612 & 3'490\\ 
\rowcolor{Lavender} 
$10^4$& 12'426 & 13'018 & 13'618 & 14'547\\
\rowcolor{Lavender} 
$10^5$ &122'297 & 122'934 & 123'520 & 124'416\\ \hline
 \end{tabular}  
 \end{minipage}
  \begin{minipage}{0.40\linewidth} \centering
return on initial wealth quantiles
\setlength{\tabcolsep}{0.8mm}
\renewcommand{\arraystretch}{1.2} 
\begin{tabular}{|c|c|c|r|}
\hline
  25$\%$& 50$\%$&75$\%$&95$\%$\\ \hline 
 1'815$\%$ & 8'179$\%$ & 14'037$\%$&23'001$\%$  \\
\rowcolor{LightMintGreen} 
 291$\%$&928$\%$&1'514$\%$&2'410$\%$\\
\rowcolor{LightMintGreen} 
 140$\%$ & 199$\%$ &261$\%$&349$\%$\\ 
\rowcolor{Lavender} 
124 $\%$ & 130$\%$ & 136$\%$ &145$\%$\\
\rowcolor{Lavender} 
122$\%$ & 123$\%$ & 124$\%$ & 124$\%$\\ \hline
 \end{tabular}  
 \end{minipage}
\newline
\newline
\newline
  \begin{minipage}{0.36\linewidth} \centering
theoretical terminal wealth quantiles
\setlength{\tabcolsep}{0.8mm}
\renewcommand{\arraystretch}{1.2}
\begin{tabular}{|l|c|c|c|r|}
\hline
 25$\%$ & 50$\%$ & 75$\%$ & 95$\%$ \\ \hline 
209 & 812 & 1'415 & 2'283 \\
\rowcolor{LightMintGreen} 
319 & 922 & 1'525  & 2'393\\
\rowcolor{LightMintGreen} 
1'418& 2'021& 2'625& 3'493\\ 
\rowcolor{Lavender} 
12'411 & 13'014 & 13'617 & 14'485\\
\rowcolor{Lavender} 
122'337&122'940&123'544&124'411\\ \hline
 \end{tabular}  
 \end{minipage}\begin{minipage}{0.5\linewidth} \centering
p.a. return on initial wealth quantiles (continously compounded)
\setlength{\tabcolsep}{0.8mm}
\renewcommand{\arraystretch}{1.2}
\begin{tabular}{|l|c|c|c|r|}
\hline
 25$\%$ & 50$\%$ & 75$\%$ & 95$\%$ \\ \hline 
14$\%$ & 22$\%$ & 25 $\%$& 27$\%$ \\
\rowcolor{LightMintGreen} 
5$\%$& 11$\%$ & 14$\%$  &16$\%$\\
\rowcolor{LightMintGreen} 
2$\%$& 3$\%$& 5$\%$&6$\%$\\ 
\rowcolor{Lavender} 
1$\%$ & 1$\%$ & 2$\%$& 2$\%$\\
\rowcolor{Lavender} 
1$\%$ &1$\%$ &1$\%$ &1$\%$ \\ \hline
 \end{tabular}  
 \end{minipage}\caption {Quantiles of $X^{\hat{\pi}}_T$ (varying $X_0$)} \end{table} 
 
The empiricial distributions of optimal terminal wealth for different inital wealth are illustrated below.
 \begin{figure}[H]
 \begin{minipage}{1\linewidth}
 \includegraphics[width=50mm]{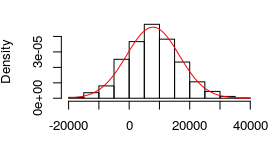}  \includegraphics[width=50mm]{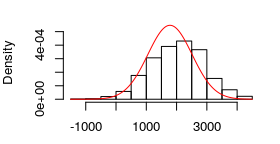}\includegraphics[width=50mm]{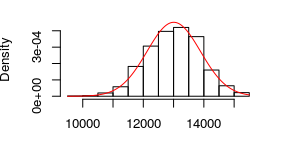} 
 
\caption{Histogram of terminal wealth for $X_0$ = 10; 1000; 10'000}
\end{minipage}
\end{figure} 
These are the quantiles for different volatilites (reffered to in Section 2.3.3):
\begin{table}[H]
\begin{center}   
\begin{tabular}{|l|c|c |c|c|c| }
\hline

$\mu$-r & $X_0$ & $\mathcal{Q}_{0.25}$($X_0$) & $\mathcal{Q}_{0.50}$($X_0$) &  $\mathcal{Q}_{0.75}$($X_0$) & $\mathcal{Q}_{0.95}$($X_0$)   \\ \hline 
5$\%$&256  &94$\%$&248$\%$&392$\%$&599$\%$  \\
 
3$\%$&154&45$\%$&199$\%$&343$\%$&549$\%$  \\
 
 1 $\%$&51&-3$\%$&150$\%$&294$\%$&500$\%$ \\ \hline

 \end{tabular} \subcaption{ $\sigma$ = 0.4}  \end{center}

\begin{center}   
\begin{tabular}{|l|c|c |c|c|c| }
\hline

$\mu$-r& $X_0$ & $\mathcal{Q}_{0.25}$($X_0$) &   $\mathcal{Q}_{0.50}$($X_0$) &  $\mathcal{Q}_{0.75}$($X_0$) &  $\mathcal{Q}_{0.95}$($X_0$)   \\ \hline 
5$\%$&4'094&211$\%$&245$\%$&285$\%$&340$\%$  \\
 
3$\%$&2'456&159$\%$&197$\%$&232$\%$&284$\%$  \\
 
 1 $\%$&819&110$\%$&148$\%$&183$\%$&235$\%$ \\ \hline

 \end{tabular} \subcaption{ $\sigma$ = 0.1} \end{center}
\caption{Quantiles of $X^{\hat{\pi}}_T$ as total return  (and varying $\mu$-r) }
\end{table}   
\begin{table}[H]
\begin{center}   
\begin{tabular}{|l|c|c|c |c|c|c| }
\hline

$\mu$& \textit{r} & Strategy &$\mathcal{Q}_{0.25}$($X_0$) &   $\mathcal{Q}_{0.50}$($X_0$) &  $\mathcal{Q}_{0.75}$($X_0$) &  $\mathcal{Q}_{0.95}$($X_0$)   \\ \hline 
1$\%$&0$\%$&$\hat{\pi}$& -2$\%$&309$\%$&602$\%$&1'024$\%$  \\
 
1$\%$&0$\%$&$\hat{\pi}_m$&82$\%$&112$\%$&149$\%$&228$\%$  \\\hline
0$\%$&-1$\%$&$\hat{\pi}$&-2$\%$&253$\%$&493$\%$&838$\%$  \\
 
0 $\%$&-1$\%$&$\hat{\pi}_m$&67$\%$&91$\%$&122$\%$&187$\%$ \\ \hline

 \end{tabular} \subcaption{ $\sigma$ = 0.1}  \end{center}
\caption{Quantiles of $X^{\hat{\pi}}_T$ and $X^{\hat{\pi}_m}_T$ as total return  (varying \textit{r}) }
\end{table}   

\subsection{Behaviour of $\hat{\pi}_l$ and $\hat{\pi}_u$}
In order to illustrate the qualitative behaviour of the optimal strategies under either an upper or a lower constraint for terminal wealth in more detail, we will focus on the differences to the optimal strategy given by $-\tilde{\pi}_c c(\tilde{X}^{\hat{\pi}}_t, t)$ and $\tilde{\pi}_p p(\tilde{X}^{\hat{\pi}}_t, t)$. \newline For an example case, we set $K_l$=800, $K_u$=1'250, and market conditions $r$ = 0.01,\newline $\sigma$ = 0.1, $\mu$=0.03. \newline First, we are interested in the behaviour of these strategies over time. We therefore fix $\tilde{X}^{\hat{\pi}}_t$ to be 1'000, keeping in mind that this value might not be realistic, since the initial shadow wealth $\tilde{X}_0$ is calculated in function of the initial wealth and can vary greatly. Second, we would expect $\tilde{X}^{\hat{\pi}}_t$ to change over time. Since it is stochastic, of course, the value is not known, but we would expect it to grow (it follows an optimal wealth process, see Proposition 3).  
\begin{figure} [H] 
\begin{minipage}{0.33\linewidth} \includegraphics[width=49.5mm]{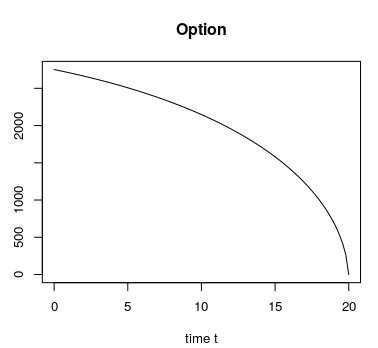} \end{minipage} \begin{minipage}{0.325\linewidth} \includegraphics[width=47mm]{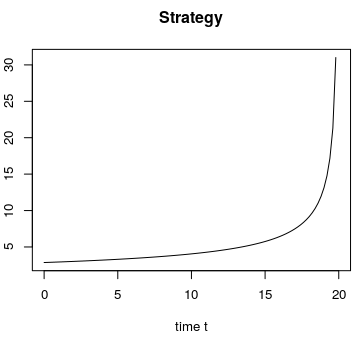}\end{minipage} \begin{minipage}{0.325\linewidth} \includegraphics[width=49mm]{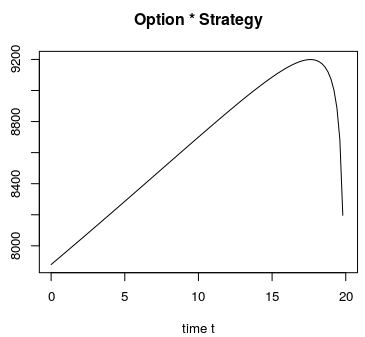}\end{minipage} 
 
\begin{minipage}{0.33\linewidth} \includegraphics[width=49mm]{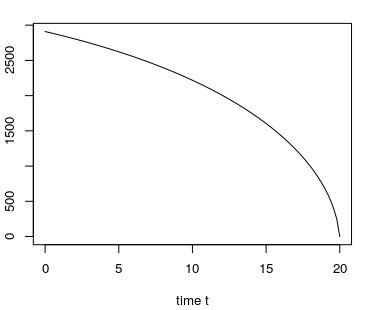} \end{minipage} \begin{minipage}{0.325\linewidth} \includegraphics[width=49mm]{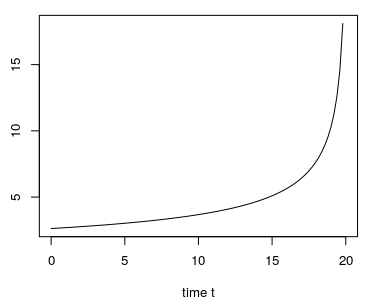}\end{minipage} \begin{minipage}{0.325\linewidth} \includegraphics[width=49mm]{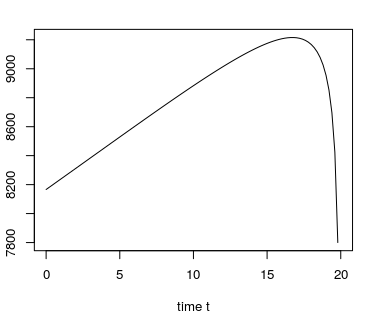}\end{minipage}

\caption{Behaviour of $\tilde{\pi}_p p(\tilde{X}^{\hat{\pi}}_t, t)$ (upper line) and $\tilde{\pi}_c c(\tilde{X}^{\hat{\pi}}_t, t)$ over time }
\end{figure}

Here, we call $p(\tilde{X}^{\hat{\pi}}_t,t)$ the 'Option'and its replicating strategy $\tilde{\pi}_p$ the 'Strategy', and equivalently for the call-option. As one can see in the diagrams on the right and in the formulas, the overall strategy is a product of these two factors. Consequently, for small t, the value of the option is more dominating, and with time getting closer to maturity the strategy gains more influence, resulting in a curve with a peak around $t$=17.\newline Generally, at $t$=$T$, the value of the option is identical to its payoff. In the case considered, $K_l$ < $\tilde{X}^{\hat{\pi}}_t$ < $K_u$ , so the options wouldn't be exercised and their payoff is zero. But it does not need to be like this, as for other strike prices and fixed current shadow wealth, the value of the option can indeed converge to a positive value.\newline
We will now fix the time at $t$=17 and look at the impact of the current shadow wealth $\tilde{X}^{\hat{\pi}}_{17}$.\newline
\begin{figure} [H] 
\centering{\includegraphics[width=130mm]{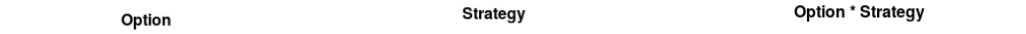}}
\begin{minipage}{0.33\linewidth} \includegraphics[width=48.5mm]{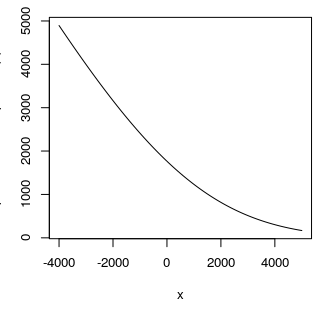} \end{minipage} \begin{minipage}{0.325\linewidth} \includegraphics[width=48.5mm]{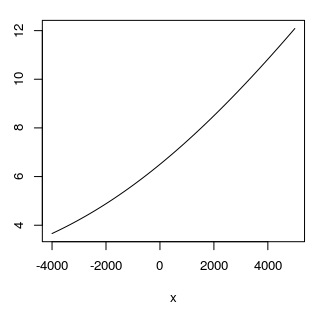}\end{minipage} \begin{minipage}{0.325\linewidth} \includegraphics[width=48.5mm]{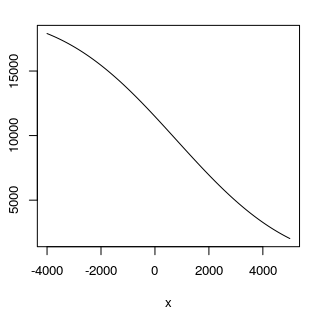}\end{minipage} 
 
\begin{minipage}{0.33\linewidth} \includegraphics[width=48.5mm]{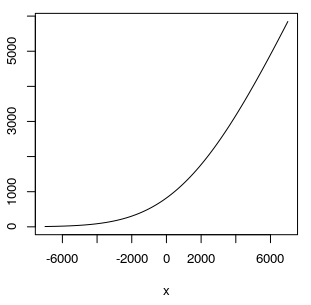} \end{minipage} \begin{minipage}{0.325\linewidth} \includegraphics[width=48.5mm]{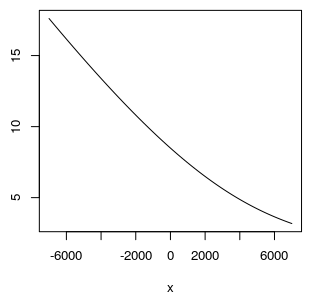}\end{minipage} \begin{minipage}{0.325\linewidth} \includegraphics[width=49mm]{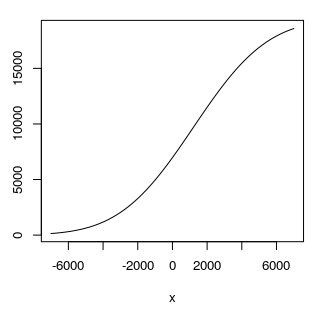}\end{minipage}

\caption{Behaviour of $\tilde{\pi}_p p(\tilde{X}^{\hat{\pi}}_{17}, 17)$ (upper line) and $\tilde{\pi}_c c(\tilde{X}^{\hat{\pi}}_{17}, 17)$ in function of current shadow wealth}
\end{figure}

On the right one can see the behaviour similar to the one shown in Section 3.2 and 4.2. 
It is the result of the 'inverse' behaviour of the replicating strategies and the options setting each other off. The convergences can also easily be derived from the formulas of $\tilde{\pi}_p$ and $\tilde{\pi}_c$. 

\begin{equation} \lim_{\tilde{X}^{\hat{\pi}}_{t} \to \infty} \tilde{\pi}_c (\tilde{X}^{\hat{\pi}}_{t}) = \lim_{\tilde{X}^{\hat{\pi}}_{t} \to - \infty} \tilde{\pi}_p (\tilde{X}^{\hat{\pi}}_{t}) = 0 \mathrm{for a fixed t} \in [0,T].  \end{equation}

\begin{proof}
By definition of $d_u(\tilde{X}^{\hat{\pi}}_{t}) = K_u- \tilde{X}^{\hat{\pi}}_{t} e^{r(T-t)}$ we have $d_u(\tilde{X}^{\hat{\pi}}_{t}) \to -\infty$ if $\tilde{X}^{\hat{\pi}}_{t} \to \infty$.\newline Hence, $\lim_{\tilde{X}^{\hat{\pi}}_{t} \to \infty} \tilde{\pi}_c (\tilde{X}^{\hat{\pi}}_{t}) = \lim_{d_u \to -\infty}  \dfrac{\Phi(-d_u)}{\sigma \sqrt{T-t}(\phi(d_u) -\Phi(-d_u)d_u)}$ = 0, \newline because $\lim_{d_u \to -\infty} \Phi(-d_u) = 1$ and $\lim_{d_u \to -\infty}  \phi(d_u) = 0$. From Section 4.2.1 it is known that  $\tilde{\pi}_c(d)= - \tilde{\pi}_p(d).$ \end{proof}

For the behaviour of the options, it can also easily be seen that \newline $\lim_{\tilde{X}^{\hat{\pi}}_{t} \to -\infty} p(\tilde{X}^{\hat{\pi}}_{t}, t)$  = $\lim_{\tilde{X}^{\hat{\pi}}_{t} \to -\infty} e^{-r(T-t)} \mathbb{E}_{\mathbb{Q}} [\max\{K_l-\tilde{X}^{\hat{\pi}}_T,0\} |\tilde{X}^{\hat{\pi}}_{t}] = \infty $ and \newline  $\lim_{\tilde{X}^{\hat{\pi}}_{t} \to \infty}  e^{-r(T-t)} \mathbb{E}_{\mathbb{Q}} [\max\{K_l-\tilde{X}^{\pi}_T,0\} | \tilde{X}^{\hat{\pi}}_{t}] = 0$.
\subsection{$K_l$-$K_u$-Strategy: Illustration of the impact of $K_l$}
These plots were produced to check how the impact of $K_l$ affects the strategy. Extremly negative investments are reduced for lower $K_l$, as it is mentioned in the text.
\begin{figure}[H] 

 \begin{minipage}{0.58\linewidth} \subcaption{$K_l$= 500}  
\includegraphics[width=75mm]{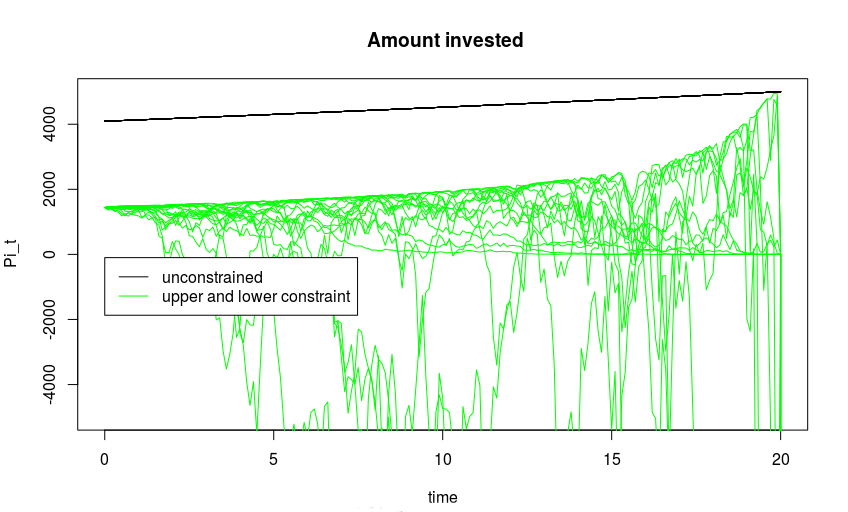} \end{minipage} \begin{minipage}{0.44\linewidth} \includegraphics[width=75mm]{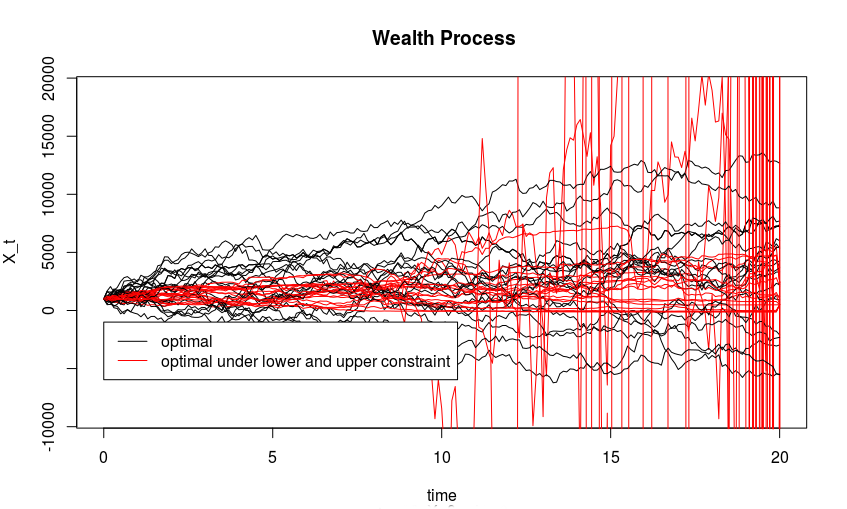}\end{minipage}
 \begin{minipage}{0.58\linewidth} \subcaption{$K_l$= -300} 
 \includegraphics[width=75mm]{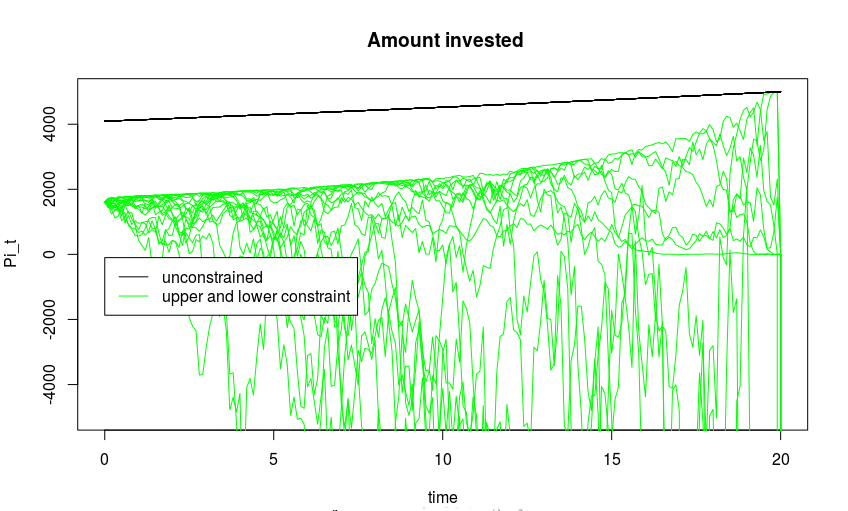} \end{minipage} \begin{minipage}{0.44\linewidth}\includegraphics[width=75mm]{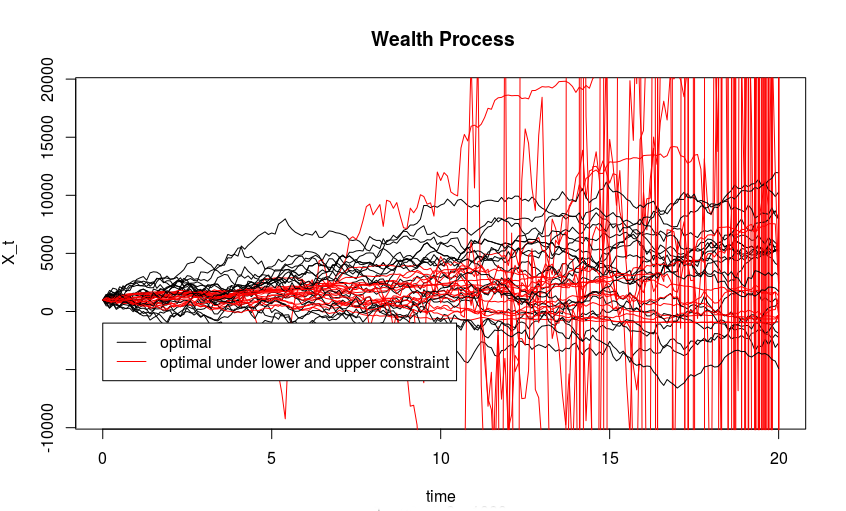} \end{minipage}   \newline
\caption{Amount invested over time and wealth process for $\hat{\pi}_{l,u}$ (different $K_l$)}
\end{figure} 
For an overview impression of the dimensions, find below displayed the performance of the strategies that had been developed so far. 
\begin{figure}[H]{ \begin{center}\includegraphics[width=69mm]{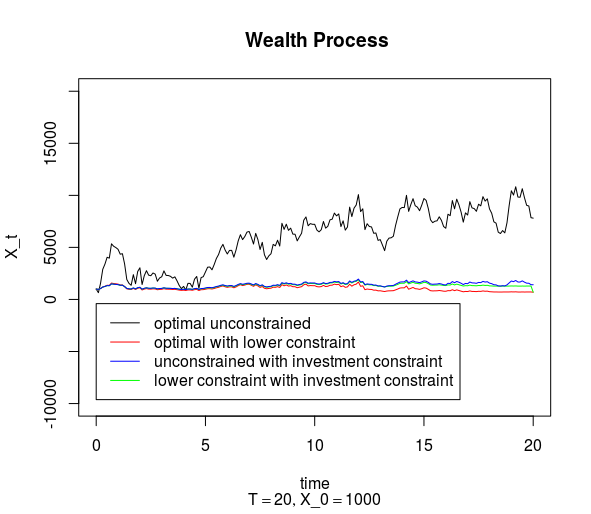} \end{center}}
\end{figure} 
\newpage

\section*{Appendix B}
\addcontentsline{toc}{section}{Appendix B}
\subsection{R-Code for Simulations}
\includegraphics{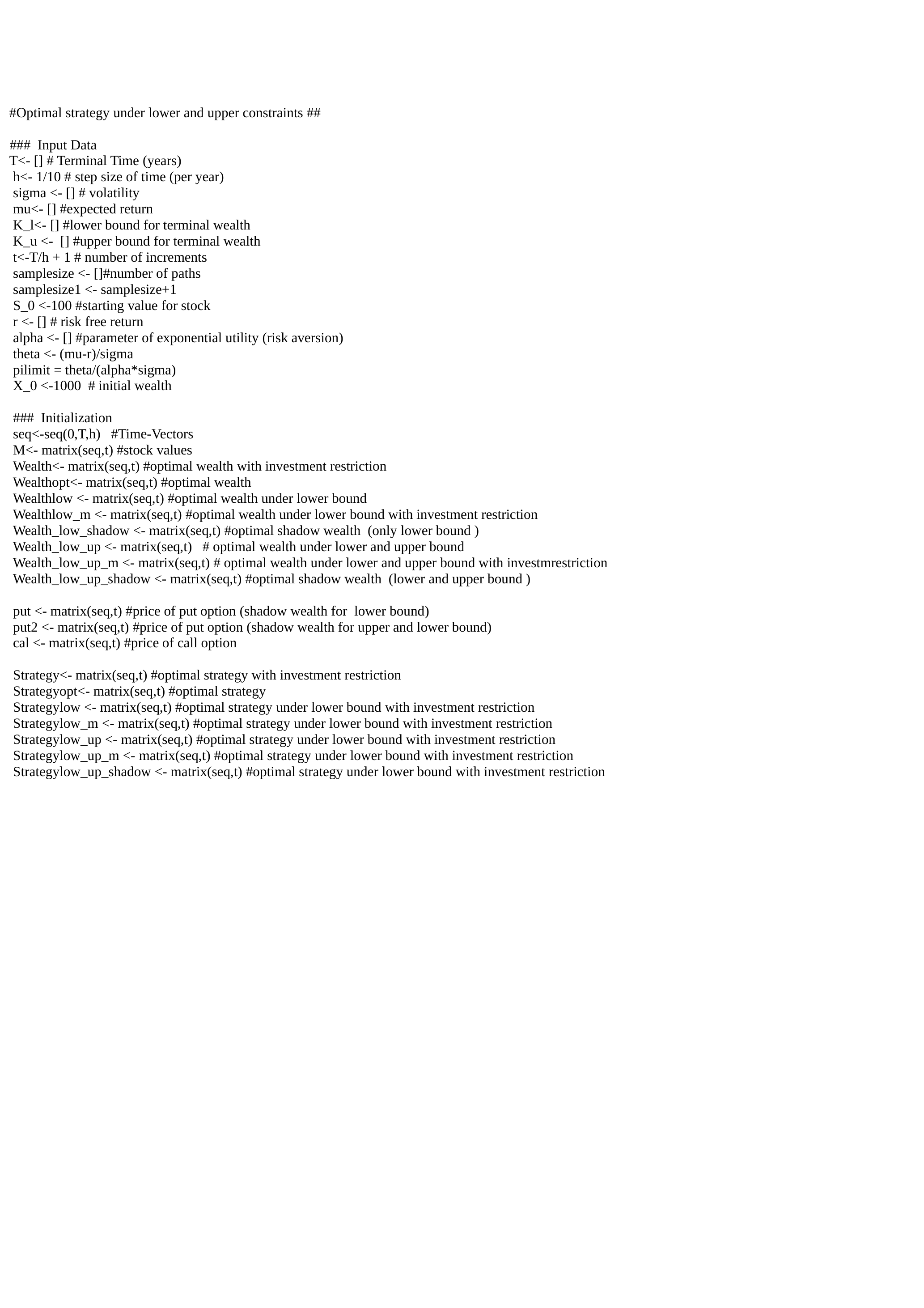}
\includegraphics{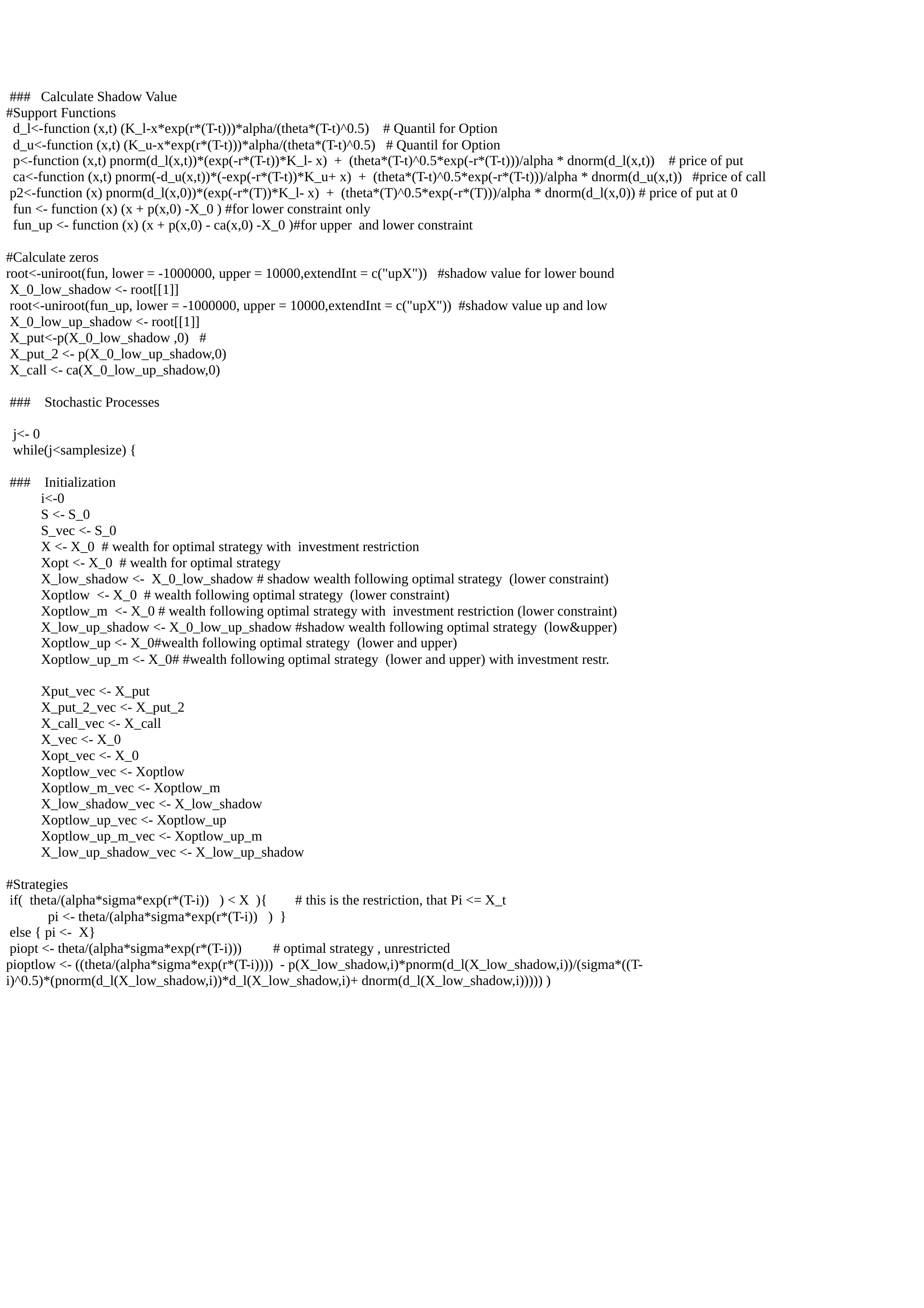}
\includegraphics{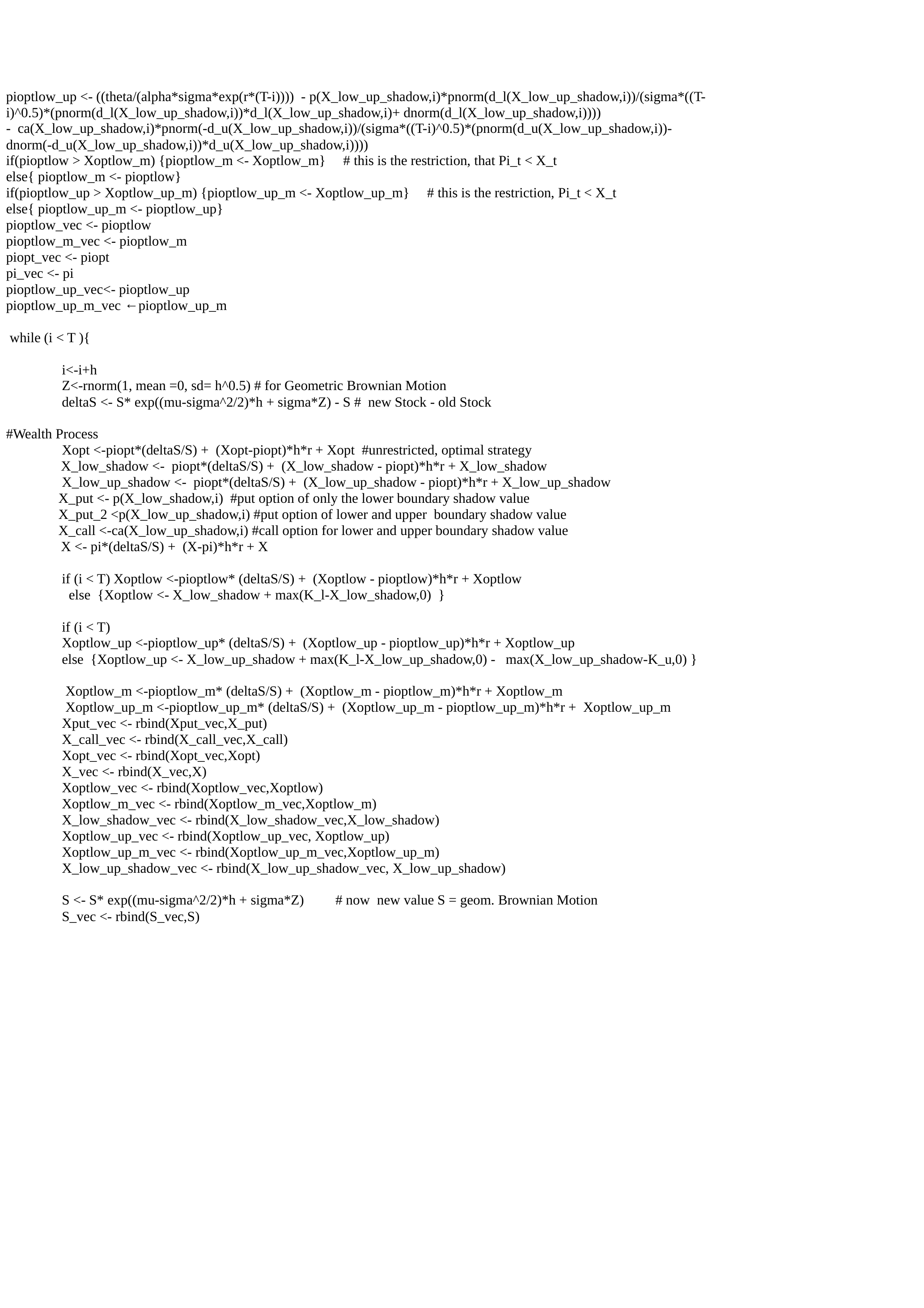}
\includegraphics{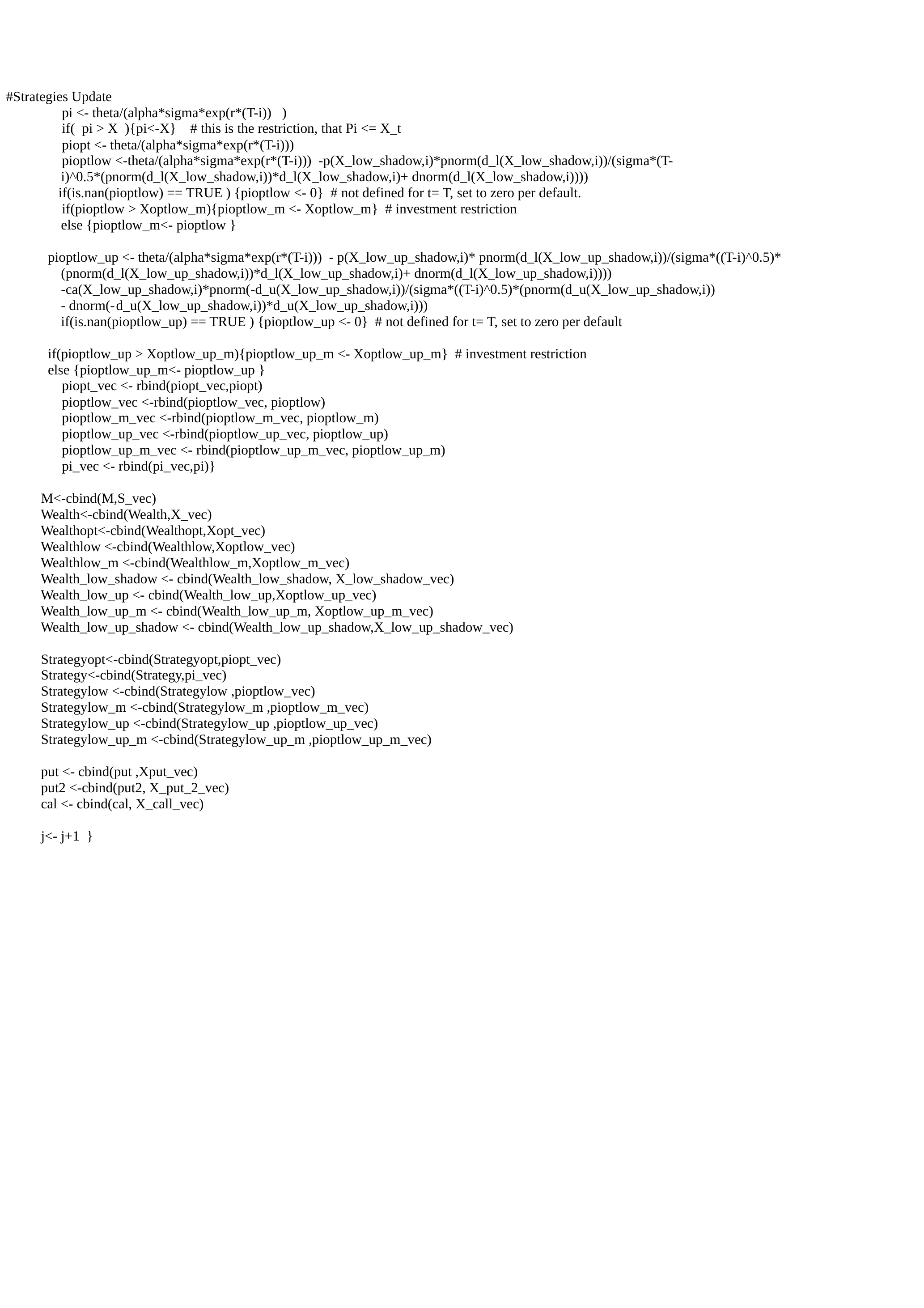}
\includegraphics{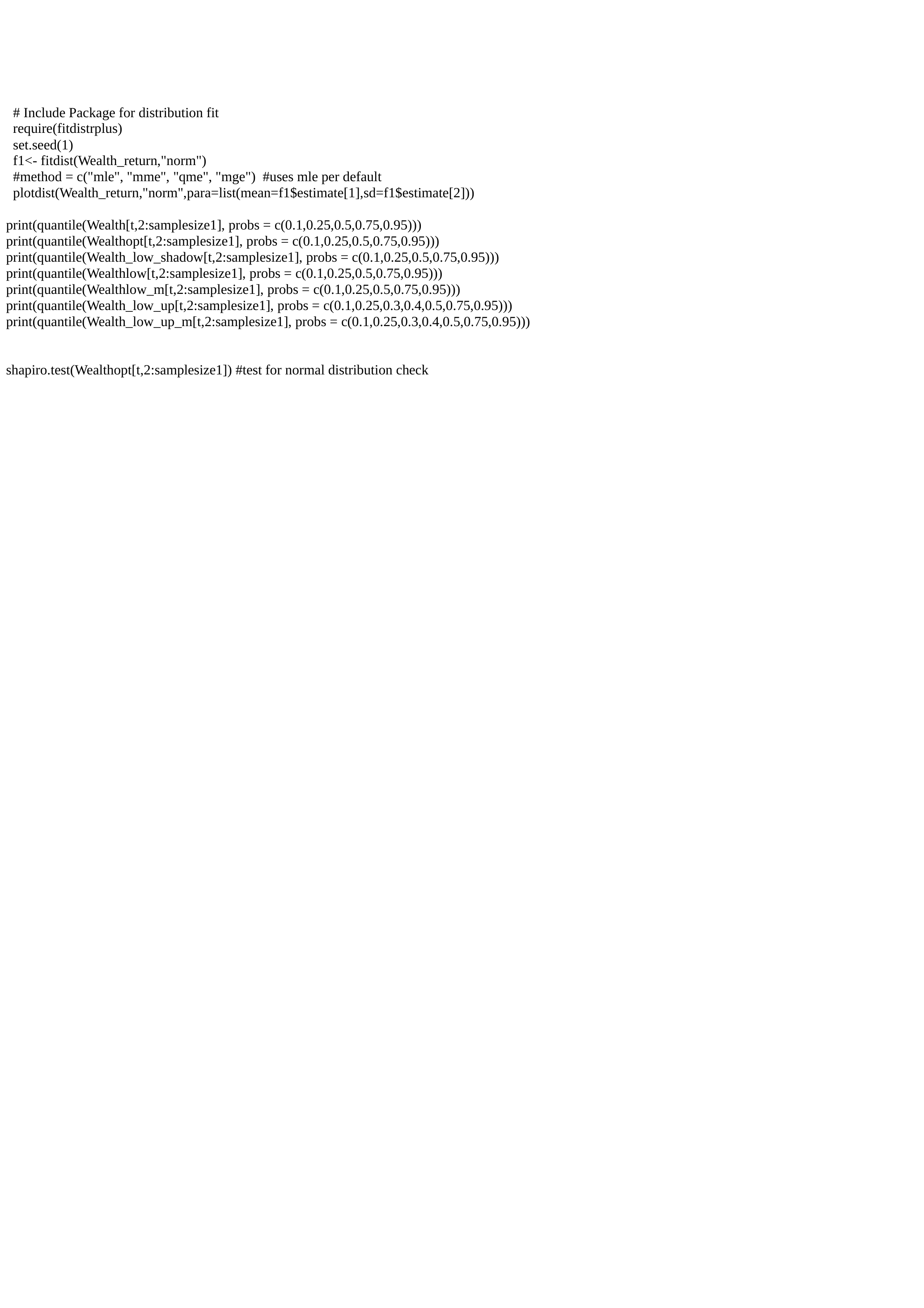}


\begingroup
\addcontentsline{toc}{section}{Bibliography}
\renewcommand*\refname{Bibliography}

\endgroup

\end{document}